\definecolor{HeaderGray}{gray}{0.98}
\definecolor{RowAlt}{gray}{0.97}
\definecolor{Accent}{HTML}{2D6CDF}
\DeclareMathOperator*{\argmin}{arg\,min}
\newtheorem{consequence}[theorem]{Consequence}
\newenvironment{customthm}[1]
  {\innercustomthm}
  {\endinnercustomthm}
\newenvironment{customlem}[1]
  {\innercustomlem}
  {\endinnercustomlem}
\newenvironment{customprop}[1]
  {\innercustomprop}
  {\endinnercustomprop}
\newenvironment{customconsequence}[1]
  {\innercustomconsequence}
  {\endinnercustomconsequence}
\newenvironment{hproof}{%
  \par\medskip\noindent{\bf Idea of the proof\ }%
}{\hfill\BlackBox\\[2mm]}
\newcommand{\indep}{\perp \!\!\! \perp}
\begin{document}

% Required by JMLR: editor
\editor{Jin Tian}

\title{Identifiability of Causal Graphs under Non-Additive Conditionally Parametric Causal Models}

\author{\name Juraj Bodik \email juraj.bodik@unil.ch \\
       \addr HEC Lausanne\\
       University of Lausanne\\
       CH-1015 Lausanne, Switzerland
       \AND
       \name Valérie Chavez-Demoulin \email valerie.chavez@unil.ch \\
       \addr HEC Lausanne\\
       University of Lausanne\\
       CH-1015 Lausanne, Switzerland}

\maketitle

\begin{abstract}
Existing approaches to causal discovery often rely on restrictive modeling assumptions that limit their applicability in real-world settings, particularly when data are heavy-tailed or contain a mixture of discrete and continuous variables. Identifiability of causal graphs has been established under several structural models, including linear non-Gaussian models, post-nonlinear models, and location-scale models. However, these frameworks may not capture the diversity of distributions observed in practice. To address this, we introduce Conditionally Parametric Causal Models (CPCM), a flexible class of models where the conditional distribution of the effect, given its cause, belongs to a known parametric family such as Gaussian, Poisson, Gamma, or Pareto. These models are adaptable to a wide range of practical situations, where the cause influences not only the mean but also the variance or tail behavior of the effect. We demonstrate the identifiability of CPCM by leveraging the concept of sufficient statistics. Furthermore, we propose an algorithm for estimating the causal structure from random samples drawn from CPCM. We evaluate the empirical properties of our methodology on various datasets, demonstrating state-of-the-art performance across multiple benchmarks.
\end{abstract}

\begin{keywords}
causal discovery, structural causal models, identifiability, higher moments, exponential family
\end{keywords}

% ----------------------------------------------------------
% MAIN TEXT
% ----------------------------------------------------------
\pagenumbering{arabic}

\section{Introduction}
\label{introduction}

Understanding causal relations, as opposed to mere statistical associations, allows us to predict the effects of interventions that modify a system \citep{TheBookOfWhy}. Identifying such causal structures is central to many scientific disciplines \citep{Rubin}. Yet different data-generating mechanisms can lead to the same observational distribution, making causal inference inherently challenging. While observing a system under interventions can reliably reveal its causal structure, performing interventions is often expensive, ethically problematic \citep{epidem_application}, or simply unfeasible. This motivates the growing focus on estimating causal structure directly from observational data.

Over recent decades, extensive work has focused on building a rigorous mathematical framework for the “language’’ of causal inference, largely formalized through structural causal models (SCMs) \citep{Pearl_book}. Consider random variables $\mathbf{X} = (X_1, \dots, X_d)^\top \in \mathbb{R}^d$. An SCM with an underlying acyclic graph $\mathcal{G}$ specifies a data-generating process through structural equations
$$
X_i = f_i(\mathbf{X}_{pa_i}, \varepsilon_i), \qquad i = 1, \dots, d,
$$
where $f_i$ are causal (link) functions, $pa_i$ denotes the set of parents (direct causes) of $X_i$ in $\mathcal{G}$, and $\varepsilon_i$ are jointly independent noise variables. The central objective of causal discovery is to recover the causal structure, represented by the graph $\mathcal{G}$. If $\mathcal{G}$ and the conditional distributions are known, the joint distribution of $\mathbf{X}$ follows directly. Here we face the inverse task: given only the distribution of $\mathbf{X}$ (or a sample from it), we aim to infer $\mathcal{G}$. This is generally impossible without imposing additional assumptions on the underlying SCM \citep{Elements_of_Causal_Inference}.

The existing literature in the field presents numerous methods and corresponding results for causal discovery under various assumptions on the SCM \citep{ZhangReview}. When observing multiple environments following different interventions, the assumptions can be significantly less restrictive \citep{Peters_invariance, Multiple_contexts_Mooij, Causal_discovery_IV}. However, if the goal is to uncover causal relationships based solely on an observed random sample, the assumptions become more strict; typically assuming additive noise \citep{Lingam, Peters2014, reviewANMMooij, montagna23a}. This assumption of additivity $X_i = f_i(\textbf{X}_{pa_i}) + \varepsilon_i$ suggests that $\textbf{X}_{pa_i}$ influences only the mean of $X_i$, while the tail, variance, and higher moments remain fixed. This is a strong assumption, as the tail or other characteristics of the random variable can provide different information about the causal structure.

In this paper, we develop a framework where $\textbf{X}_{pa_i}$ can arbitrarily affect the mean, variance, tail, or other characteristics of $X_i$. However, a caution has to be taken because if the model is too general, the causal structure will become unidentifiable, meaning that multiple causal structures could produce the same distribution of $\textbf{X}$.
\begin{example}\label{Gaussian case}
    A useful model that allows the parent variables to influence both the mean and variance of \( X_i \) is given by the structural equation \( X_i = \mu(\textbf{X}_{pa_i}) + \sigma(\textbf{X}_{pa_i})\, \varepsilon_i \), where \( \varepsilon_i\) is Gaussian. Equivalently, the model for the conditional distribution is    \[
    X_i \mid \textbf{X}_{pa_i} \sim \mathcal{N}\big(\mu(\textbf{X}_{pa_i}),\ \sigma^2(\textbf{X}_{pa_i})\big).
    \]
\end{example}
\begin{example}\label{example_Poisson}
In certain applications, it may be reasonable to assume
$$X_i\mid \textbf{X}_{pa_i}\sim Poisson\big(\theta(\textbf{X}_{pa_i})\big),$$where $\theta$ is a function describing the rate of certain phenomena. Such a model is common in applications when $X_i$ represents a number of events occurring in a certain time period. 
\end{example}

We introduce a causal model (we call it the conditionally parametric causal model or CPCM) where the structural equation has the following form: 
\begin{equation}\label{11}\begin{split}
&X_i=f_i(\textbf{X}_{pa_i}, \varepsilon_i) = F^{-1}\big(\varepsilon_i; \theta(\textbf{X}_{pa_i})\big),\,\,\,\,\,\,\varepsilon_i\sim U(0,1), \\&   
\,\,\,\,\,\,\,\,\,\,\,\,\text{ or equivalently } X_i\mid \textbf{X}_{pa_i}\sim F\big(\theta(\textbf{X}_{pa_i})\big), \end{split}
\end{equation}
where $F$ is a known distribution function with a vector of parameters $\theta(\textbf{X}_{pa_i})$. 

\subsection{Setup and notation}
\label{Setup}

We adapt the usual notation of graphical models  (e.g., \citealp{PCalgorithm}). We consider a DAG (directed acyclic graph) $\mathcal{G}=(V,E)$ with a finite set of vertices (nodes) $V=\{1, \dots, d\}$ and a set of directed edges $E$, and write $pa_i(\mathcal{G})$, $ch_i(\mathcal{G})$ and $an_i(\mathcal{G})$ for parents, children and ancestors of the node $i$, respectively. In addition, we say that the node $i\in V$ is a source node if $pa_i(\mathcal{G})=\emptyset$, notation $i\in Source(\mathcal{G})$. Given a random vector $\textbf{X }= (X_i)_{i\in V}$ over some probability space with distribution $F_\textbf{X}$, we identify the vertices $j \in V$ with the variables $X_j$.  We omit the argument $\mathcal{G}$ if evident from the context.

We frequently use the concept of an exponential family, which is a class of probability distributions whose probability density function can be expressed as: 
\begin{equation}\label{Exponential family of distributions}
p(x;\theta) = h_1(x)h_2(\theta)e^{\sum_{i=1}^q\theta_iT_i(x)},
\end{equation}
where $h_1, h_2, T_i$ are measurable functions. We call $T_i$ a \textit{sufficient} statistic, $h_1$ a base measure, and $h_2$ a normalizing function. Note that $T_i$ are only unique up to a linear transformation.   Many well-known distribution families belong to the exponential family, including the Gaussian, Poisson, Binomial, and Gamma distributions. We assume that $q$ is minimal in the sense that we cannot write $p(x;\theta)$ using only $q-1$ parameters; see Appendix~\ref{appendix_exponential_family} that provides more information and detailed description. 

We use capital $F$ for distributions and small $p$ for densities. A random variable $Z$ that is uniformly distributed on $(0,1)$ is denoted as $Z\sim U(0,1)$. Support of a random variable $Z$ is denoted as $supp(Z)$.  We denote a random vector $\textbf{X}_S = \{X_s{:}\,\, s\in S\}$ for $S\subseteq V$.

\subsection{Related work}

Many papers address the problem of the identifiability of the causal structure (for a review, see \cite{ZhangReview}). 
\cite{Lingam} show identifiability for the linear non-Gaussian additive models (LiNGaM), where $X_i=\beta\textbf{X}_{pa_i} +\varepsilon_i$ for non-Gaussian noise variables $\varepsilon_i$. 
\cite{BuhlmannCAM} explore causal additive models (CAM) of the form $X_i = \sum_{j\in pa_i} g_j(X_j) +  \varepsilon_i$ for smooth functions $g_j$.  
\cite{hoyer2009} and \cite{Peters2014} develop a framework for additive noise models (ANM), where $X_i = g(\textbf{X}_{pa_i}) +  \varepsilon_i$. Under certain (not too restrictive) conditions on $g$, the authors show the identifiability of such models  \citep[Corollary 31]{Peters2014} and propose an algorithm estimating $\mathcal{G}$ (for a review on ANM, see \cite{reviewANMMooij}). 
All these frameworks assume that the variance of $X_i\mid \textbf{X}_{pa_i}$ does not depend on $\textbf{X}_{pa_i}$. This is a crucial aspect of the identifiability results.

\cite{Zhang2009} introduce a generalization known as the post-nonlinear model, defined by $
X_i = g_1\big(g_2(\textbf{X}_{pa_i}) +  \varepsilon_i\big),
$
with an invertible link function $g_1$.  
 \cite{ParkPoisson, ParkVariance} reveal identifiability in discrete models in which  $var[X_i\mid \textbf{X}_{pa_i}]$  is a quadratic function of $\mathbb{E}[X_i\mid \textbf{X}_{pa_i}]$. If $X_i\mid \textbf{X}_{pa_i}$ has a Poisson or binomial distribution, such a condition is satisfied.  They also provide an algorithm based on comparing dispersions for estimating a DAG in polynomial time. Other algorithms have also been proposed, with comparable speed and different assumptions on the conditional densities \citep{PolynomialTimeAlgorithmCausalGraphs}. \cite{Galanti, 10.24963/ijcai.2024/907} consider the neural SCM with representation $X_i = g_1\big(g_2(\textbf{X}_{pa_i}), \varepsilon_i\big),$ where $g_1$ and $g_2$ are assumed to be neural networks. 

Recently, location-scale models of the form 
\( X_i = g_1(\textbf{X}_{pa_i}) +  g_2(\textbf{X}_{pa_i})\varepsilon_i \)
have garnered attention. \cite{immer2022identifiability} demonstrated that bivariate non-identifiable location-scale models must satisfy a specific differential equation. \cite{strobl2022identifying} explored the problem of estimating patient-specific root causes in location-scale models. Additionally, \cite{Khemakhem_autoregressive_flows} provided more detailed identifiability results under Gaussian noise $\varepsilon_i$ in the bivariate case using autoregressive flows. \cite{xu2022inferring} investigated a more restricted location-scale model, dividing the range of the predictor variable into a finite set of bins and fitting an additive model in each bin. \cite{klippert2025skewnessrobustcausaldiscoverylocationscale} considered causal discovery in skewed location-scale models. 

Further, several different algorithms for estimating causal graphs have been proposed, working with different assumptions \citep{IGCI, Score-based_causal_learning, Slope,  Natasa_Tagasovska, krali2025causaldiscoveryheavytailedlinear}.  They are often based on Kolmogorov complexity or independence between certain functions in a deterministic scenario. 

Constraint-based methods, like the PC and FCI algorithms \citep{PCalgorithm, FCI}, are considered a gold standard for causal discovery. They utilize sequential independence testing for causal discovery, consistently estimating the Markov equivalence class. While these methods are powerful, they rely heavily on the accuracy of the conditional independence tests, making them sensitive to statistical errors and often resulting in many edges remaining unoriented.

A few authors assume that causal Markov kernels lie in a parametric family of distributions. \cite{JanzingSecondOrderExponentialModels} consider the case in which the density of  $X_i\mid \textbf{X}_{pa_i}$  lies in a second-order exponential family and the variables are a mixture of discrete and continuous random variables. \cite{ParkGHD} concentrate on a specific subclass of model (\ref{11}), where $F$ lies in a discrete family of generalized hypergeometric distributions; that is, the family of random variables in which the mean and variance have a polynomial relationship. To the best of our knowledge, there does not exist any study in the literature, that provides identifiability results in the case in which $F$ lies in a general class of the exponential family. This is the focus of this paper. 

\textbf{The structure of the paper is as follows.} Section \ref{Section2} introduces the main definitions and motivation in a bivariate case. Section \ref{Section_identifiability} presents identifiability results for the causal structure in the bivariate case, and Section \ref{Section4} discusses the multivariate extension. In Section \ref{Section5}, we propose an algorithm for estimating the causal graph under assumption (\ref{11}). Section \ref{simulations_section} contains an extensive simulation study. We provide three appendices: Appendix \ref{Appendix_A} includes formal definition of Exponential family and some omitted technical content; Appendix \ref{Appendix_simulations} details the experiments and Appendix \ref{SectionProofs} contains all proofs.

\section{Bivariate Conditionally Parametric Causal Models}
\label{Section2}

We focus on the bivariate SCM in this section, with multivariate extensions in Section~\ref{Section4}. The following definition describes the restriction on the SCM, assuming \(X_2 \mid X_1\) has the conditional distribution \(F\) with parameters \(\theta(X_1) \in \mathbb{R}^q\) for some \(q \in \mathbb{N}\).

\begin{definition}
We define the bivariate \textbf{conditionally parametric causal model} (bivariate $CPCM(F)$) with graph \(X_1 \to X_2\) by two assignments:
\begin{equation}\label{BCPCM}
X_1 = \varepsilon_1, \,\,\,\,\,\,\,\,\,\,\,\,\,\,\,\,\,X_2 =  F^{-1}\big(\varepsilon_2; \theta(X_1)\big),
\end{equation}
where \(\varepsilon_1 \indep \varepsilon_2\) are noise variables, \(\varepsilon_2\) is uniformly distributed, and \(F^{-1}\) is the quantile function of a distribution function $F$ with \(q\) parameters \(\theta(X_1) = \big(\theta_1(X_1), \dots, \theta_q(X_1)\big)^\top\).

We assume that \(\theta_i\) represent measurable functions, where at least one of the functions $\theta_1, \dots, \theta_q$ is non-constant on the support of \(X_1\). 
\end{definition}

We impose no restrictions on the marginal distribution of the cause. Note that we implicitly assume causal minimality \citep{zhang2010intervention}, as we assume that \(\theta\) is non-constant.

The Gaussian model introduced in Example~\ref{Gaussian case} is equivalent to the $CPCM(F)$ model with \(F\) being the Gaussian distribution and \(\theta(X_1) = (\mu(X_1), \sigma(X_1))^\top\). 

\subsection{\(CPCM(F_1, F_2, \dots, F_k)\) Models}

\subsubsection{Motivation}
\label{motivation_section}
Occam's razor posits that \(F_{\text{effect} \mid \text{cause}}\) should be ``simpler'' than \(F_{\text{cause} \mid \text{effect}}\). In model-based approaches for causal discovery, we define a ``simple distribution'' as one that belongs to a pre-defined class of distributions \(\mathcal{F}\). In ANM \citep{Peters2014}, \(\mathcal{F}\) consists of all distributions that can be expressed as the sum of a function of the cause and a noise term. In CPCM, \(\mathcal{F}\) is a given parametric family of  distributions. If \(\mathcal{F}\) is sufficiently small, we achieve identifiability of the causal graph \(\mathcal{G}\) because both \(F_{\text{effect} \mid \text{cause}}\) and \(F_{\text{cause} \mid \text{effect}}\) cannot lie in \(\mathcal{F}\).

However, the choice of \(\mathcal{F}\) is crucial, especially when dealing with mixtures of discrete and continuous distributions. Suppose we observe data as shown in Figure~\ref{Asymmetrical_picture}. To handle such cases, \(\mathcal{F}\) needs to include both continuous and discrete distributions. If we define \(\mathcal{F}\) as a class of \(CPCM(F)\) with continuous \(F\) (e.g., Gaussian), then \(F_{X_2 \mid X_1}\) can never lie in \(\mathcal{F}\). Conversely, choosing discrete \(F\) leads to \(F_{X_1 \mid X_2} \not\in \mathcal{F}\).

To accommodate a wide range of applications with various conditional distributions, we define \(\mathcal{F}\) as the union of \(CPCM(F_1), \dots, CPCM(F_k)\) models. By selecting \(F_1, \dots, F_k\) as a collection of ``standard simple well-known distributions'' (such as Gaussian, Gamma, Poisson, etc., see Section~\ref{Section5Model_choice}), \(\mathcal{F}\) is composed of ``standard simple (conditional) distributions'' with a wide range of possible supports, forms, and properties. We refer to this as the \(CPCM(F_1, F_2, \dots, F_k)\) model.

\begin{figure}[ht]
\centering
\includegraphics[scale=0.7]{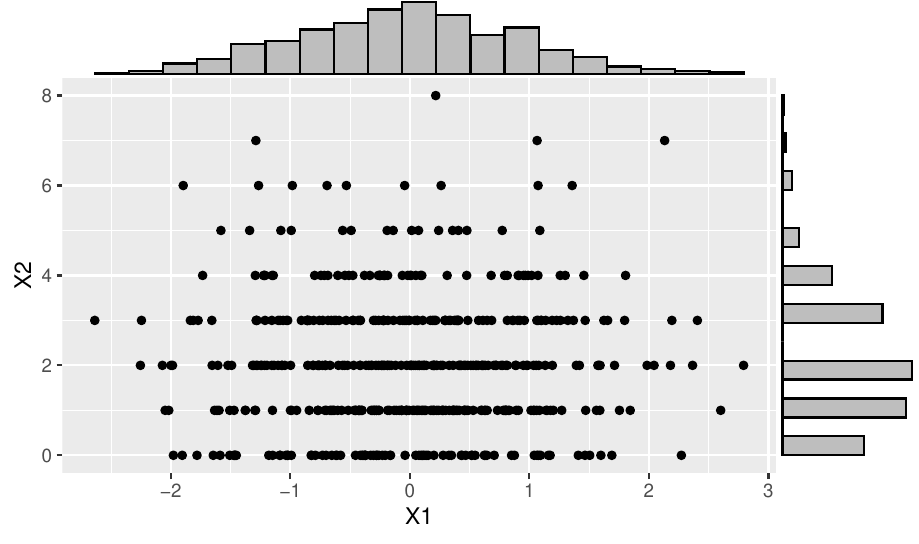}
\caption{A dataset generated as follows: \(X_1 \sim N(0,1)\), \(X_2 \sim \text{Poisson}(|X_1|)\); in other words, $X_1, X_2$ follow \(CPCM(F)\) with a Poisson \(F\) and DAG  \(X_1 \to X_2\).}
\label{Asymmetrical_picture}
\end{figure}

\subsubsection{Definition}\label{Section2.2.2}

\begin{definition}\label{CPCM(F1F2)}
Let \(F_1, \dots, F_k\) be a collection of distribution functions, each parameterized by a \(q_i\)-dimensional parameter, where \(q_i \in \mathbb{N}\) and \(i = 1, \dots, k\). A pair of dependent random variables \((X_1, X_2)\) follows the \(CPCM(F_1, F_2, \dots, F_k)\) model if there exists an \(i \in \{1, \dots, k\}\) such that \((X_1, X_2)\) follows a \(CPCM(F_i)\) model. Specifically, either:
\begin{equation*}
\begin{split}
 &  X_1 = \varepsilon_1, \quad X_2 = F_i^{-1}\big(\varepsilon_2; \theta_2(X_1)\big), \quad \varepsilon_2 \sim U(0,1), \quad \varepsilon_1 \indep \varepsilon_2, \quad \text{or} \\ 
 &  X_2 = \varepsilon_2, \quad X_1 = F_i^{-1}\big(\varepsilon_1; \theta_1(X_2)\big), \quad \varepsilon_1 \sim U(0,1), \quad \varepsilon_1 \indep \varepsilon_2,
\end{split}
\end{equation*}
for some \(i \in \{1, \dots, k\}\), where \(\theta_1(\cdot)\) and \(\theta_2(\cdot)\) are suitable parameter functions of dimension $q_i$, that are measurable and non-constant.
\end{definition}
We can potentially combine other well-known models, such as ANM and discrete QVF models; however, we will focus solely on \(CPCM\) classes for the remainder of this paper. Note the distinction between ANM and CPCM models: the former assumes that \(\theta(X)\) corresponds to the mean, while the latter allows \(\theta(X)\) to represent any distributional characteristic; however, CPCM imposes additional assumptions on the noise.

Extending the definition of \(CPCM(F)\) to allow multiple data-generating mechanisms induces the risk of unidentifiability. If the class \(CPCM(F_1, \dots, F_k)\) is too large, both \(F_{\text{effect} \mid \text{cause}}\) and \(F_{\text{cause} \mid \text{effect}}\) may lie within it. In the following section, we show that this is typically not the case as long as \(F_1, \dots, F_k\) belong to the exponential family of distributions.

\section{Identifiability results}
\label{Section_identifiability}
Identifiability is a prerequisite for causal discovery. We now examine the identifiability of the causal graph: can the true causal structure be inferred from the joint distribution under our CPCM model? 
\begin{definition}[Identifiability]\label{DEFidentifiability}
Let $F_{(X_1, X_2)}$ be a distribution that has been generated according to the $CPCM(F_1,\dots, F_k)$ model with graph $X_1\to X_2$. We say that the causal graph is identifiable from the joint distribution (equivalently, that the model is identifiable) if there does \textit{not} exist 
$\tilde{\theta}$ and a pair of random variables $\tilde{\varepsilon}_2\indep\tilde{\varepsilon}_1$, where $\tilde{\varepsilon}_1$ is uniformly distributed, such that the model $X_2=\tilde{\varepsilon}_2,  X_1=  F_i^{-1}\big(\tilde{\varepsilon}_1;\tilde{\theta}(X_2)\big)$ for some $i\in\{1, \dots, k\}$ generates the same distribution $F_{(X_1,X_2)}$. 
\end{definition}

\subsection{Identifiability in $CPCM(F)$}

First, we discuss the Gaussian case. Recall that in the additive Gaussian model, where $X_2=f(X_1)+\varepsilon_2$, $\varepsilon_2\sim N(0, \sigma^2)$, the identifiability holds if and only if $f$ is non-linear \citep{hoyer2009}. We provide a different result with both mean \textit{and} variance as functions of the cause. A similar result is found in \cite[Theorem 1]{Khemakhem_autoregressive_flows} in the context of autoregressive flows and where only a sufficient condition for identifiability is provided. Another similar problem is studied in \cite{immer2022identifiability} and \cite{strobl2022identifying}, both of which discuss identifiability in general location-scale models. 

\begin{theorem}[Gaussian case]\label{normalidentifiability}
Let $(X_1,X_2)$ admit the $CPCM(F)$ model with graph $X_1\to X_2$, where $F$ is the Gaussian distribution function with parameters $\theta(X_1)=\big(\mu(X_1), \sigma(X_1)\big)^\top$ as in Example \ref{Gaussian case}.

Let $p_{\varepsilon_1}$ be the density of $\varepsilon_1$, which is absolutely continuous with full support $\mathbb{R}$. Let $\mu(x), \sigma(x)$ be two times differentiable.  Then, the causal graph is identifiable from the joint distribution if and only if there do not exist $a,c,d,e, \alpha, \beta\in\mathbb{R}$,  
$a\geq 0,c>0, \beta>0$, such that
\begin{equation}\label{norm}
\frac{1}{\sigma^2(x)}=ax^2 + c, \,\,\,\,\,\,\,\,\,\,\,\,\,\,\,\,\,\,\,\,\, \frac{\mu(x)}{\sigma^2(x)}=d+ex,
\end{equation}
for all $x\in\mathbb{R}$ and
\begin{equation}\label{DensityDEF}
p_{\varepsilon_1}(x) \propto \sigma(x)e^{-\frac{1}{2}\big[ \frac{(x-\alpha)^2}{\beta^2}  - \frac{\mu^2(x)}{\sigma^2(x)}\big]},
\end{equation}
where $\propto$  represents an equality up to a constant (here, $p_{\varepsilon_1} $  is a valid density function if and only if $\frac{1}{\beta^2}>  \frac{e^2}{c}\mathbbm{1}[a=0]$). 
Specifically, if $\sigma(x)$ is constant (case $a=0$), then the causal graph is identifiable, unless $\mu(x)$ is linear and $p_{\varepsilon_1}$ is the Gaussian density. 
\end{theorem}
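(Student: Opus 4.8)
The plan is to prove the theorem by reducing the identifiability question to a comparison of two factorizations of the joint density $p_{(X_1,X_2)}$, one coming from $X_1 \to X_2$ and a hypothetical one coming from $X_2 \to X_1$, and then exploiting the special structure of the Gaussian density as a member of a two-parameter exponential family. Concretely, I would write the joint density as
\begin{equation*}
p_{(X_1,X_2)}(x,y) = p_{\varepsilon_1}(x)\,\frac{1}{\sqrt{2\pi}\,\sigma(x)}\exp\Big(-\frac{(y-\mu(x))^2}{2\sigma^2(x)}\Big),
\end{equation*}
and suppose that the reverse model also holds, i.e.\ $X_1 \mid X_2 = y \sim N(\tilde\mu(y),\tilde\sigma^2(y))$ for some functions $\tilde\mu,\tilde\sigma$, with marginal density $p_{X_2}(y)$ for $X_2$. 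Equating the two expressions for $p_{(X_1,X_2)}$ and taking logarithms gives an identity in $(x,y)$. The natural next step is to collect the terms in $y$: on the forward side the log-density is a quadratic polynomial in $y$ with coefficients depending on $x$, and on the reverse side it is a quadratic in $x$ with coefficients depending on $y$ plus $\log p_{X_2}(y)$. Matching the coefficient of $y^2$ forces $\tilde\sigma^{-2}(y)$ to be affine in… actually, matching the $x$-dependence of the coefficient of $y^2$, which is $1/(2\sigma^2(x))$, against the reverse side (where $y^2$ can only appear through the $x^2$-free part of the reverse quadratic and through $\log p_{X_2}$) shows that $1/\sigma^2(x)$ must be a polynomial in $x$ of degree at most $2$, and symmetry/positivity considerations pin it down to the form $ax^2+c$ with $a\geq 0$, $c>0$. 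This is how equation~(\ref{norm}) arises.

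After that, I would substitute $1/\sigma^2(x) = ax^2 + c$ back into the matched identity and repeat the coefficient-matching for the linear-in-$y$ term, whose $x$-dependent coefficient is $\mu(x)/\sigma^2(x)$; comparing with what the reverse factorization can produce forces $\mu(x)/\sigma^2(x) = d + ex$, giving the second half of~(\ref{norm}). Once both constraints in~(\ref{norm}) hold, the remaining (constant-in-$y$) part of the identity is an equation relating $p_{\varepsilon_1}(x)$, $\sigma(x)$, $\mu(x)$, and $\log p_{X_2}(y)$ evaluated through the reverse model; since the reverse conditional is Gaussian, its contribution to the $x$-only part is $\log p_{X_2}(y)$ integrated out appropriately, and solving for $p_{\varepsilon_1}$ yields precisely~(\ref{DensityDEF}), with $\alpha,\beta$ the parameters of the (Gaussian) reverse marginal $p_{X_2}$. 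The side condition $\tfrac{1}{\beta^2}\neq \tfrac{e^2}{c}\mathbbm{1}[a=0]$ comes from requiring that the exponent in~(\ref{DensityDEF}) be integrable, i.e.\ that the overall quadratic form in $x$ in the exponent be negative definite; when $a=0$ the $x^2$-coefficient is $\tfrac{1}{\beta^2}-\tfrac{e^2}{c}$ and it must be strictly positive. Conversely, I would check that whenever~(\ref{norm}) and~(\ref{DensityDEF}) do admit a solution with valid parameters, the reverse model can indeed be constructed (by reading off $\tilde\mu,\tilde\sigma$ from completing the square), so the causal graph is genuinely non-identifiable — this gives the ``if and only if''.

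The specialization to constant $\sigma$ then follows by setting $a=0$: the first equation in~(\ref{norm}) holds automatically with $c = 1/\sigma^2$, the second says $\mu$ is affine, and~(\ref{DensityDEF}) reduces (since $\sigma$ is constant and $\mu(x)=$ affine makes $\mu^2(x)/\sigma^2(x)$ a quadratic in $x$) to $p_{\varepsilon_1}$ being a Gaussian density — with the non-degeneracy condition ensuring the variance is positive and finite. So non-identifiability in the homoscedastic case happens exactly when $\mu$ is linear and $\varepsilon_1$ is Gaussian, recovering the classical result of \cite{hoyer2009}.

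The main obstacle I anticipate is the bookkeeping in the coefficient-matching step: the reverse factorization does not a priori have any polynomial structure, so I must argue carefully that the constraint of matching a genuine quadratic-in-$y$ (from the forward model) against an arbitrary function coming from $p_{X_2}(y)$ plus a quadratic-in-$x$ term forces $p_{X_2}$ itself to be log-quadratic (hence Gaussian) and the coefficient functions to be low-degree polynomials. The cleanest way to do this is probably to differentiate the log-density identity twice with respect to $y$ to kill $\log p_{X_2}$'s interaction, obtaining that $\partial_y^2$ of the forward log-density (which is $-1/\sigma^2(x)$, independent of $y$) must equal $\partial_y^2$ of the reverse log-density; then differentiate with respect to $x$ and iterate. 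I would also need to handle the full-support assumption on $p_{\varepsilon_1}$ and the two-times-differentiability of $\mu,\sigma$ to justify taking these derivatives and dividing by nonvanishing quantities; these hypotheses are exactly what makes the argument go through without measure-theoretic pathologies. Everything else — completing squares, solving the resulting ODEs, checking integrability — is routine once the polynomial structure has been extracted.
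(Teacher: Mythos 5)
Your strategy is essentially the paper's: equate the two factorizations of $p_{(X_1,X_2)}$, take logarithms, differentiate to strip out the marginal terms and extract polynomial structure in $x$ and $y$, solve for $\sigma$, then $\mu$, then $p_{\varepsilon_1}$, and finish with an explicit construction of the backward model for the converse. Your variant (apply $\partial_y^2$ and read off that $1/\sigma^2(x)$ is a polynomial of degree at most two via a Vandermonde-type argument in $x$) is a clean substitute for the paper's $\partial^4/\partial x^2\partial y^2$ step followed by the nonlinear ODE $\sigma''\sigma-3(\sigma')^2=a\sigma^4$; both routes land in the same place.

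There is, however, one step where your stated justification fails: you claim that ``symmetry/positivity considerations'' reduce $1/\sigma^2(x)=Ax^2+Bx+C$ to the form $ax^2+c$. Positivity does not do this --- $Ax^2+Bx+C$ with $B\neq 0$ and $B^2<4AC$ is strictly positive on all of $\mathbb{R}$ --- and it is not clear what symmetry you would invoke. In fact, after the $\partial_y^2$ step the linear coefficient of $1/\sigma^2(x)$ is tied to $(\tilde\mu/\tilde\sigma^2)''(y)$, so it cannot be dismissed in isolation. The paper kills it by writing the full two-sided identity, isolating the non-additive part, and observing that the $x^2y^2$, $xy^2$ and $x^2y$ cross-terms must each vanish (its display (\ref{eq4})); this forces $a=\tilde a$ and either $a=0$ or the linear terms of both $1/\sigma^2$ and $1/\tilde\sigma^2$ to be zero, and the case $a=0$ is the constant-$\sigma$ case where the linear term is trivially absent. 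Your own coefficient-matching machinery does contain this argument --- matching the $xy^2$ and $x^2y$ coefficients is exactly what is needed --- so the gap is local and fixable, but as written the reduction to $ax^2+c$ is unjustified, and since the absence of a linear term is part of the theorem's statement (\ref{norm}), it cannot be waved away. The remainder (deriving (\ref{DensityDEF}) from the $y$-free part of the identity, the integrability condition $\tfrac{1}{\beta^2}\neq\tfrac{e^2}{c}\mathbbm{1}[a=0]$, the explicit backward construction, and the $a=0$ specialization to Hoyer et al.) matches the paper and is sound.
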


The proof is provided in \hyperref[Proof of normalidentifiability]{Appendix} \ref{Proof of normalidentifiability}. Moreover, a visual example of an unidentifiable Gaussian model with $a=c=d=e=\alpha = \beta=1$ can be found in  \hyperref[Proof of normalidentifiability]{Appendix} \ref{Proof of normalidentifiability}, Figure \ref{GaussianDensity}. 

Theorem \ref{normalidentifiability} indicates that the non-identifiability holds only in the ``special case,'' when $\frac{\mu(x)}{\sigma^2(x)}, \frac{-1}{2\sigma^2(x)}$ are linear and quadratic, respectively. Note that natural parameters of a Gaussian distribution are  $\frac{\mu}{\sigma^2}, \frac{-1}{2\sigma^2}$, and sufficient statistics of the Gaussian distribution have a linear and quadratic form (for the definition of the exponential family, natural parameter and sufficient statistic, see \hyperref[appendix_exponential_family]{Appendix} \ref{appendix_exponential_family}). We show that such connections between non-identifiability and sufficient statistics hold in the more general context of the exponential family.  

\begin{proposition}[General case, one parameter]\label{Necessary condition for identifiability}
Let $q=1$. Let $(X_1, X_2)$ admit the $CPCM(F)$ model with graph $X_1\to X_2$, where $F$ lies in the exponential family of distributions with a sufficient statistic $T$. The causal graph is identifiable if at least one of the following conditions is met:
\begin{enumerate}
\item Supports of $X_1$ and $X_2$ differ. 
    \item There do not exist $a,b\in\mathbb{R}$, such that
 \begin{equation}\label{eq000}
     \theta(x)=a\, T(x)+b,\,\,\,\,\,\,\,\,\,\,\,\,\, \forall x\in supp(X_1). 
 \end{equation} 
 \item There does not exist $c\in\mathbb{R}$, such that 
\begin{equation}\label{eq007}
p_{X_1}(x)\propto \frac{h_1(x)}{h_2[\theta(x)]}e^{cT(x)}, \,\,\,\,\,\,\,\,\,\,\,\,\,\,\,\,\,\,\,\, \forall x\in supp(X_1),
\end{equation}
where $h_1$ is a base measure of $F$ and $h_2$ is the normalizing function of $F$ defined in \hyperref[appendix_exponential_family]{Appendix} \ref{appendix_exponential_family}. 
\end{enumerate}
\end{proposition}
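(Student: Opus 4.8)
The plan is to recast non-identifiability as a single functional equation relating the two candidate models, and then to read off the constraints (\ref{eq000}) and (\ref{eq007}) by evaluating that equation at two suitably chosen arguments in each coordinate. Suppose $(X_1,X_2)$ with graph $X_1\to X_2$ is \emph{not} identifiable. By Definition~\ref{DEFidentifiability} there is a function $\tilde\theta$ and a noise pair such that the backward model $X_2=\tilde\varepsilon_2,\ X_1=F^{-1}(\tilde\varepsilon_1;\tilde\theta(X_2))$ induces the same joint law. Writing the one-parameter exponential-family density of $F$ as $p(y;\theta)=h_1(y)h_2(\theta)e^{\theta T(y)}$ (with $h_1,h_2,T$ as in Appendix~\ref{appendix_exponential_family}), I would equate the two joint densities $p_{X_1}(x)\,p(y;\theta(x))$ and $p_{X_2}(y)\,p(x;\tilde\theta(y))$ — they agree Lebesgue-a.e., since both models admit continuous densities — take logarithms, and collect the terms depending on $x$ alone, on $y$ alone, and the two bilinear cross-terms $\theta(x)T(y)$ and $\tilde\theta(y)T(x)$. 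This produces an identity
\begin{equation*}
A(x)+\theta(x)\,T(y)=B(y)+\tilde\theta(y)\,T(x)\qquad\text{on }\operatorname{supp}(X_1)\times\operatorname{supp}(X_2),
\end{equation*}
where $A(x)$ gathers $\log p_{X_1}(x),\ \log h_1(x)$ and $\log h_2(\theta(x))$, and $B$ is the $y$-analogue. Note this is genuinely an identity over a product domain, because in a fixed exponential family the conditional support of $X_2$ is the fixed set $\operatorname{supp}(h_1)$, independent of the parent value.

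The extraction then proceeds in two steps. For (\ref{eq000}): first observe that $T$ cannot be constant on $\operatorname{supp}(h_1)$, for otherwise $p(\,\cdot\,;\theta)$ would not depend on $\theta$, contradicting that $\theta$ is non-constant (causal minimality); so I can fix $y_0,y_1\in\operatorname{supp}(X_2)$ with $T(y_0)\neq T(y_1)$ and subtract the functional equation at $y_1$ from the one at $y_0$. The $A(x)$-terms cancel, and solving for $\theta(x)$ expresses it as an affine function of $T(x)$ — this is (\ref{eq000}) — with leading coefficient $a\neq0$ (else $\theta$ is constant). For (\ref{eq007}): substitute $\theta(x)=aT(x)+b$ into the functional equation, move the $aT(x)T(y)$ term across, and rearrange to $A(x)-\bigl(\tilde\theta(y)-aT(y)\bigr)T(x)=B(y)-bT(y)$. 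Since $a\neq0$, $T$ is also non-constant on $\operatorname{supp}(X_1)$, so fixing $x_0,x_1$ with $T(x_0)\neq T(x_1)$ and subtracting forces $\tilde\theta(y)-aT(y)$ to be some constant $c$, whence $A(x)-cT(x)$ is constant. Unwinding the definition of $A$ converts this into $p_{X_1}(x)\propto \dfrac{h_1(x)}{h_2(\theta(x))}\,e^{cT(x)}$, which is (\ref{eq007}). Thus non-identifiability forces both (\ref{eq000}) and (\ref{eq007}), and the two stated sufficient conditions for identifiability are the contrapositives.

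The density bookkeeping and the two subtractions are routine; the delicate points are (i) the measure-theoretic passage from the a.e.\ equality of densities to a pointwise identity at which one may evaluate individual points — I would handle this using the continuity of $T$ (hypothesis) and of $\theta$ (from the $CPCM$ definition in the continuous case) together with the fact that all relevant supports have nonempty interior — and (ii) excluding the degenerate possibility that $T$ is constant on the support in question, dispatched by the causal-minimality argument above. Matching the exact position of $h_2$ in (\ref{eq007}) to the normalization convention of Appendix~\ref{appendix_exponential_family} is also needed but is purely a sign-bookkeeping matter; I anticipate step (i) to be the main thing requiring care.
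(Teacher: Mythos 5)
Your proposal is correct and follows essentially the same route as the paper: equate the two density factorizations, take logarithms to obtain the functional identity $f(x)+g(y)=\tilde\theta(y)T(x)-\theta(x)T(y)$, and extract (\ref{eq000}) and (\ref{eq007}) by evaluating at pairs of points where $T$ takes distinct values. The only cosmetic difference is that the paper packages your two-point subtraction step as a separate auxiliary lemma (Lemma \ref{PomocnaLemma1}, whose proof is exactly that subtraction), whereas you inline it and additionally make explicit the non-degeneracy of $T$; the substance is identical.
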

\begin{hproof} While this proposition follows as a special case of Theorem~\ref{thmAssymetricMultivariatesufficient}, we outline the key ideas behind the proof.

If the graph is \textit{not} identifiable, there exists a function $ \tilde{\theta}$, such that 
causal models $X_1 = \varepsilon_1, X_2 = F^{-1}\big(\varepsilon_2; \theta(X_1)\big)$, and $X_2 = \varepsilon_2, X_1 = F^{-1}\big(\varepsilon_1; \tilde{\theta}(X_2)\big)$ generate the same joint distribution.

\textbf{1)} If the supports of $X_1$ and $X_2$ differ, then $X_1$ trivially can not be written as $X_1 = F^{-1}\big(\varepsilon_1; \tilde{\theta}(X_2)\big)$  since the support of any distribution in the exponential family is fixed. This immediately rules out non-identifiability in such cases. At the population level, even a difference on a measure-zero set is sufficient to ensure identifiability.

\textbf{2) } Assuming the causal graph is not identifiable, we decompose the joint density:
\begin{equation}\label{eq69}
  p_{(X_1, X_2)}(x,y) = p_{X_1}(x)p_{X_2\mid {X_1}}(y\mid x) = p_{X_2}(y)p_{{X_1}\mid {X_2}}(x\mid y),\,\,\,\,\,\,\,\, x,y\in supp(X_1).
 \end{equation}
Since $F$ belongs to the exponential family, we rewrite the conditional densities using notation from \eqref{Exponential family of distributions}:$$p_{X_2\mid {X_1}}(y\mid x) = h_{1}(y)h_{2}[\theta(x)]\exp[\theta(x)T(y)], \,\,\,\,\,\,p_{{X_1}\mid {X_2}}(x\mid y) = h_{1}(x)h_{2}[\tilde{\theta}(y)]\exp[\tilde{\theta}(y)T(x)].$$
Substituting this into \eqref{eq69} gives:
\begin{equation}\label{eq098}
    \begin{split}
  &   p_{X_1}(x)h_{1}(y)h_{2}[\theta(x)]\exp[\theta(x)T(y)] =p_{X_2}(y) h_{1}(x)h_{2}[\tilde{\theta}(y)]\exp[\tilde{\theta}(y)T(x)],\\&
\underbrace{\log\bigg\{  \frac{ p_{X_1}(x)h_{2}[\theta(x)]}{h_{1}(x)}\bigg\}}_{f(x)}+ \underbrace{\log\bigg\{\frac{h_{1}(y)}{ p_{X_2}(y)h_{2}[\tilde{\theta}(y)]}\bigg\}}_{g(y)} = \tilde{\theta}(y)T(x)-\theta(x)T(y),
\end{split}
\end{equation}
where the second equation follows from taking the logarithm of both sides after dividing by  $h_1$ and $h_2$. Differentiating both sides with respect to $x$ and $y$, and fixing $y$ such that $T'(y)\neq 0$, we obtain $\theta'(x) = \frac{\tilde{\theta}'(y)}{T'(y)}T'(x)$. Integrating this equation with respect to $x$ leads to (\ref{eq000}). Note that we do not need to assume differentiability of $\theta$, and we can get around it by applying Lemma \ref{PomocnaLemma1}.

\textbf{3) } equality (\ref{eq098}) implies 
$f(x) + g(y) = a_1T(x) + a_2T(y) +a_3$ for some constants $a_1, a_2, a_3\in\mathbb{R}$. Therefore, fixing $y$ yields $f(x)=\log\bigg\{  \frac{ p_{X_1}(x)h_{2}[\theta(x)]}{h_{1}(x)}\bigg\} = a_1T(x) + const$. Rewriting this directly yields (\ref{eq007}). 
\end{hproof}

Intuitively, condition~\eqref{eq000} rules out joint distributions that are symmetric with respect to the transformed axis $aT(x)+b$, similar to how symmetry around $y = a+bx$ causes non-identifiability in the Gaussian ANM case \citep{Zhang2009}.
In the $CPCM(F)$ setting, the parameter $\theta(x)$ affects the distribution through the sufficient statistic $T(x)$.
If $\theta(x)$ takes the form $T(x)$ (or $aT(x)+b$, since a sufficient statistic is defined only up to an affine transformation), the joint law has this symmetry, making both causal directions consistent with the data.
By violating \eqref{eq000}, the symmetry is broken and the true causal direction becomes identifiable.

As a consequence of Proposition \ref{Necessary condition for identifiability}, we extend the results demonstrated by \cite{ParkPoisson} and \cite{ParkGHD} for a Poisson DAG model. These authors established the identifiability of a Poisson DAG model, where all variables (including source variables) given their parents follow a Poisson distribution. We present an analogous result, relaxing the restriction on the source variables.

\begin{consequence}\label{paretoidentifiability}
\begin{itemize}
    \item Let $(X_1,X_2)$ admit the $CPCM(F)$ model with graph $X_1\to X_2$, where $F$ is the Poisson distribution function with rate $\lambda$.  Then, the causal graph is \textit{not} identifiable if and only if
\begin{equation*}
 \lambda(x) =e^{ax+b},\,\,\,\,\,\,\,\, P(X_1=x) \propto \frac{e^{ \lambda(x)+cx}}{x! }, \,\,\,\,\,\,\,\,\forall x\in \{0,1,2, \dots \},
 \end{equation*}
for some $a<0,b,c\in\mathbb{R}$.
    \item Let $(X_1,X_2)$ admit the $CPCM(F)$ model with graph $X_1\to X_2$, where $F$ is the Pareto distribution function. Then, the causal graph is \textit{not} identifiable if and only if 
\begin{equation*}
\theta(x) = a\log(x) +b,\,\,\,\,\,\,\,\, p_{X_1}(x) \propto \frac{1}{ [a\log(x)+b] x^{c+1} }, \,\,\,\,\,\,\,\,\forall x\geq 1,
\end{equation*}
for some $a,b,c>0$. 

    \item Let $(X_1,X_2)$ admit the $CPCM(F)$ model with graph $X_1\to X_2$, where $F$ is Bernoulli distribution function. Then, the causal graph is identifiable if and only if $supp(X_1) \neq supp(X_2) $.
\end{itemize}
\end{consequence}
The proof is provided in \hyperref[Proof of pareto identifiability]{Appendix} \ref{Proof of pareto identifiability}, together with definitions of the distribution functions. Observe that if $a=0$, then $X_1\indep X_2$ and the (empty) graph is trivially identifiable. 

Note that in the first two bullet points of Consequence~\ref{paretoidentifiability}, we have three free parameters: \(a\), \(b\), and \(c\). The non-identifiability of the graph in a Bernoulli model arises from the fact that the joint distribution of \((X_1, X_2)\) can be fully characterized by only three parameters.

\subsection{Identifiability in $CPCM(F_1, \dots, F_k)$ models}

We generalize Proposition \ref{Necessary condition for identifiability}  to the general case. The following theorem establishes that $CPCM(F_1, \dots, F_k)$ models are ``typically'' identifiable, except for a finite-dimensional set within the space of all possible distributions.

\begin{theorem}
\label{thmAssymetricMultivariatesufficient}
Let $(X_1, X_2)$ follow the $CPCM(F_1, \dots, F_k)$ model with graph $X_1\to X_2$, where $F_1, \dots, F_k$ belong to the exponential family of distributions with corresponding sufficient statistics  
$T_m= (T_{m,1}, \dots, T_{m,q_m})^\top$, $m=1, \dots, k$.  Following Definition~\ref{CPCM(F1F2)}, 
let $\tilde{m}\in\{1, \dots, k\}$ be the index such that  
$X_2 = F_{\tilde{m}}^{-1}\big(\varepsilon_2; 
\theta_2(X_1)\big)$. 

The causal graph is identifiable if for all $m\in \{1, \dots, k\}$, at least one of the following holds: 
\begin{itemize}
    \item $ supp(F_m) \neq supp(X_1)$. 
    \item The function \( \theta_2 \) is not a linear combination of the sufficient statistics \( T_{m,1}, \dots, T_{m,q_m} \), i.e., there do not exist coefficients \( a_{i,j}, b_i \in \mathbb{R} \) for \( i = 1, \dots, q_{\tilde{m}} \) and \( j = 1, \dots, q_m \) such that  
   \begin{equation}\label{eq158}
   \theta_{2,i}(x) = \sum_{j=1}^{q_m} a_{i,j} T_{m,j}(x) + b_i, \quad \forall x \in \operatorname{supp}(X_1), \quad \forall i \in \{1, \dots, q_{\tilde{m}}\}.
   \end{equation}  
    \item There do not exist constants \( c_1, \dots, c_{q_m} \in \mathbb{R} \) such that the density of \( X_1 \) satisfies  
   \begin{equation}\label{eq007v2}
  p_{X_1}(x) \propto \frac{h_{m,1}(x)}{h_{\tilde{m},2}[\theta_2(x)]} e^{\sum_{i=1}^{q_m} c_i T_{m,i}(x) }, \quad \forall x \in \operatorname{supp}(X_1),
   \end{equation}  where \( h_{m,1} \) is a base measure associated with \( F_{m} \) and \( h_{\tilde{m},2} \) is the normalizing function of \( F_{\tilde{m}} \), both defined in \hyperref[appendix_exponential_family]{Appendix} \ref{appendix_exponential_family}. 
\end{itemize}
Consequentially, the space of non-identifiable distributions is contained in a $\tilde{d}$-dimensional space, where 
\begin{equation}\label{dimension_in_theorem2}
    \tilde{d} \leq \sum_{m\in\{1, \dots, k\}:  supp(F_m) = supp(X_1)} (q_m+1)(q_{\tilde{m}}+1) -1 .  
\end{equation}

\end{theorem}

The proof is provided in \hyperref[Proof of thmAssymetricMultivariatesufficient]{Appendix} \ref{Proof of thmAssymetricMultivariatesufficient}. It is insightful to examine the dimension of the space of unidentifiable distributions for different choices of \( F \).  In one-parameter cases, such as in Consequence~\ref{paretoidentifiability}, the set of all unidentifiable distributions lies within a three-dimensional space, similar to the case for ANM \cite[Proposition 21]{Peters2014}. For the Gaussian \( CPCM(F) \) model, Theorem~\ref{normalidentifiability} shows that this dimension is \( 6 \), despite Theorem~\ref{thmAssymetricMultivariatesufficient} initially suggesting \( (2+1)(2+1) -1 = 8 \). In the proof of Theorem~\ref{normalidentifiability}, we showed that two of these coefficients must be zero, confirming that \eqref{dimension_in_theorem2} provides only an upper bound and the actual dimension can be smaller.  

Since the space of all distributions is infinite-dimensional (assuming infinite support), one can argue that identifiability holds for ``most distributions,'' regardless of the choice of \( F \). However, when \( F \) has many parameters, the model often lies ``close'' to an unidentifiable case, making the finite sample inference significantly more challenging.  

\begin{consequence}\label{consequenceprva}
Suppose that \( \text{supp}(X_1) = \mathbb{R} \), \(\text{supp}(X_2) = \{0, 1, \dots\}\) such as on Figure~\ref{Asymmetrical_picture}, and let \((X_1, X_2)\) admit the \(CPCM(F_1, F_2)\) model with graph \(X_1 \to X_2\), where \(F_1\) is a Gaussian distribution and \(F_2\) is a Poisson distribution with rate parameter \(\lambda\). The causal graph is identifiable if and only if there do not exist constants \(a_1, a_2, b, c_1, c_2\in \mathbb{R}\), $a_1, c_1<0$, such that:
\begin{equation*}
\lambda(x) = e^{a_1 x^2 +a_2x + b}, \quad p_{X_1}(x) \propto e^{c_1 x^2 + c_2 x },\quad\quad \forall x \in \mathbb{R}.
\end{equation*}
Details and more examples are provided in  \hyperref[consequence]{Appendix} \ref{consequence}.  
\end{consequence}

\section{Multivariate case $d\geq 2$}
\label{Section4}

We extend the theory to the case with possibly more than two variables, $\textbf{X} = (X_1, \dots, X_d)^\top$.  

\begin{definition}\label{DefinitionCPCM}
Let \(\{F_1, \dots, F_k\}\) be a collection of distribution functions with \(q_1, \dots, q_k\) parameters, respectively, where  \(q_1, \dots, q_k\in\mathbb{N}\). We define a conditionally parametric causal model \(CPCM(F_1, \dots, F_k)\) with an underlying DAG \(\mathcal{G}\) as a collection of equations:
\begin{equation*}
  X_j = \begin{cases}
    \varepsilon_j, & \text{if } j \in \text{Source}(\mathcal{G}), \\
    F_{\pi(j)}^{-1}\big(\varepsilon_j; \theta_j(\mathbf{X}_{pa_j})\big), \text{where } \pi(j) \in \{1, \dots,k\},\,\,\,\,\,\,\, & \text{if } j \notin \text{Source}(\mathcal{G}),
  \end{cases}
\end{equation*}
where \((\varepsilon_1, \dots, \varepsilon_d)^\top\) is a collection of jointly independent random variables with \(\varepsilon_j \sim \text{U}(0, 1)\) for all \(j \notin \text{Source}(\mathcal{G})\), and \(\theta_j\) are non-constant functions in any of their arguments.
\end{definition}
Simply said, we assume that $X_j\mid \textbf{X}_{pa_j}$ is distributed according to distribution $F_{\pi(j)}$ with parameters $\theta_j(\textbf{X}_{pa_j})$, where $F_{\pi(j)}$ is either $F_1, F_2,  \dots$, or  $F_k$. 
Although we implicitly assume causal minimality \citep{zhang2010intervention}, we do not require the stronger assumption of faithfulness \citep{uhler2013geometry}. 

The question of the identifiability of $\mathcal{G}$ in the multivariate case is in order. Here, it is not satisfactory to consider the identifiability of each pair of $X_i\to X_j$ separately. Each pair $X_i, X_j$  needs to have an identifiable causal relation \textit{conditioned} on other variables $\textbf{X}_S$. Such an observation was first made by \cite{Peters2014} in the context of additive noise models. We now provide a more precise statement in the context of  $CPCM(F_1, \dots, F_k)$. 

\begin{definition} 
We say that the $CPCM(F_1, \dots, F_k)$ is \textit{pairwise identifiable}, if for all $ i,j\in V$, $ S\subseteq V$, such that $i\in pa_j$ and  $pa_j\setminus \{i\}\subseteq S \subseteq nd_j\setminus\{i,j\}$, there exists  $\textbf{x}_{S}{:}\,\,  p_S(\textbf{x}_S)>0$, which satisfies that a bivariate model defined as $X=\tilde{\varepsilon}_X, Y = F^{-1}_j\big(\tilde{\varepsilon}_Y, \tilde{\theta}(X)\big)$ is identifiable (in the sense of Definition \ref{DEFidentifiability}), where  $F_{\tilde{\varepsilon}_X} = F_{X_i\mid \textbf{X}_{S} =\textbf{ x}_S}    $ and $\tilde{\theta}(x) = \theta_j(\textbf{x}_{pa_j\setminus\{i\}}, x)$,  $x\in supp(X)$.
\end{definition}

 \begin{lemma}\label{thmMultivairateIdentifiability}
Let $F_{\textbf{X}}$ be generated by the pairwise identifiable $CPCM(F_1, \dots, F_k)$ with DAG $\mathcal{G}$. Then,  $\mathcal{G}$ is identifiable from the joint distribution. 
 \end{lemma}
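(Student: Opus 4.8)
The plan is to reduce the multivariate identifiability to the pairwise case together with a Markov‑condition argument, following the strategy pioneered by \cite{Peters2014} for additive noise models. Suppose, for contradiction, that $F_{\textbf{X}}$ can also be generated by a second $CPCM(F_1, \dots, F_d)$ with DAG $\mathcal{G}' \neq \mathcal{G}$. Since $CPCM$ models satisfy causal minimality (the $\theta_j$ are non‑constant), both $\mathcal{G}$ and $\mathcal{G}'$ are minimal I‑maps of $F_{\textbf{X}}$. If $\mathcal{G}$ and $\mathcal{G}'$ are not Markov equivalent, one can already separate them by a conditional (in)dependence statement, so the substantive case is $\mathcal{G} \sim_{Markov} \mathcal{G}'$ but $\mathcal{G} \neq \mathcal{G}'$: then there is an edge, say $i \to j$ in $\mathcal{G}$, that is reversed to $j \to i$ in $\mathcal{G}'$, and standard results on Markov equivalence classes let me choose such a ``covered'' reversible edge for which $pa_j(\mathcal{G}) \setminus \{i\} = pa_i(\mathcal{G}') \setminus \{j\}$; call this common set $S_0$.

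Next I would condition on a suitable configuration of a separating set. Take $S$ with $pa_j \setminus\{i\} \subseteq S \subseteq nd_j \setminus \{i,j\}$; such an $S$ exists because $i \to j$ and the graph is acyclic, and by the pairwise identifiability hypothesis there is a value $\textbf{x}_S$ with $p_S(\textbf{x}_S) > 0$ making the induced bivariate model identifiable. The key computation is to show that, conditional on $\textbf{X}_S = \textbf{x}_S$, the joint law of $(X_i, X_j)$ is exactly that of the bivariate $CPCM(F_j)$ with graph $X_i \to X_j$ appearing in the definition of pairwise identifiability: the structural equation $X_j = F_j^{-1}(\varepsilon_j; \theta_j(\textbf{X}_{pa_j}))$ restricted to the event $\textbf{X}_{pa_j \setminus\{i\}} = \textbf{x}_{pa_j\setminus\{i\}}$ becomes $X_j = F_j^{-1}(\varepsilon_j; \tilde\theta(X_i))$ with $\tilde\theta(x) = \theta_j(\textbf{x}_{pa_j\setminus\{i\}}, x)$, while $\varepsilon_j \indep (X_i, \textbf{X}_S)$ because $\varepsilon_j$ is independent of all non‑descendants of $j$ and $\{i\}\cup S \subseteq nd_j$. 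Symmetrically, the alternative DAG $\mathcal{G}'$ forces the \emph{reverse} bivariate $CPCM$ (with $F_j$, since $j \notin Source$ in $\mathcal{G}'$ as it has parent $i$ there) to generate the same conditional law of $(X_i,X_j)$ given $\textbf{X}_S = \textbf{x}_S$. That contradicts the identifiability of the bivariate model guaranteed by pairwise identifiability, and the contradiction proves $\mathcal{G}' = \mathcal{G}$.

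The main obstacle, and the step I would spend the most care on, is the graph‑theoretic bookkeeping: justifying that whenever $\mathcal{G}' \neq \mathcal{G}$ are two minimal I‑maps of the same distribution, one can exhibit a single pair $(i,j)$ with $i \to j \in \mathcal{G}$, $j\to i \in \mathcal{G}'$, and a set $S$ simultaneously serving as a valid conditioning set (i.e. $pa_j(\mathcal{G})\setminus\{i\} \subseteq S \subseteq nd_j(\mathcal{G})\setminus\{i,j\}$) on which the induced conditional distributions from the two models \emph{must} agree. When $\mathcal{G},\mathcal{G}'$ are Markov equivalent this follows from Chickering's covered‑edge reversal theorem; when they are not, I must first extract the separating statement that contradicts Markovianity of one of them. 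Care is also needed that conditioning on $\textbf{X}_S = \textbf{x}_S$ preserves the exponential‑family/$CPCM$ structure needed to invoke Definition~\ref{DEFidentifiability} — but this is immediate since fixing some arguments of the continuous non‑constant $\theta_j$ leaves a continuous (and, on a suitable subdomain, non‑constant) function of the remaining argument, which is exactly what pairwise identifiability was phrased to handle. A secondary subtlety is the possibility that $X_i$ is a source in one model but not the other; this is handled by noting that pairwise identifiability only requires identifiability of the bivariate model with the \emph{known} $F_j$ on the effect side and an \emph{arbitrary} marginal on the cause side, matching Definition~\ref{DefinitionCPCM}'s allowance of free source distributions.
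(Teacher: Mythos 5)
Your overall strategy is the paper's: reduce to a reversed edge pair $(i,j)$, condition on a set $S$ sandwiched between the parents and the non-descendants, observe that the conditional law of $(X_i,X_j)$ given $\textbf{X}_S=\textbf{x}_S$ is a bivariate CPCM in both directions, and contradict pairwise identifiability. The conditioning step and the independence argument for $\varepsilon_j$ are handled correctly. However, the graph-theoretic reduction --- which you yourself flag as the crux --- contains a genuine gap. You split into two cases and claim that if $\mathcal{G}$ and $\mathcal{G}'$ are not Markov equivalent, ``one can already separate them by a conditional (in)dependence statement.'' This is false in the present setting: the paper explicitly does \emph{not} assume faithfulness, and a single distribution can be Markov and causally minimal with respect to two DAGs that are not Markov equivalent; being Markov with respect to both, it yields no conditional-independence contradiction. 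Dismissing that case therefore leaves the proof incomplete exactly where the restricted model class must do the work. The paper avoids the case split entirely by invoking Proposition~29 of \cite{Peters2014}, which, under causal minimality alone, produces variables $L,Y$ with $Y\to L$ in $\mathcal{G}$, $L\to Y$ in $\mathcal{G}'$, and a single set $S=\big(pa_L(\mathcal{G})\setminus\{Y\}\big)\cup\big(pa_Y(\mathcal{G}')\setminus\{L\}\big)$ contained in $nd_L(\mathcal{G})\cap nd_Y(\mathcal{G}')\setminus\{Y,L\}$, i.e.\ a conditioning set simultaneously admissible for both directions.

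A secondary issue: even in the Markov-equivalent case, Chickering's covered-edge theorem gives $pa_i(\mathcal{G})=pa_j(\mathcal{G})\setminus\{i\}$ (both parent sets taken in $\mathcal{G}$), not the identity $pa_j(\mathcal{G})\setminus\{i\}=pa_i(\mathcal{G}')\setminus\{j\}$ that you assert; the parents of $i$ in $\mathcal{G}'$ need not coincide with those in $\mathcal{G}$. What the argument actually requires, and what Proposition~29 of \cite{Peters2014} supplies, is one set $S$ containing both $pa_j(\mathcal{G})\setminus\{i\}$ and $pa_i(\mathcal{G}')\setminus\{j\}$ and lying in both non-descendant sets, so that the two induced bivariate models live on the same conditioning event and the pairwise identifiability hypothesis (stated relative to $\mathcal{G}$) applies. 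Replacing your case analysis by a direct appeal to that proposition, or reproving it, closes the gap; the remainder of your argument then goes through as in the paper.
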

The proof follows as a consequence of Theorem 28 in \cite{Peters2014} and is provided in \hyperref[Proof of thmMultivairateIdentifiability]{Appendix} \ref{Proof of thmMultivairateIdentifiability}.

\begin{consequence}[Multivariate Gaussian case]\label{ExampleMultivariateGaussiancase}
Suppose that $\textbf{X}=(X_1, \dots, X_d)$ follow $CPCM(F)$ with a Gaussian distribution function $F$. This corresponds to $X_j\mid \textbf{X}_{pa_j}\sim N\big(\mu_j(\textbf{X}_{pa_j}), \sigma_j^2(\textbf{X}_{pa_j})\big)$ for all $j=1, \dots, d$ and for some functions $\mu_j, \sigma_j$. In other words, we assume that the data-generation process has the following form: 
$$
X_j = \mu_j(\textbf{X}_{pa_j}) + \sigma_j(\textbf{X}_{pa_j})\,\varepsilon_j, \,\,\,\,\,\,\,\,\,\,\,\,\,\,\,\,\, \text{where  } \varepsilon_j \text{  is Gaussian.}
$$
Potentially, source nodes can have arbitrary distributions. Combining Theorem~\ref{normalidentifiability} and Lemma~\ref{thmMultivairateIdentifiability}, the causal graph $\mathcal{G}$ is identifiable if the 
functions $\theta_j(\textbf{x}):=\big(\mu_j(\textbf{x}), \sigma_j(\textbf{x})\big)^\top, \textbf{x}\in\mathbb{R}^{|pa_j(\mathcal{G})|}$ , $j=1, \dots, d$, are \textit{not} in the form (\ref{norm}) in any of their arguments. 
\end{consequence}

\section{Inference}
\label{Section5}

\subsection{ CPCM algorithm - bivariate case   }
\label{Section_Algorithm}

Our CPCM methodology is based on selecting an appropriate causal model (in our case, the choice of collection $\{F_1, \dots, F_k\}$) and a measure of a model fit. In the following subsections, we measure the model fit by exploiting the principle of independence between the cause and the mechanism.

We say that a DAG $\mathcal{G}$ is \textbf{plausible} under $CPCM(F)$ model if the joint distribution \textit{can} be generated via $CPCM(F)$ model with graph $\mathcal{G}$. The Algorithm~\ref{Algorithm1} describes the main steps to test the plausibility and estimation of $\mathcal{G}$ in the bivariate case. 

\begin{algorithm}[]
  \SetAlgoLined
  \KwData{ Random sample $(x_{1,1}, x_{2,1})^\top, \dots, (x_{1,n}, x_{2,n})^\top$}
  \KwResult{Estimate $\hat{\mathcal{G}}$ and plausibility of graphs $X_1\to X_2$ and $X_2\to X_1$}
\textbf{Step 0) }Test independence between $X_1$ and $X_2$. 

\textbf{Step 1) } Determine plausibility of   $X_1\to X_2$ using the following:

  \hspace{0.3cm}\textbf{1a)} Estimate $\theta(X_1)$ in $X_2 = F^{-1}(\varepsilon_2; \theta(X_1))$; compute $\hat{\varepsilon}_2 := F(X_2; \hat{\theta}(X_1))$.
  
  \hspace{0.3cm}\textbf{1b)} Test independence between $\hat{\varepsilon}_2$ and $X_1$. If the p-value is larger than $\alpha=0.05$, mark $X_1 \to X_2$ as \textit{plausible}.

  \textbf{Step 2:} Repeat Step 1 for $X_2 \to X_1$.

\textbf{Forced estimate}  (Choose the direction with the higher residual independence):  Return $X_1 \to X_2$ if $\text{p-value}(\hat{\varepsilon}_2, X_1) > \text{p-value}(\hat{\varepsilon}_1, X_2)$, else return $X_2 \to X_1$.

\textbf{Conservative Estimate:}  
\begin{itemize}
 \setlength{\itemsep}{0pt} % Reduce space between items
    \setlength{\parskip}{0pt} % Reduce space between paragraphs
    \item If Step 0 fails to reject independence, return $\hat{\mathcal{G}} = \emptyset$.
    \item If exactly one of the two graph directions is plausible, return it as $\hat{\mathcal{G}}$.
    \item If both directions are plausible, return \texttt{"Unidentifiable case"}.
    \item If neither is plausible, return \texttt{"Assumptions not fulfilled"}.
\end{itemize}

  \caption{CPCM(F) - bivariate case}
  \label{Algorithm1}
\end{algorithm}

An estimation of ${\theta}(X_1)$ in Step 1a) is discussed in detail in Appendix~\ref{Appendix_consistency}. It can be performed using any suitable machine learning method, such as GAM, GAMLSS, random forests, or neural networks \citep{GAM, GAMLSS}. For the independence test in Step 1b), one may use the HSIC test (kernel-based Hilbert–Schmidt independence criterion; \cite{Kernel_based_tests}) or a copula-based test \citep{copula_based_independence_test}.

The conservative estimate $\hat{\mathcal{G}}$ helps guard against unfulfilled assumptions or unidentifiable cases. If it returns ``Assumptions not fulfilled,'' it means that we were unable to fit the $CPCM(F)$ model in either direction, suggesting that the variables do not follow the $CPCM(F)$ model.  In this case, one should consider increasing the complexity (i.e., the number of parameters) of $F$. This is discussed further in Section~\ref{Section5Model_choice}. 

If it returns ``Unidentifiable case,'' this means that we were able to fit the $CPCM(F)$ model in both directions. This could indicate that the sample size is too small or that we are in an unfortunate unidentifiable case, such as the one described in Consequence~\ref{consequenceprva}.  In this case, one should consider decreasing the complexity (i.e., the number of parameters) of $F$.

While the warnings from the conservative estimate  $\hat{\mathcal{G}}$  are useful, we often require a single estimate of $\hat{\mathcal{G}}$  for comparison with other benchmark methods. 

\subsubsection{Extension to $CPCM(F_1, \dots, F_k)$}\label{section5.1.1}
The following adjustment to Step 1 in Algorithm~\ref{Algorithm1} can be applied to accommodate the \(CPCM(F_1, \dots, F_k)\) model, given a collection \(\{F_1, \dots, F_k\}\).

\begin{algorithm*}[H]
  \SetAlgoLined
$\boldsymbol{CPCM(F_1, \dots, F_k)}$

    \textbf{Step 1) } Determine plausibility of   $X_1\to X_2$ using the following: 

\textbf{$\,\,\,\,\,\,$Step 1a) } Estimate the set $S:=\{j\in \{1, \dots, k\}: supp(F_j) = supp(X_2)\}$. If empty, return ``STOP: Inappropriate choice of $F$''. 

\textbf{$\,\,\,\,\,\,$Step 1b) } For all  $j\in \hat{S}$,  estimate $\theta(X_1)$ in a model $X_2 = F_j\big(\varepsilon_2; \theta(X_1)\big)$,  and compute probability transform $\hat{\varepsilon}_2^j := F_j\big(X_2; \hat{\theta}(X_1)\big)$.  

\textbf{$\,\,\,\,\,\,$Step 1c)} Compute the p-value of an independence test between $\hat{\varepsilon}_2^j$ and $X_1$ for all $j\in\hat{S}$. Choose the largest p-value. Direction $X_1\to X_2$ is marked as \textit{plausible} if this p-value is larger than  $\alpha=0.05$. 

%  \caption{Modification of Step 1 in Algorithm~\ref{Algorithm1} accounting for $CPCM(F_1, \dots, F_k)$}
%  \label{Algorithm2}
\end{algorithm*}

By estimating the set $S$ in Step 1a), we can  preliminarily filter out the $F_j$ choices that are evidently unsuitable, such as fitting Gaussian model when the data are discrete. We can apply a simple heuristic under the assumption that \( \text{supp}(X_2) \) and \( \text{supp}(F_j) \) for all \( j \) are one of the following: 1) \(\mathbb{R}\) (Gaussian), 2) \(\mathbb{R}^+\) (Gamma), 3) \([0,1]\) (Beta), or 4) \(\mathbb{N}\)(Poisson). The heuristic is as follows: if the number of unique values in \( (x_{2,1}, \dots, x_{2,n}) \) is fewer than \( n/10 \) and the values lie in $\mathbb{N}$, we set \( \text{supp}(X_2) = \mathbb{N} \). Otherwise, if all values lie within \([0,1]\), we set \( \text{supp}(X_2) = [0,1] \). To distinguish between \(\mathbb{R}\) and \(\mathbb{R}^+\), we use the skewness of the distribution: if the skewness is close to 0, the distribution resembles a Gaussian distribution, so we set \( \text{supp}(X_2) = \mathbb{R} \). Otherwise, the distribution resembles a Gamma distribution, so we set \( \text{supp}(X_2) = \mathbb{R}^+ \).

\subsection{Multivariate case: $CPCM(F)$ as a score-based DAG optimization
\label{Section_score_based_algorithm}
}
The forced estimate of $\mathcal{G}$ in Algorithm~\ref{Algorithm1} can be seen as a score-based algorithm; defining the score of a graph as a p-value of the corresponding independence test, we simply compare the scores of graphs $X_1\to X_2$ and $X_2\to X_1$. In the following, we formalize this idea, leading to generalizing the CPCM inference to multivariate case. 

Following the ideas of \cite{Bregmans_information, Score-based_causal_learning, Peters2014}, we use the following penalized independence score: 
\begin{equation}\label{score_definition1}
\hat{\mathcal{G}} =  \argmin_{\mathcal{G}\in DAG(d)}s(\mathcal{G}) = \argmin_{\mathcal{G}\in DAG(d)}\rho (\hat{\varepsilon}_1, \dots, \hat{\varepsilon}_d) + \lambda (\text{Number of edges in }\mathcal{G}),
\end{equation}
where $\rho$ represents some dependence measure, $\lambda$ is a hyperparameter, $DAG(d)$ is the set of all DAGs over $V=\{1, \dots, d\}$ and $\hat{\varepsilon}_1, \dots, \hat{\varepsilon}_d$ are noise estimators obtained by estimating ${\theta}_i(\textbf{X}_{pa_i(\mathcal{G})})$ and putting $\hat{\varepsilon}_i := F\big(X_i; \hat{\theta}_i(\textbf{X}_{pa_i(\mathcal{G})})\big)$ analogously to Algorithm~\ref{Algorithm1}. 

With regard to choice of $\rho$, we use minus the logarithm of the p-value of the independence test \citep{copula_based_independence_test} and $\lambda = 2$. These choices appear to work well in practice, but we do not provide any theoretical justification of their optimality.

Analogously to the bivariate case, we can define that a DAG ${\mathcal{G}}$ is \textbf{plausible} if the p-value of the corresponding independence test between $(\hat{\varepsilon}_1, \dots, \hat{\varepsilon}_d)$ is larger than level $\alpha\in (0,1)$. If every ${\mathcal{G}}\in DAG(d)$ is not plausible, it suggests that the variables do not follow the $CPCM(F)$ model. In this case, one should consider increasing the complexity (i.e., the number of parameters) of $F$; this is discussed in more detail in Section~\ref{S1_or_S2}. On the contrary to the bivariate case, many DAGs can be plausible in an identifiable $CPCM(F)$ model. In particular, if a true causal graph $\mathcal{G}$ is plausible, than any $\tilde{\mathcal{G}}\supseteq \mathcal{G}$ should be plausible. 

 It is important to note that in a bivariate case, score based estimate \eqref{score_definition1} is equal to the forced output of Algorithm~\ref{Algorithm1}, given $\lambda = -\infty$.

\subsubsection{$CPCM(F_1,\dots, F_k)$ extension
}
Similarly to the extension of Algorithm~\ref{Algorithm1}, the following adjustment can be applied to accommodate the \(CPCM(F_1, \dots, F_k)\) model, given a collection \(\{F_1, \dots, F_k\}\):

\begin{equation}
\label{multi_extention}
s(\mathcal{G}) = \min_{j_1\in \hat{S}_1, \dots, j_d\in \hat{S}_d}\rho (\hat{\varepsilon}_1^{j_i}, \dots, \hat{\varepsilon}_d^{j_d}) + \lambda (\text{Number of edges in }\mathcal{G}),
\end{equation}
where $\hat{S}_i$ are estimates of $S_i:= \{ j\in\{1, \dots, k\}: supp(X_i) = supp(F_j) \}$, and where  $\hat{\varepsilon}_i^{j_i} := F_{j_i}\big(X_i; \hat{\theta}_i(\textbf{X}_{pa_i(\mathcal{G})})\big)$. If $supp(F_i) \neq supp(F_j)$ for all $i\neq j$, we end up with a single evaluation of the score function for each possible $\mathcal{G}$. Finally, analogously to $CPCM(F)$, we say that DAG ${\mathcal{G}}$ is plausible if there exist $j_1\in \hat{S}_1, \dots, j_d\in \hat{S}_d$ such that the p-value of the independence test between $(\hat{\varepsilon}_1^{j_i}, \dots, \hat{\varepsilon}_d^{j_d})$ is larger than level $\alpha\in (0,1)$.

\subsubsection{Consistency}

\cite{reviewANMMooij} demonstrates consistency of a causal discovery algorithm in ANMs.  We establish an analogous result for $CPCM(F_1, \dots, F_k)$. 

\begin{proposition}
\label{consistency_proposition}
Let $(X_1, X_2)$ follow an identifiable  $CPCM(F_1, \dots, F_k)$ with DAG $\mathcal{G}$. Then, our score based algorithm presented in Section~\ref{Section_score_based_algorithm} is consistent, meaning that
$$\hat{\mathcal{G}} \overset{P}{\to}\mathcal{G}\,\,\,as\,\,n\to\infty,$$
given that we employ a ``suitable'' estimation procedure for the estimation $\hat{\varepsilon}_i$, we use HSIC score as our choice of $\rho$ and consistent estimators $\hat{S}_i$ (for definitions, proof and more details, see  Appendix~\ref{Appendix_consistency}). 
\end{proposition}

Additionally, in the \textit{unidentifiable} case, Algorithm~\ref{Algorithm1} detects non-identifiability with high probability, as stated in the following lemma. 

\begin{lemma}[Algorithm~\ref{Algorithm1} under unidentifiability]
\label{Lemma3}
Let $(X_1, X_2)$ follow an unidentifiable model $CPCM(F_1, \dots, F_k)$. 
Assume that Algorithm~\ref{Algorithm1} employs the HSIC independence test in step~1b) and that the regression function is estimated perfectly, i.e., using oracle estimator $\hat{\theta} = \theta$. 
Then, for sufficiently large $n$, Algorithm~\ref{Algorithm1} outputs  \texttt{"Unidentifiable case"} with probability at least $1 - 2\alpha$.
\end{lemma}

Extending Lemma~\ref{Lemma3} to the non-oracle setting or Proposition~\ref{consistency_proposition} beyond the bivariate case remains technically challenging and is left for future work. Although \cite{Kernel_based_tests} proved the asymptotic validity of bootstrap-based $p$-values for the multivariate HSIC test, estimation error in $\hat{\theta}$ propagates to the residuals and alters the variability of the HSIC statistic, complicating its asymptotic analysis. A rigorous generalization would require deriving asymptotic distributions for higher-order kernel statistics under estimated residuals, which is beyond the current theoretical scope of our work.  Empirically, however, as shown in Section~\ref{Section_simulations_Pareto}, the resulting $p$-values remain well-calibrated even in the presence of estimation error.

\subsubsection{Scalability and greedy algorithms for larger dimensions}
\label{section_scalability}
The main disadvantage of the proposed method is that we have to go through all graphs $\mathcal{G}\in DAG(d)$, which is possible only for very small $d$. However, even though graph learning is typically NP-hard \citep{NP-hard_score_based_causal_learning}, numerous algorithms have been proposed to speed up the process \citep{Greedy_search, Silander2006,  Ramsey2016, Nandy2018, Bregmans_information}.

The \textbf{naive-edge-greedy} algorithm \citep{Greedy_search} is one such algorithm that iteratively adds or removes edges to minimize the CPCM score \eqref{score_definition1}. Another is the \textbf{RESIT} algorithm \citep{Peters2014}, which first estimates a topological ordering and then prunes redundant edges via independence tests. RESIT can be naturally adapted to the CPCM framework (Algorithm~\ref{alg:resit-cpcm} in Appendix~\ref{appendix_greedy_definitions}) and retains large-sample consistency guarantees (Lemma~\ref{thm:resit_consistency} in Appendix~\ref{appendix_greedy_definitions}). However, its performance tends to deteriorate in higher dimensions where early ordering errors and type~I error introduce spurious edges.

%For example, the \textbf{naive-edge-greedy} search algorithm \citep{Greedy_search} begins with an empty DAG and iteratively explores neighboring graphs by adding or removing a single edge. Each candidate DAG is evaluated using the CPCM score function \eqref{score_definition1}, and the graph with the lowest score is selected. If this best candidate improves upon the current graph, it replaces it and the process repeats. The algorithm terminates when no further improvements are possible.

%Another example is \textbf{RESIT} (Regression with Subsequent Independence Test) algorithm, a polynomial-time procedure that first estimates a topological ordering and then prunes superfluous edges via independence tests \citep{Peters2014}.  This method can be naturally adapted for the CPCM framework (Algorithm~\ref{alg:resit} in Appendix~\ref{appendix_greedy_definitions}) and comes with theoretical guarantees for large-sample consistency (Lemma~\ref{thm:resit_consistency} in Appendix~\ref{appendix_greedy_definitions}). Despite its computational efficiency and theoretical backing, RESIT tends to perform poorly in high-dimensional settings. Errors made in the early stages of ordering propagate downstream, and false positives in Phase 2 lead to persistent spurious edges in the final output.

To combine the strengths of both approaches, we introduce a hybrid \textbf{RESIT-greedy} algorithm that merges Phase~1 of RESIT with the edge-pruning strategy of greedy search; see Algorithm~\ref{Algorithm_RESIT_GREEDY}. In Phase~1, the topological ordering is estimated by iteratively removing the node whose residuals $\hat{\varepsilon}_i$ exhibit the weakest dependence on the remaining variables; this node is then appended to the ordering. Alternative procedures for estimating the topological order also exist \citep{Gnecco}. In Phase~2, instead of performing independence tests, we apply a greedy edge-removal procedure guided by the CPCM score: starting from the fully connected graph consistent with the estimated ordering, we iteratively remove the edge whose deletion yields the largest decrease in the CPCM score, stopping when no further improvement is possible.

In Appendix~\ref{appendix_greedy_definitions}, we compare these algorithms in terms of accuracy and computational time. The results (unsurprisingly) show that the exact method achieves the highest accuracy but is computationally feasible only for small graphs ($d <5$). RESIT and naive-greedy methods are the most scalable but exhibit worst accuracy. The RESIT-greedy algorithm strikes a balance: it consistently outperforms both RESIT and naive-greedy in terms of accuracy while maintaining reasonable computational efficiency for moderate dimensions ($d < 10$), making it a practical choice in such cases.

For much larger dimensions, a hybrid approach is recommended: using e.g. PC algorithm to estimate the skeleton and applying the CPCM algorithm only to smaller subgroups of unoriented edges. Similar strategy has been discussed, for example, in \citep{Goudet2017}, though a detailed exploration is beyond the scope of this paper.

\begin{algorithm}[]
\caption{RESIT-greedy algorithm}
\label{Algorithm_RESIT_GREEDY}
\KwIn{Random sample of $(X_1,\dots,X_d)$}
\textbf{Phase 1 (topological order):} Obtain topological order $\pi$ via RESIT (Algorithm~\ref{alg:resit-cpcm}).

\textbf{Phase 2 (edge removal):} Initialize $\mathcal{G} \gets \{(j\!\to\! i): j \text{ precedes } i \text{ in } \pi\}$.

\Repeat{\textbf{no improvement}}{
  $S \gets s(\mathcal{G})$, where $s(\cdot)$ is the score defined in Equation~\eqref{score_definition1} or \eqref{multi_extention}\;
  $e^\star \gets \arg\min_{e \in \mathcal{G}} s(\mathcal{G}\setminus\{e\})$; 
  
  \lIf{$s(\mathcal{G}\setminus\{e^\star\}) < S$}{$\mathcal{G}\gets \mathcal{G}\setminus\{e^\star\}$}
}
\KwOut{$\mathcal{G}$}
\end{algorithm}

\subsection{Choice of the collection $\{F_1, \dots, F_k\}$}
\label{Section5Model_choice}

\subsubsection{Why $\{F_1, \dots, F_k\}$ cannot be chosen in a data-driven way }

Selecting the collection \(\{F_1, \dots, F_k\}\) is a crucial step in our approach. Unfortunately, there is no principled, general data-driven way to select this collection without access to alternative data. As discussed in Section~\ref{motivation_section}, alternative methods such as ANM, QVF, bQCD, or IGCI all predefine the notion of a ``simple'' distribution. This predefinition is necessary; otherwise, Occam's razor loses its meaning. The following lemma formalizes this idea.  

\begin{lemma}\label{lemma_o_overparametrizacii}
Suppose that the joint distribution $F_{(X_1,X_2)}$ is generated according to the model $CPCM(F_2)$ with graph $X_1\to X_2$, where $F_2$ is a distribution function belonging to the exponential family. 

Then, there exists $F_1$ such that the model $CPCM(F_1)$ with graph $X_2\to X_1$ also generates $F_{(X_1,X_2)}$. In other words, there exists $F_1$ such that the causal graph in $CPCM(F_1, F_2)$ is not identifiable from the joint distribution. 
\end{lemma}

The proof (provided in \hyperref[Proof of lemma_o_overparametrizacii]{Appendix} \ref{Proof of lemma_o_overparametrizacii}) is based on the specific choice of $F_{1}$, such that its sufficient statistic is equal to the parameter $\theta_2$ from the original model $CPCM(F_{2})$. 

It is important to note that such an $F_1$ often results in a rather non-standard distribution. While the notion of a ``standard'' distribution is somewhat philosophical, some distributions are objectively considered standard due to physical motivations; for example, the Gaussian (by the central limit theorem) or the Poisson (which arises when counting independent events). This motivates the definition of a ``standard set of well-known distributions,'' which we discuss in Section~\ref{Section_practical_choices}.

\subsubsection{Unfair game issue}

Even if the theory suggests that it is not possible to fit a $CPCM(F_1, \dots, F_k)$ in both causal directions, this is only an asymptotic result and may not hold in practice with finite samples. Overparameterized models may overfit and capture spurious patterns, while underparameterized ones may violate key assumptions. To ensure a fair comparison, we recommend selecting all $F_i$ with the same number of parameters ($q_i = q_j$). For example, choosing $F_1$ with one parameter and $F_2$ with three ($q_1 = 1$, $q_2 = 3$) introduces bias, since the more flexible model is more likely to fit the data regardless of its correctness (unless we are in the asymptotic regime). We refer to this issue as an ``unfair game.'' 

\subsubsection{Practical choices of the set of ``standard well-known distributions''}
\label{Section_practical_choices}

We define nested sets of ``standard set of well-known distributions''; set $\mathscr{S}_1$  containing distributions with one parameter, and a more complex set $\mathscr{S}_2$ containing distributions with two parameters. We do this to avoid the  ``unfair game'' issue. Both sets should be rich enough to contain a wide range of distributions with different supports and characteristics, but should not contain many distributions with the same support in order to avoid unidentifiable setups. 

For practical purposes, we restrict our attention to distributions that are implemented in \texttt{mgcv} package in \texttt{family.mgcv} \citep{Wood}.  These include:
\begin{enumerate}
    \item $\mathscr{S}_1$ consists of the following one parameter distributions: Gaussian with fixed variance, Poisson, Pareto and Exponential distribution. 
    \item  $\mathscr{S}_2$ consists of the following two parameter distributions: Gaussian, Negative binomial, Generalized Pareto and Gamma distribution.
\end{enumerate}
These choices are tactically made such that every distribution in the family $\mathscr{S}_1$ is a special case of some distribution in $\mathscr{S}_2$; for example, the Poisson distribution arises as a special (limiting) case of the Negative Binomial distribution \citep[see][p. 96]{CasellaBerger2024}. 

We emphasize that many other choices are possible, and our collections are neither exhaustive nor immutable. They may be adapted for specific applications where alternative distributions could be regarded as “standard.” 
\subsubsection{$\mathscr{S}_1$ or $\mathscr{S}_2$? Sequential approach}
\label{S1_or_S2}
We propose a sequential approach for the selection between \(\mathscr{S}_1\) and \(\mathscr{S}_2\). We first choose the (least complex) set \(\mathscr{S}_1\); if this choice is ``not appropriate'', we then extend the set of distributions to \(\mathscr{S}_1\cup \mathscr{S}_2\). If $\mathscr{S}_1\cup \mathscr{S}_2$ is also ``not appropriate'', we may further extend the model class by introducing $\mathscr{S}_3$: the class of standard three-parameter distributions. This hierarchical selection process incrementally increases the complexity of the conditional distributions until at least one graph is plausible. 

We define that \(\mathscr{S}_1\) is ``not appropriate'' if there does not exist any \textit{plausible} graph $\mathcal{G} \in DAG(d)$ under $CPCM(\mathscr{S}_1)$. This is formally defined in Algorithm~\ref{Algorithm_sequential}.  

\begin{algorithm}[]
  \SetAlgoLined
    \textbf{Exact version:} For each ${\mathcal{G}} \in \text{DAG}(d)$, infer whether it is plausible under $CPCM(\mathscr{S}_1)$ (as discussed in Section~\ref{Section_score_based_algorithm}). If no graph is plausible, return $\mathscr{S}_1 \cup \mathscr{S}_2$; otherwise, return $\mathscr{S}_1$.\\  

    \textbf{Greedy version:} When employing a (RESIT-)greedy algorithm, scores are evaluated for multiple candidate DAGs. For each candidate, test plausibility under \(CPCM(\mathscr{S}_1)\); if none are plausible, return \(\mathscr{S}_1 \cup \mathscr{S}_2\); otherwise, return \(\mathscr{S}_1\). \textbf{Note:}~The plausibility check is effectively free, as it relies on the same dependence measure $\rho(\hat{\varepsilon}_1, \dots, \hat{\varepsilon}_d)$ already computed in the score.\\
    
    \textbf{Fast version:} Estimate $\hat{\mathcal{G}}$. Test its plausibility under $CPCM(\mathscr{S}_1)$; if $\hat{\mathcal{G}}$ is not plausible return $\mathscr{S}_1\cup \mathscr{S}_2$; otherwise, return~$\mathscr{S}_1$.
  \caption{Sequential approach for the choice between $\mathscr{S}_1$ and $\mathscr{S}_2$}
  \label{Algorithm_sequential}
\end{algorithm}
By sequentially expanding \(\mathscr{S}_1\) by \(\mathscr{S}_2\), and potentially by \(\mathscr{S}_3\), we ensure that the simplest adequate model is chosen while avoiding unnecessary complexity and unfair game issue. As an example when our data are continuous uni-modal with full support, consider that a Gaussian model with fixed variance is fitted in both directions $X\to Y$ and $Y\to X$. If both graphs are not plausible, we expand the model to location-scale Gaussian model. Potentially, if both directions are again not plausible in this model, we add a shape parameter and consider Generalized Gaussian model \citep{Nadarajah01092005}.  

\begin{consequence}
\label{consequence_consistency}
Let $(X_1, X_2)$ follow an $CPCM(F)$ model with DAG $\mathcal{G}$, for some $F\in\mathscr{S}_1\cup \mathscr{S}_2$. Assume the conditions of Proposition~\ref{consistency_proposition} hold: namely, identifiability of $CPCM(\mathscr{S}_1 \cup \mathscr{S}_2)$, employing a ``suitable'' estimation procedure and the HSIC score. Then, our score-based algorithm, with the collection $\{F_1, \dots, F_k\}$ chosen via the Sequential approach (Exact version), is consistent:
$\hat{\mathcal{G}} \overset{P}{\to}\mathcal{G}$, as $n\to\infty$.
\end{consequence}

\section{Experiments} \label{simulations_section}
The \texttt{R} code for the presented algorithms, simulations, and application is available at \url{https://github.com/jurobodik/Causal_CPCM.git}.

\begin{itemize}
   \item In Section~\ref{Section_simulations_robustness}, we investigate the robustness of our approach against misspecifications of the choice of \(F\) (the data are generated with \(F\) being an exponential distribution but we use different choices for $F$ such as Gaussian or Pareto). 
   \item In Section~\ref{Section_simulations_Pareto}, we empirically validate Consequence~\ref{paretoidentifiability} and Proposition~\ref{consistency_proposition}, showing that the distribution of the $p$-values (scores) produced by Algorithm~\ref{Algorithm1} across different causal graphs is approximately uniform.
    \item In Section~\ref{Section_simulations_Gaussian} and Section~\ref{Section_simulations_multivariate}, we compare our methodology with state-of-the-art methods using bivariate and multivariate benchmark datasets, respectively. 
    \item In Section~\ref{Section7} we consider a toy example on real-world data. 
    \item In Appendix~\ref{appendix_greedy_definitions}, we evaluate the performance of different greedy algorithms. This simulation study suggests that RESIT-greedy achieves better accuracy than standard greedy or RESIT methods, but at the cost of slightly higher computational complexity, making it a practical choice for moderately sized graphs ($d<10$). 
    \item In Appendix~\ref{Simulations_seq_approach}, we empirically evaluate the Sequential approach, compared to oracle choice of $F$. 
\end{itemize}

\textbf{Implementation details.}  In all experiments, we estimate ${\theta}(X_i)$ using Generalized Additive Models (GAM, \cite{Wood2}). We use the HSIC as an independence test and approximate the null distribution with a gamma distribution
in order to obtain p-values \citep{Kernel_based_tests}.  In Section~\ref{Section_simulations_Pareto}, we use the conservative estimate in Algorithm~\ref{Algorithm1}; elsewhere, for comparability with other methods, we use the Forced/Score-based estimate (recall that the forced estimate is a special case of the score-based estimate when $d=2$). For the sequential approach, the exact version is applied with the exact CPCM algorithm, the fast version with RESIT, and the greedy version with RESIT-greedy or edge-greedy algorithms.

\subsection{Robustness against a misspecification of F}
\label{Section_simulations_robustness}

We consider the structural model $X_1 \to X_2$, where $X_1 \sim N(2,1)^+$ (Gaussian truncated to $x > 0$) and
\begin{equation}
    \label{asfdae}
    X_2 \mid X_1 \sim \mathrm{Exponential}\big(\theta(X_1)\big),
\end{equation}
for some non-negative function $\theta$. This corresponds to model~(\ref{BCPCM}) with $F$ equal to the exponential distribution, a special case of the Gamma family with a fixed shape parameter.

The simulation evaluates how misspecifying $F$ affects causal graph estimation. We generate $n = 1000$ samples from~\eqref{asfdae}, apply our framework for several candidate families, and record the proportion of correctly identified directions (using Forced Algorithm~\ref{Algorithm1}), averaged over 100 repetitions. 

Results are shown in Table~\ref{table_simulations_about_misspecified_F}. The method is robust when $F$ resembles the exponential distribution in support and tail behavior, but accuracy drops sharply for e.g. $F = \text{Gaussian}$. Surprisingly, $F = \text{Gamma}$, the correctly specified family, performs relatively poorly due to over-parameterization: the reverse model $X_2 \to X_1$ can often fit well with two parameters. Using three-parameter families typically reduces accuracy to about $50\%$, no better than a coin toss, showing the importance of controlling model complexity in causal discovery.

\begin{table}[ht]
\centering
\renewcommand{\arraystretch}{1.15}
\begin{tabular}{l c S S S S}
\toprule
\textbf{Family $F$} & {\#param} &
{$\theta(x) = \mathrm{random}$} &
{$\theta(x) = x$} &
{$\theta(x) = x^2 + 1$} &
{$\theta(x) = e^x / 2$} \\
\midrule
\rowcolor{RowAlt}
Exponential (oracle)                  & 1 &  0.99 &  0.98 & 0.99 &  1.00 \\
Gamma, fixed scale                    & 1 & 0.96 & 0.96 & 1.00 &  0.99 \\
\rowcolor{RowAlt}
Pareto, shifted support                                 & 1 & 0.99 &  0.99 &  1.00 & 1.00 \\
Gumbel, fixed scale                    & 1 & 0.36 & 0.00 & 0.01 & 0.00 \\
\rowcolor{RowAlt}
Gaussian, fixed $\sigma$               & 1 & 0.01 & 0.00 & 0.01 & 0.00 \\
\midrule
Gamma                                  & 2 & 0.96 & 0.73 & 0.64 & 0.79 \\
\rowcolor{RowAlt}
Gumbel                                 & 2 & 0.68 & 0.87 & 0.91 & 0.97 \\
Gaussian                               & 2 & 0.69 & 0.07 & 0.32 & 0.29 \\
\bottomrule
\end{tabular}
\caption{Accuracy of CPCM estimations for different distribution families $F$, with the Exponential distribution as the ground truth. “Random” $\theta(x)$ is generated via Gaussian processes as detailed in  Appendix~\ref{Appendix_Section_simulations_Pareto}.}
\label{table_simulations_about_misspecified_F}
\end{table}

\subsection{Empirical validation of 
 Consequence~\ref{paretoidentifiability}, Proposition~\ref{consistency_proposition} and p-value distribution in Pareto model}
\label{Section_simulations_Pareto}

As in Consequence~\ref{paretoidentifiability}, we consider the CPCM Pareto model $X_1 \to X_2$ with
\begin{equation}\label{fwesef}
    p_{X_1}(x) \propto \frac{1}{[\log(x)+1] x^{2}}, \quad   
X_2 \mid X_1 \sim \mathrm{Pareto}\big(\theta(X_1)\big), \quad  
\theta(x) = x^\gamma \log(x) + 1,
\end{equation}
where $\gamma \in \mathbb{R}$ measures deviation from the unidentifiable case $(\gamma=0)$. For $\gamma > 0$, the causal graph should be identifiable; for $\gamma < 0$, $\theta$ is nearly constant and $X_1, X_2$ are almost independent.

Using $\gamma \in \{-2, -1, 0, 1, 2\}$, we apply the Conservative Algorithm~\ref{Algorithm1} with Pareto $F$ and sample size $n=500$. Note that Algorithm~\ref{Algorithm1} has five possible outcomes: 1) $X_1\indep X_2$, 2) $X_1\to X_2$, 3) $X_2\to X_1$, 4) ``unidentifiable setup'' (both directions appear to be plausible) and 5) ``Assumptions not fulfilled'' (neither direction appears to be plausible). 
 \begin{itemize}
     \item Table~\ref{Pareto_simulations1} in Appendix~\ref{Appendix_Section_simulations_Pareto} shows the results averaged over 100 repetitions. The results align with the theory: if \(\gamma = 0\), we typically estimate both directions to be plausible. If \(\gamma > 0\), we tend to estimate the correct direction \( X_1 \to X_2 \); if \(\gamma < 0\), we tend to estimate an empty graph since \( X_1 \) and \( X_2 \) are (nearly) independent.
     \item Figure~\ref{Pareto_histograms} in Appendix~\ref{Appendix_Section_simulations_Pareto} shows the distributions of the p-values from the independence test in Step 1b) of Algorithm~\ref{Algorithm1}, using data simulated with \(\gamma = 0\) and \(\gamma = 2\). The distribution of the p-values appears roughly uniform on $(0,1)$ in the \(\gamma = 0\) case and in the correct direction \( X \to Y \), while the p-values in the direction \( Y \to X \) are typically very close to $0$ in the identifiable setup  $\gamma=2$.
     \item Figure~\ref{sample_size_pareto} in Appendix~\ref{Appendix_Section_simulations_Pareto} shows an empirical validation of the consistency result from Proposition~\ref{consistency_proposition}. For a range of sample sizes \( n \), we generate the dataset using \(\gamma = 1\) and \(\gamma = 2\) as the hyperparameters, and we compute the percentage (out of 100 repetitions) of correctly estimated causal direction. The algorithm appears to achieve near-perfect performance for large sample sizes. 
 \end{itemize}

\subsection{Comparison with baseline methods: bivariate case}
\label{Section_simulations_Gaussian}
We compare our method with LOCI \citep{immer2022identifiability}, HECI  \citep{xu2022inferring}, RESIT \citep{Peters2014},  bQCD \citep{Natasa_Tagasovska}, IGCI with Gaussian and uniform reference measures \citep{IGCI} and Slope \citep{Slope}. As in \cite{reviewANMMooij}, we use the accuracy for forced decisions as our evaluation metric. Details can be found in  Appendix~\ref{Appendix_Section_simulations_Pareto}.

We consider seven benchmark datasets, described below. The first five datasets are taken directly from \cite{Natasa_Tagasovska} and described in Appendix~\ref{Appendix_Section_simulations_Pareto}. They consist of additive and location-scale Gaussian pairs of the form $X_2 = \mu(X_1) + \sigma(X_1) \varepsilon_2$, where $\varepsilon_2 \sim \mathcal{N}(0, 1)$. In each case, we generate $X_1 \sim N(0, \sqrt{2})$. 

\begin{enumerate}
    \item \textbf{LSg (Location-Scale Gaussian)}: Here, $\mu$ and $\sigma$ are nonlinear functions simulated using Gaussian processes.
    \item \textbf{LSs (Location-Scale Sigmoid)}: In this setup, $\mu$ and $\sigma$ are sigmoid functions.
    \item \textbf{ANMg (Additive Noise Model)}: Nonlinear additive noise models generated similarly to LSg, but with constant $\sigma(X_1) = \sigma \sim U(1/5, \sqrt{2/5})$.
    \item \textbf{ANMs (Additive Noise Model)}: Nonlinear additive noise models generated similarly to LSs, but with constant $\sigma(X_1) = \sigma \sim U(1/5, \sqrt{2/5})$.
    \item \textbf{MNs (Multiplicative Noise)}: Nonlinear multiplicative noise models generated similarly to LSs, but with $\mu(X_1) = 0$.
\end{enumerate}

In addition to the models from \cite{Natasa_Tagasovska}, we consider two more setups:

\begin{enumerate}
    \setcounter{enumi}{5}
    \item \textbf{POISg (Poisson Model)}: $X_2 \sim \text{Pois}\big(\lambda(X_1)\big)$, where $\lambda$ is generated using Gaussian processes similar to $\sigma$ in LSg. Observe that $X_2$ is discrete, creating error in some methods. 
    \item \textbf{PARg (Pareto Model)}: $X_2 \sim \text{Pareto}\big(\theta(X_1)\big)$, where $\theta$ is generated using Gaussian processes similar to $\sigma$ in LSg.
\end{enumerate}

For each of the seven setups, we simulate 100 pairs with $n=1000$ data points each. One realization of each model can be found in Figure \ref{Simulations2_plots} in Appendix~\ref{Appendix_simulations}. 

The results are presented in Table \ref{Table_Simulated_data_Gaussian}. We conclude that our estimator performs well across all considered datasets, effectively handling the mix of discrete, continuous and heavy-tailed variables. Under the Gaussian location-scale setups, it provides comparable results to LOCI and bQCD, which are specifically developed for such cases.
% --- Preamble additions (if not already present) ---
% \usepackage{booktabs}
% \usepackage{siunitx}
% \usepackage{xcolor,colortbl}
% \usepackage{makecell}
% \definecolor{RowAlt}{gray}{0.97}
% \sisetup{
%   table-number-alignment = center,
%   table-format=3.0,
%   detect-weight = true,
%   detect-family = true
% }

% \usepackage{booktabs}
% \usepackage{siunitx}
% \usepackage{xcolor,colortbl}
% \usepackage{makecell}
% \definecolor{RowAlt}{gray}{0.97}
% \sisetup{
%   table-number-alignment=center,
%   table-text-alignment=center,
%   table-format=3.0,
%   detect-weight=true,
%   detect-family=true
% }

\begin{table}[!ht]
\centering
\renewcommand{\arraystretch}{1.15}
\begin{tabular}{l
                S[table-format=3.0]
                S[table-format=3.0]
                S[table-format=3.0]
                S[table-format=3.0]
                S[table-format=3.0]
                S[table-format=3.0]
                S[table-format=3.0]}
\toprule
\textbf{Method} &
\textbf{ANMg} & \textbf{ANMs} & \textbf{MNs} &
\textbf{LSg} & \textbf{LSs} & \textbf{POISg} & \textbf{PARg} \\
\midrule
\rowcolor{RowAlt}
\makecell[l]{\textbf{CPCM} \footnotesize(Seq. choice)\\ \footnotesize Forced Algorithm 1}
  & \bfseries 100 & \bfseries 99 & \bfseries 88 & \bfseries 98 & \bfseries 92 & \bfseries 94 & \bfseries 90 \\
\textbf{LOCI} \footnotesize (NN H)
  & \bfseries 100 & \bfseries 100 & \bfseries 99 & \bfseries 91 & \bfseries 85 & 79 & 0 \\
\rowcolor{RowAlt}
\textbf{HECI}
  & \bfseries 98 & \bfseries 43 & \bfseries 29 & \bfseries 96 & \bfseries 54 & \multicolumn{1}{c}{\textemdash} & 100 \\
\textbf{ANM-RESIT}
  & \bfseries 100 & \bfseries 100 & 39 & 51 & 11 & 0 & 12 \\
\rowcolor{RowAlt}
\textbf{bQCD} \footnotesize(m=3)
  & \bfseries 100 & \bfseries 79 & \bfseries 99 & \bfseries 100 & \bfseries 98 & 97 & 34 \\
\textbf{IGCI} \footnotesize(Gauss)
  & 100 & 99 & 99 & 97 & 100 & 0 & 0 \\
\rowcolor{RowAlt}
\textbf{IGCI} \footnotesize(Unif)
  & 31 & 35 & 12 & 36 & 28 & 0 & 100 \\
\textbf{Slope}
  & 22 & 25 & 9 & 12 & 15 & 0 & 100 \\
\bottomrule
\end{tabular}
\caption{Accuracy (\%) of different estimators on the simulated datasets. Similar results (excluding the first row and the last two columns) can also be found in \cite{immer2022identifiability}. Bold entries indicate cases where high accuracy is expected because the data-generating mechanism aligns with the method’s modeling assumptions.}
\label{Table_Simulated_data_Gaussian}
\end{table}

\subsection{Comparison with baseline methods: multivariate case}
\label{Section_simulations_multivariate}
We compare our proposed CPCM method with several widely used causal discovery algorithms implemented in \texttt{R}. These include LINGAM \citep{Lingam}, ANM (RESIT) \citep{Peters2014,hoyer2009}, PC (constraint-based method that tests for conditional independences, \cite{pcalg_package}), and GES (score-based algorithm that greedily searches over equivalence classes of DAGs using the BIC score, \cite{Ramsey2016}). We also include a random baseline that selects a DAG uniformly at random from the space of all DAGs on $d$ nodes. Many other methods, such as NOTEARS \citep{zheng2018dags} or DAG-GNN \citep{yu2019dag}, would also be appropriate comparisons. However, we restrict our evaluation to methods directly available in \texttt{R} to ensure consistency and reproducibility within our implementation framework. More details can be found in Section~\ref{Appendix_Section_simulations_Pareto}. 

We generate data from 4 different scenarios:
\begin{itemize}
    \item \textbf{Linear (non-gaussian)}: $X_{i}=\sum_{k\in pa(i)}X_k + \varepsilon_k$, where $\varepsilon_k\overset{i.i.d}{\sim} Exp(1)$,
    \item \textbf{Nonlinear ANM}: $X_{i}\sim N(\mu(\textbf{X}_{pa_i}), 1)$ where $\mu(\textbf{X}_{pa_i}) = \sum_{k\in pa(i)}sin(X_k) + \frac{1}{2}X_k + \varepsilon_k$, where $\varepsilon_k\overset{i.i.d}{\sim} N(0,1)$
     \item \textbf{CPCM(Exp)}: $X_{i}\sim Exp(\lambda(\textbf{X}_{pa_i}))$, where $\lambda(\textbf{X}_{pa_i}) = \sum_{k\in pa(i)}|X_k|\vee 0.1$,
     \item \textbf{CPCM(Exp, Gauss)}: each node is, with probability $\frac{1}{2}$,  generated either according to the nonlinear ANM case or the $CPCM(Exp)$ case. 
\end{itemize}

The underlying DAG $\mathcal{G}$ is generated uniformly at random by sampling a random ordering of the $d$ variables and including each of the $\tfrac{d(d-1)}{2}$ admissible edges with probability $\tfrac{2}{d-1}$, resulting in an expected total of $d$ edges (following \cite{Peters2014}). We report results for $d=5$; results for $d=10$ are similar and omitted for brevity. For each of the four scenarios, we simulate 50 graphs with $n = 1000$ data points each. We compare the methods using the Structural Intervention Distance (SID, \cite{peters2014structuralinterventiondistancesid}). For fairness, undirected edges in the estimated graph are counted as correct if the true edge exists in either direction.

In the linear case, it performs nearly as well as LINGAM, which is tailored to linear non-Gaussian models. In the nonlinear ANM scenario, it is only slightly less accurate than ANM \citep{Peters2014}. The comparison between RESIT and its greedy variant highlights that in lower dimensions, RESIT-greedy tends to yield better accuracy. Most importantly, in the non-Gaussian $CPCM(F)$ settings, our method clearly outperforms all baselines, underscoring its robustness and suitability for non-additive and heterogeneous functional forms, where standard approaches often fail.

\begin{table}[ht]
\centering
\renewcommand{\arraystretch}{1.15}
\begin{tabular}{>{\raggedright\arraybackslash}p{5.7cm}|
                S[table-format=2.2]
                S[table-format=2.2]
                S[table-format=2.2]
                S[table-format=2.2]}
\toprule
\textbf{Method / Scenario} \footnotesize(Case $d=5$) &
\multicolumn{1}{c}{\makecell{\textbf{Linear}\\\textbf{model}}} &
\multicolumn{1}{c}{\makecell{\textbf{Nonlin.}\\\textbf{ANM}}} &
\multicolumn{1}{c}{\makecell{\boldmath$CPCM(F_1)$\\ \footnotesize$F_1=\mathrm{Exp}$}} &
\multicolumn{1}{c}{\makecell{\boldmath$CPCM(F_1,F_2)$\\ \footnotesize$F_1=\mathrm{Exp},\,F_2=\mathrm{Gauss}$}} \\
\midrule
\rowcolor{RowAlt}
\textbf{CPCM} \footnotesize(Seq. app., RESIT-greedy)                                  & 0.22 & 0.72 & \bfseries 2.16 & \bfseries 0.92 \\
\textbf{LINGAM}                                & \bfseries 0.12 & 4.72 & 4.52 & 5.51 \\
\rowcolor{RowAlt}
\textbf{ANM} \footnotesize(RESIT)              & 1.10 & 1.34 & 12.21 & 6.21 \\
\textbf{ANM} \footnotesize(RESIT-greedy) & \bfseries 0.12 & \bfseries 0.66 & 11.81 & 6.84 \\
\rowcolor{RowAlt}
\textbf{PC} \footnotesize(gaussCItest)         & 1.90 & 2.20 & 4.28 & 3.15 \\
\textbf{PC} \footnotesize(HSIC)         & \bfseries 0.12 & 1.50 &  2.32 & 1.92   \\ \rowcolor{RowAlt}
\textbf{GES} \footnotesize(Gaussian score)     & 1.96 & 2.34 & 4.24 & 3.15 \\
\textbf{Random}                                & 7.20 & 7.58 & 7.50 & 7.57 \\
\bottomrule
\end{tabular}
\caption{Average SID over 100 repetitions between estimated and true graphs under different methods and scenarios. Bold values are the best (lowest) across methods.}
\label{tab:sid_summary}
\end{table}

\subsection{Illustration using a motor insurance dataset}
\label{Section7}

Understanding the causal relationships between driver and vehicle characteristics is a key step in insurance analytics, as it informs both risk assessment and premium setting.
We demonstrate the advantages of CPCM using a subset of the French MTPL motor insurance dataset \citep{sarpa2025freMTPL2}, restricted to four variables:  
``ClaimNb'' records the number of claims during the policy period (integer between 1 and 16) and lending itself naturally to a Poisson model.  
``VehPower'' and ``VehAge'' are vehicles engine power and age, and ``Exposure'' is the duration that the policy was active.  
Exponential/Gamma distribution is a natural model for variables ``VehPower'', ``VehAge'', and ``Exposure'' due to their exponential-type shapes. Since the dataset contains almost a million observations, we focus on a subset of the first $n = 1000$ records; results based on different random subsamples are provided in Appendix~\ref{Appendix_Section_simulations_Pareto}.

Fitting an LINGAM/ANM to this dataset is problematic for two reasons.  
First, the discrete nature of ``ClaimNb'' violates the continuous additive noise assumption used by most ANM methods \citep{Peters_discrete}.  
Second, the other variables are skewed, non-Gaussian and heteroskedastic, making  CPCM much more natural choice.

Using the sequential family-selection approach within CPCM and the RESIT-greedy algorithm, we obtain the graph shown in Figure~\ref{Fig_motor}. In this case, family $\mathscr{S}_1$ was selected, as the resulting graph was marked as {plausible}. In contrast, applying LiNGAM or ANM-RESIT yields markedly different structures: LiNGAM suggests $\mathrm{Exposure} \rightarrow \mathrm{VehAge} \rightarrow \mathrm{VehPower}$, while ANM-RESIT recovers only $\mathrm{VehAge} \rightarrow \mathrm{VehPower}$.  

Although the ground truth for this dataset is not perfectly clear, the results provide evidence that CPCM can recover more plausible causal structures in mixed-type insurance data than existing ANM-based or linear approaches.  
By accommodating discrete outcomes, non-Gaussian noise, heteroskedasticity and heavy-tails, CPCM produces graphs that align better with domain knowledge and avoid the misspecifications that can arise in more restrictive frameworks.  
On the other hand, CPCM can be computationally demanding for larger datasets, especially when sequential family selection is combined with many candidate parent sets.  
Furthermore, as with most observational causal discovery methods, CPCM relies on the assumption of causal sufficiency, which can be restrictive in practical applications.

\begin{figure}[h]
\centering
\begin{tikzpicture}[
    every node/.style={draw, rectangle, rounded corners, minimum width=1.7cm, minimum height=0.9cm, align=center},
    >=stealth,
    node distance=1.2cm % reduced vertical gap
]

% Nodes
\node (VehAge) {VehAge};
\node[right=of VehAge] (VehPower) {VehPower};
\node[right=of VehPower] (Exposure) {Exposure};
\node[below=0.9cm of VehPower] (ClaimNb) {ClaimNb};

% Edges
\draw[->] (VehAge) -- (VehPower);
\draw[->] (VehAge.north east) .. controls +(1.0,0.8) and +(-1.0,0.8) .. (Exposure.north west);
\draw[->] (VehPower) -- (Exposure);
\draw[->] (VehPower) -- (ClaimNb);
\draw[->] (Exposure) -- (ClaimNb);

\end{tikzpicture}
\caption{Causal graph estimated by CPCM on the French motor insurance dataset subset.}
\label{Fig_motor}
\end{figure}

\section{Conclusion and future research}

We introduced a new family of models for causal inference called Conditionally Parametric Causal Models (CPCM), designed to flexibly accommodate a broad range of variable types and distributional forms. Our primary theoretical contributions lie in establishing the identifiability conditions for the causal structure within this framework. Specifically, we have demonstrated that the bivariate $CPCM(F)$ models are identifiable, with exceptions arising only when the parameters of $F$ take the form of a linear combination of its sufficient statistics. Furthermore, we have provided detailed characterizations of identifiability across various cases such as Gaussian, Poisson, and Pareto, significantly broadening the scope of identifiable models beyond existing literature. We also explained the multivariate extensions of these results. 

We complement these results with two consistent estimation algorithms for CPCM-based causal graph recovery. Experiments show competitive performance in Gaussian location–scale models, while retaining the ability to operate in much broader distributional settings, including heavy-tailed, continuous, discrete or even a mixture of these. 

CPCM also connects naturally to invariant causal prediction (\cite{Peters_invariance, Kook03042025}), offering promising directions for distribution-aware causal feature selection, as the framework of target-variable causal modeling provides a natural environment for embedding the CPCM ideology \citep{Bodik_biometrika}. Extensions to time series settings \citep{Bodik2024, Assaad_time_series} or uncertainty quantification under distribution shift \citep{liu2021learning, bodik2025crossworldassumptionrefiningprediction} would further broaden its applicability. Integrating these ideas and validating CPCM in diverse applied domains \citep{Gamella2025CausalChambers} are key avenues for future work.

\section*{Conflict of interest and data availability}
The open-source implementation of the methods discussed in this manuscript together with the data used can be found in the supplementary package or at \url{https://github.com/jurobodik/Causal_CPCM.git}.

The authors declare that they have no known competing financial interests or personal relationships that could have appeared to influence the work reported in this paper.

\section*{Acknowledgments}
This study was supported by the Swiss National Science Foundation under grant number 201126.

% ----------------------------------------------------------
% ACKS (if any) – JMLR wants them before references
% ----------------------------------------------------------
% \acks{This research was supported by ...}

% ----------------------------------------------------------
% APPENDIX
% ----------------------------------------------------------

\begin{center}
\Large \textbf{Appendix}
\end{center}
This appendix is organized as follows:
\begin{itemize}
    \item Appendix~\ref{Appendix_A} provides a detailed definition and assumptions concerning the Exponential family, which were omitted from the main text for clarity. It also introduces the notion of F-suitability of an estimator and includes an in-depth discussion and proof of Proposition~\ref{consistency_proposition}.
    \item Appendix~\ref{Appendix_simulations} presents additional experiments and elaborates on certain implementation details.
    \item Appendix~\ref{SectionProofs} contains all theoretical results along with their corresponding proofs.
\end{itemize}

\renewcommand\thesection{A}
\section{Exponential family and Proposition \ref{consistency_proposition}}
\label{Appendix_A}

\subsection{Exponential family}
\label{appendix_exponential_family}

The exponential family is a set of probability distributions whose probability density function can be expressed in the following form:
\begin{equation}\tag{\ref{Exponential family of distributions}}
f(x;\theta) = h_1(x)h_2(\theta)\exp\big[\sum_{i=1}^q\theta_iT_i(x)\big],
\end{equation}
where $h_1, T_i$ are real functions and $h_2:\mathbb{R}^q\to\mathbb{R}^+$ is a vector-valued function. We call $T_i$ a \textit{sufficient} statistic, $h_1$ a base measure, and $h_2$ a normalizing (or partition) function.  

Often, form (\ref{Exponential family of distributions}) is called a canonical form and $$f(x;\theta) = h_1(x)h_2(\theta)\exp\big[\sum_{i=1}^qh_{3,i}(\theta)T_i(x)\big],$$ where  $h_{3,i}:\mathbb{R}^q\to\mathbb{R}, i=1, \dots, q$, is called its \textit{reparametrization} (natural parameters are a specific form of the reparametrization). We always work only with a canonical form (attention for Gaussian distribution, where the standard form is not in the canonical form). 

Numerous important distributions lie in the exponential family of distributions, such as Gaussian, log-normal, Poisson, Pareto (with fixed support), Weibull, chi-squared, multinomial, Binomial, Gamma, and Beta distributions, to name a few. 

It is important to note that functions in (\ref{Exponential family of distributions}) are \textit{not} uniquely defined. For example, $T_i$ is unique up to a linear transformation. 

The support of $f$ is fixed and does not depend on $\theta$. Potentially, $T_i$ and $h_1$ do not have to be defined outside of this support; however, we typically overlook this fact (or possibly define $h_1(x) = T_i(x) = 0$ for $x$ where these functions are not defined). We additionally assume that the support is nontrivial in the sense that it contains at least two distinct values.

Without loss of generality, we assume that $q$ is minimal in the sense that $f(x; \theta)$ cannot be expressed using only $q - 1$ parameters. The sufficient statistics $T_1, \dots, T_q$ are then linearly independent in the following sense: there exist points $x_1, \dots, x_q \in \operatorname{supp}(f)$ such that the matrix 
\begin{equation}\label{eq2431087}
\begin{pmatrix}
T_{1}(x_1) & \cdots & T_{{q}}(x_1) \\
\cdots & \ddots & \cdots \\
T_{1}(x_q) & \cdots & T_{{q}}(x_{q}) 
\end{pmatrix} 
\end{equation}
has full rank. Moreover, $T_1, \dots, T_q$ are affinly independent in the following sense: there exist $y_0, y_1, \dots, y_q\in supp(f)$, such that a matrix 
\begin{equation}\label{eq145151}
\begin{pmatrix}
T_{1}(y_1) - T_1(y_0) & \cdots & T_{{q}}(y_1) -T_q(y_0)\\
\cdots & \ddots & \cdots \\
T_{1}(y_q) -T_1(y_0) & \cdots & T_{{q}}(y_{q}) -T_q(y_0)
\end{pmatrix} 
\end{equation}
has full rank. In this paper (Lemma~\ref{PomocnaLemma1}) we assume affine independence of $T_1, \dots, T_q$, i.e., that condition~\eqref{eq145151} holds.

Since the notions of linear and affine independence used here are nonstandard, we illustrate them with a simple example. Let $T_1(x) = x$ and $T_2(x) = x^2$, corresponding to the sufficient statistics of a Gaussian distribution. Then matrices \eqref{eq2431087} and \eqref{eq145151} become: 
\begin{equation*}
M_1=\begin{pmatrix}
x_1 &  x_1^2\\
x_2  & x_{2}^2
\end{pmatrix} , \,\,\,\,
M_2=\begin{pmatrix}
y_1-y_0 &  y_1^2 -y_0^2\\
y_2 -y_0 &  y_2^2 -y_0^2
\end{pmatrix} ,
\end{equation*}
both of which are full-rank for the choices $(x_1, x_2) = (1,2)$ and $(y_0, y_1, y_2) = (0,1,2)$, for instance. The same analogously holds for all distributions considered in this paper. 

\subsection{F-suitability: Proposition~\ref{consistency_proposition}, Consequence~\ref{consequence_consistency} and Lemma~\ref{Lemma3} }
\label{Appendix_consistency}

\subsubsection{Suitability of an estimator}
In the following, we define an $F$-suitable estimator for a given distribution function $F$ with parameters $\theta$. This is a modification of the concept of a suitable estimator for the conditional expectation discussed in \cite[Appendix A.2]{reviewANMMooij}. In case when $F$ is location-distribution (such as Gaussian distribution with fixed variance), our results fully align with \citep{reviewANMMooij}.

Let $(X_{i}, Y_{i})_{i=1}^n$ be a random sample from $(X, Y)$. We say that the estimator $\hat{\theta}$ is \textbf{F-suitable for $X \to Y$} if the following conditions are satisfied:

\begin{itemize}
    \item \textbf{(Existence of a point-wise limit)} There exists $\theta$ such that $\hat{\theta}(x) \overset{P}{\to} \theta(x)$ as $n\to\infty$ for all $x\in supp(X)$. Moreover, if it is possible to write  $Y = F^{-1}(\varepsilon_2; \tilde{\theta}(X))$, with $X\indep \varepsilon_2\sim Unif(0,1)$, then this limit is equal to $\theta= \tilde{\theta}$. 
    \item \textbf{(Weak residual consistency)} It holds that
\begin{equation}
    \label{gdert}
    \lim_{n\to\infty}\mathbb{E}_{}\left( \frac{1}{n}\sum_{i=1}^n(\varepsilon_i-\hat{\varepsilon}_i)^2 \right) = 0,
\end{equation}
where $\varepsilon_i := F(Y_{i}; \theta(X_{i}))$ and $\hat{\varepsilon}_i := F(Y_{i}; \hat{\theta}(X_{i})), i=1, \dots, n$ and where the expectation is taken with respect to the distribution of the random sample.
\end{itemize}

We say that the estimator $\hat{\theta}$ is \textbf{F-suitable} if it is F-suitable for both $X \to Y$ and $Y \to X$. We simply write that  $\hat{\theta}$ is ``suitable'' if $F$ is evident from the context. 

\subsubsection{Literature review - which estimators are suitable?}

\textbf{Location family:} when $F$ is location-family distribution, such as a Gaussian distribution with fixed variance, property \eqref{gdert} reduces to classical notion of a weak universal consistency:
\(
\lim_{n\to\infty}\mathbb{E}[||\theta(X) - \hat{\theta}(X)||^2] = 0.
\) Such consistency have been already discussed also in relation to causal discovery \citep{zhang2015estimation,uemura2022multivariate, keropyan2023rank}. 
The weak universal consistency has been established for various estimators under appropriate smoothness assumptions on $\theta$. Examples of such estimators include:
\begin{itemize}
    \item Kernel estimators (see Theorem 5.1 in \cite{Gyorfi2002})
    \item Smoothing spline GAM estimators (see Chapter 14.2 in \cite{Gyorfi2002}, \cite{claeskens2009} or \cite{Wood2})
    \item Neural networks (see Theorem 16.1 in \cite{Gyorfi2002} or \cite{drews2022universalconsistencyoverparametrizeddeep, Heiss2024}).
\end{itemize}
Importantly, these consistency results apply regardless of the causal direction. In the anti-causal direction, we have \( X = \theta(Y) + \varepsilon \), where \( \theta(Y) = \mathbb{E}[X \mid Y] \) and \( \varepsilon \not\indep Y \), \( \mathbb{E}[\varepsilon \mid Y] = 0 \). Note that  \( \varepsilon \not\indep Y \) holds if and only if the model is identifiable. For such case, the same form of weak consistency remains valid for the estimators listed above (again, under appropriate smoothness assumptions).

\textbf{Location-scale family:} when $F$ is a location-scale distribution (e.g. Gaussian), several consistency results has also been established for various estimators under appropriate smoothness and regularity assumptions, see e.g. \cite{10.1093/biomet/85.3.645}, or \cite{immer2022identifiability, Siegfried02102023, Le_Smola}. 

\textbf{More general families:} Consistency results for non-Gaussian families are less explored in nonparametric settings, with most existing literature focusing on empirical evidence. GAMLSS \citep{GAMLSS} offers a broad class of estimators, with \cite{GAMLSS_webpage} presenting extensive empirical evidence on simulations and hundreds of real-data examples demonstrating consistency. A few theoretical results are avaliable for GAM estimators; e.g. Theorem 1 in \cite{GAMSamworthConsistency} establishes its almost sure universal consistency for a general one-dimensional exponential family of distributions for \( Y \mid X \), assuming certain smoothness and convexity/monotonicity conditions on \( \theta \). \cite{Wood2} discusses general framework for smoothing parameter estimation for models with regular likelihoods constructed in terms of unknown smooth functions of covariates.   \cite{mammen1997} discusses the consistency under partial linearity assumption.

\subsubsection{HSIC score}

For the definition and discussion of the HSIC score, see \cite[Appendix A.1]{reviewANMMooij}. We use the same notation and implicitly use bounded non-negative Lipschitz-continuous kernels such that their product is characteristic, as in the ``Data recycling'' scenario in \cite[Corollary 21]{reviewANMMooij}.

\subsubsection{Proofs of Proposition~\ref{consistency_proposition}, Consequence~\ref{consequence_consistency} and Lemma~\ref{Lemma3}}

\begin{customprop}{\ref{consistency_proposition}}
Let $(X_1, X_2)$ follow an identifiable $CPCM(F_1, \dots, F_k)$ with DAG $\mathcal{G}$. Then, our score-based algorithm presented in Section~\ref{Section_score_based_algorithm} is consistent, meaning that
$$\hat{\mathcal{G}} \overset{P}{\to}\mathcal{G}\,\,\,as\,\,n\to\infty,$$
given that we employ a “suitable” estimation procedure for the estimator  $\hat{\varepsilon}_i$, we use HSIC score as our choice of $\rho$ and consistent estimates of $S_i$. 
\end{customprop}

\begin{proof}
The proof mostly aligns with the proof of Corollary 21 in \cite{reviewANMMooij}. We use the notation $X = X_1$ and $Y = X_2$.

If $X \indep Y$, then $\hat{\mathcal{G}}$ converges to an empty graph, since any other graph has a score of at least $\lambda$ and $\hat{\varepsilon}_1, \dots, \hat{\varepsilon}_d$ are independent by definition. From now on, without loss of generality, let $Y = F_j^{-1}(\varepsilon_2; \theta(X)), \varepsilon_2\indep X$ for some  $j \in \{1, \dots, k\}$. Denote by $\mathbf{x} = (x_1, \dots, x_n)$ and $\mathbf{y} = (y_1, \dots, y_n)$ the observed data. In the following, we compare the asymptotic scores of graphs $X\to Y$ and $Y\to X$. 

\textbf{Graph} $X \to Y$: Define the ``population residual'' \(E_Y := F_j(Y; \theta(X))\) and the ``estimated residual'' $\hat{\varepsilon}_Y^i = (F_i(y_l; \hat{\theta}(x_l)))_{l=1}^n$, where $i=1, \dots, k$. For the true $i = j$, we omit the superscript and write $\hat{\varepsilon}_Y = (F_j(y_l; \hat{\theta}(x_l)))_{l=1}^n$. By construction, \(X \perp\!\!\!\perp E_Y\) which implies \(\text{HSIC}(X, E_Y) = 0\) (due to Lemma 12 in \cite{reviewANMMooij}). 

Now, since the estimator $\hat{\theta}$ satisfies \eqref{gdert}, we can use the argument presented in \cite[Theorem 20]{reviewANMMooij}, and obtain $\widehat{HSIC}(\mathbf{x}, \hat{\varepsilon}_y) \xrightarrow{P} \text{HSIC}(X, E_Y)$. 

Since $\hat{S}_2$ is consistent, we can find $n_0$ such that for all $n\geq n_0$ holds $P(j\in\hat{S}_2)\geq 1-\delta$ for given $\delta>0$. 

Putting everything together, with probability larger than $1-\delta$ holds

\begin{equation*}
    \begin{split}
        s(X \to Y) &= \min_{j_2 \in \hat{S}_2}\rho(\mathbf{x}, \hat{\varepsilon}_y^{j_2}) + \lambda \leq \rho(\mathbf{x}, \hat{\varepsilon}_y) + \lambda \\
        &= \widehat{HSIC}(\mathbf{x}, \hat{\varepsilon}_y) + \lambda \xrightarrow{P} \text{HSIC}(X, E_Y) + \lambda = \lambda.
    \end{split}
\end{equation*}
By sending $\delta\to 0$, we obtain $  s(X \to Y)\xrightarrow{P} \lambda $. 

\textbf{Graph} $Y \to X$: Define the ``population residual'' \(E_X^j := F_j(X; \theta_j(Y))\) and the ``estimated residual'' $\hat{\varepsilon}_X^j = (F_j(x_i; \hat{\theta}_j(y_i)))_{i=1}^n$, where $\theta_j$ is the limit of $\hat{\theta}_j$ defined in the definition of $F_j$-suitability. 

Due to the assumption of identifiability, $Y\not\indep E_X^j$ for all $j\in\{1, \dots, k\}$. Therefore, using Lemma 12 in \cite{reviewANMMooij}, we have \(\text{HSIC}(Y, E_X^j) > 0\). Again, since the estimator $\hat{\theta}$ satisfies \eqref{gdert}, we can use the argument presented in \cite[Theorem 20]{reviewANMMooij}, and obtain $\widehat{HSIC}(\mathbf{y}, \hat{\varepsilon}_X^j) \xrightarrow{P} \text{HSIC}(Y, E_X^j)$.

Putting everything together
\begin{equation*}
    \begin{split}
        s(Y \to X) &= \min_{j \in \hat{S}_1}\rho(\mathbf{y}, \hat{\varepsilon}_x^j) + \lambda \geq \min_{j \in \{1, \dots, k\}}\widehat{HSIC}(\mathbf{y}, \hat{\varepsilon}_X^j) + \lambda \\
        &  \xrightarrow{P} \min_{j \in \{1, \dots, k\}}\text{HSIC}(Y, E_X^j) + \lambda > \lambda, \,\,\,\,\,\,\,\,\,\,\,\text{as}\,\,n\to\infty.
    \end{split}
\end{equation*}
Therefore, the score for the correct direction is asymptotically smaller than that for the wrong causal direction as $n \to \infty$, hence the procedure is consistent.
\end{proof}

\begin{customconsequence}{\ref{consequence_consistency}}
Let $(X_1, X_2)$ follow an $CPCM(F)$ model with DAG $\mathcal{G}$, for some $F\in\mathscr{S}_1\cup \mathscr{S}_2$. Assume the conditions of Proposition~\ref{consistency_proposition} hold: namely, identifiability of $CPCM(\mathscr{S}_1 \cup \mathscr{S}_2)$, a suitable estimation procedure, and the use of the HSIC score. Then, our score-based algorithm, with the collection $\{F_1, \dots, F_k\}$ chosen via the Sequential approach (Exact or Fast
version), is consistent:
$\hat{\mathcal{G}} \overset{P}{\to}\mathcal{G}$, as $n\to\infty$.
\end{customconsequence}

\begin{proof}
If $F \in \mathscr{S}_1$, the result follows directly from Proposition~\ref{consistency_proposition}. Similarly, if $F \in \mathscr{S}_2$ and the Sequential approach returns $\mathscr{S}_1 \cup \mathscr{S}_2$, the proposition again directly applies.

It remains to consider the case where $F \in \mathscr{S}_2$ but the Sequential approach returns only $\mathscr{S}_1$. We will show that, as $n \to \infty$, this event occurs with probability tending to zero. In other words, when $F \in \mathscr{S}_2$, the Sequential approach will return $\mathscr{S}_1 \cup \mathscr{S}_2$ with probability tending to one.

If $X \indep Y$, then $\hat{\mathcal{G}}$ converges to an empty graph regardless of the choices of $F$. Hence, without loss of generality, assume non-empty causal graph. Consider graph $\mathcal{G} = X_1 \to X_2$, and take some $F_1 \in \mathscr{S}_1$ (that is, wrong distribution $F_1\neq F$). In this case, following the notation of the proof of Proposition~\ref{consistency_proposition}, we have
\begin{equation}
\label{eq23451111}
    \widehat{\text{HSIC}}(\mathbf{x}_1, \hat{\varepsilon}_{X_2}^{1}) \xrightarrow{P} \text{HSIC}(X_1, E_{X_2}^{1}) > 0,
\end{equation}
where $\hat{\varepsilon}_{X_2}^{j} := (F_j\big(X_2; \hat{\theta}_j(X_1)\big))_{l=1}^n$ is the vector of residuals obtained by applying a ``suitable'' estimation procedure, and the ``population residual'' \(E_{X_2}^{j} := F_j(X_2; \theta(X_1))\). This convergence follows from the same reasoning as in the `$Y \to X$' case in the proof of Proposition~\ref{consistency_proposition}. 

In particular, \eqref{eq23451111} implies that the variance of the empirical HSIC statistic vanishes:
\[
\operatorname{Var} \left( \widehat{\text{HSIC}}(\mathbf{x}_1, \hat{\varepsilon}_{X_2}^{1}) \right) \to 0 \quad \text{as } n \to \infty.
\]
Therefore, according to the construction of confidence intervals for the HSIC test in~\cite[Section~3.2.2]{Kernel_based_tests}, the $(1 - \alpha)$ confidence interval will eventually lie entirely within an interval of the form $[\mathrm{HSIC}(X_1, E_{X_2}^{1}) - \delta, \mathrm{HSIC}(X_1, E_{X_2}^{1}) + \delta]$, for arbitrarily small $\delta > 0$. Since the limiting HSIC value is strictly positive, this interval will exclude $0$ for large enough $n$, yielding p-values smaller than $\alpha$. As a result, the DAG $\mathcal{G}$ will be rejected as plausible.

The same argument applies to the case $\mathcal{G} = X_2 \to X_1$ and for any $F_j \in \mathscr{S}_1$, $j = 1, \dots, k$. Therefore, for sufficiently large $n$, no DAG will be deemed plausible, and the Sequential approach will return the full set $\mathscr{S}_1 \cup \mathscr{S}_2$ with high probability.
\end{proof}

\begin{customlem}{\ref{Lemma3}}
Let $(X_1, X_2)$ follow an unidentifiable $CPCM(F_1, \dots, F_k)$ with DAG $\mathcal{G}$. Then, for sufficiently large $n$, Algorithm~\ref{Algorithm1} outputs “Unidentifiable case” with probability at least $1 - 2\alpha$, given that we employ a HSIC independence test in step 1b) of Algorithm~\ref{Algorithm1}, and assume access to an oracle regression estimator $\hat\theta = \theta$. 
\end{customlem}

\begin{proof}
If the true model is unidentifiable, independence holds in both directions: $X_1 \indep \varepsilon_2$ and $X_2 \indep \varepsilon_1$. Given an oracle regression estimator $\hat\theta = \theta$, we have $\hat\varepsilon_i = \varepsilon_i$ for $i = 1, 2$. Therefore, Algorithm 1 reduces to performing two HSIC independence tests and returning “Unidentifiable case” if both tests fail to reject the null.

The HSIC test has asymptotically correct level under the null hypothesis of independence (\cite{Kernel_based_tests}, Theorem 3.8). Thus, for sufficiently large sample size $n$, the probability that the HSIC test fails to reject the null in each direction is at least $1 - \alpha$. By the union bound, the probability that both tests fail to reject is at least $(1 - \alpha)^2 > 1 - 2\alpha$. Hence, Algorithm 1 returns “Unidentifiable case” with probability larger than $1 - 2\alpha$.
\end{proof}

\renewcommand\thesection{B}
\section{Experiments details and additional plots}
\label{Appendix_simulations}

\subsection{Exact, Naive-greedy, RESIT, and RESIT-greedy algorithms: definitions and comparison}
\label{appendix_greedy_definitions}
We consider the following algorithms for estimating the underlying causal graph from observational data using the CPCM score function~\eqref{score_definition1}:

\begin{itemize}
    \item \textbf{Exact search:} This algorithm evaluates the CPCM score for all DAGs on $d$ nodes and selects the one with the lowest score. Since the number of DAGs grows super-exponentially with $d$ (e.g., 29,281 DAGs for $d = 5$), exact search is computationally feasible only for graphs with $d \leq 4$.
    
    \item \textbf{Naive-edge-greedy:} Starting from an empty DAG, this algorithm iteratively explores neighboring DAGs by adding or removing a single edge. At each step, it selects the neighboring graph with the lowest CPCM score and replaces the current graph if the score improves. The procedure stops when no further improvement is possible. While simple and scalable, this greedy approach lacks theoretical guarantees and may get stuck in local minima, unless we assume some advanced notions of convexity over the space of all DAGs.

    \item \textbf{RESIT (Regression with Subsequent Independence Test):} RESIT first estimates a topological ordering by iteratively selecting the variable whose residual is least dependent on the remaining variables. In the second phase, it removes superfluous edges using conditional independence tests. See Algorithm~\ref{alg:resit-cpcm} for details. The procedure is computationally efficient and comes with statistical guarantees (see Lemma~\ref{thm:resit_consistency}). However, empirical performance in smaller sample sizes tends to be worse than that of greedy algorithms, particularly due to the accumulation of errors in the ordering phase and false positives (type I errors) in the Phase~2. 

    \item \textbf{RESIT-greedy:} This hybrid algorithm combines the topological ordering phase of RESIT with the edge-pruning phase of naive-greedy search. After estimating the ordering, it starts with a fully connected DAG consistent with the order and iteratively removes edges that lead to the largest improvement in the CPCM score, until no further improvement is possible. See Algorithm~\ref{Algorithm_RESIT_GREEDY}.
\end{itemize}

\citet{Peters2014} showed that, in the population case, the RESIT algorithm is consistent under identifiable additive noise models, assuming a consistent nonparametric regression method and a perfect independence oracle. The same reasoning applies directly to CPCM models.

\begin{lemma}[Consistency of RESIT under CPCM]
\label{thm:resit_consistency}
Let $\mathbf{X}$ be generated by an identifiable $CPCM(F_1, \dots, F_k)$ model with underlying DAG $\mathcal{G}_0$. Then, the RESIT algorithm, when applied with consistent estimators $\hat{\theta}_k$ and an independence oracle, is guaranteed to recover the true graph $\mathcal{G}_0$ from the distribution of $\mathbf{X}$.
\end{lemma}

\begin{proof}
A direct consequence of Theorem~34 in \citet{Peters2014}. Note that we implicitly assume the causal minimality condition for $CPCM(F_1, \dots, F_k)$ as stated in Definition~\ref{DefinitionCPCM}.
\end{proof}

\begin{algorithm}[H]
\caption{Regression with Subsequent Independence Test (RESIT; \cite{Peters2014}), modified for $CPCM(F_1, \dots, F_k)$}
\label{alg:resit-cpcm}
\KwIn{Random sample of $(X_1,\ldots,X_d)$; candidate families $F_1,\dots,F_k$}

\textbf{Pre-step (support gating).} 

For each node $i$, estimate
$\hat S_i := \{\,j  \in \{1,\dots,k\}:\ \mathrm{supp}(X_i)=\mathrm{supp}(F_j)\,\}$\tcp*{See Section~\ref{section5.1.1}. This filters discrete vs. continuous etc. }
If $\hat{S}_i = \emptyset$ return ``Assumptions not fulfilled''.

\medskip
\textbf{Phase 1: Determine a topological order $\pi$}

$M \gets \{1,\ldots,d\}$; $\pi \gets [\ ]$\;
\While{$M\neq\emptyset$}{
  \ForEach{$v \in M$}{
    \ForEach{$m \in \hat S_v$}{
      Fit parameters $\hat\theta_v^{(m)}$ in
      $X_v \mid X_{M\setminus\{v\}} \sim F_m(\,\theta_v(\cdot)\,)$\;
      Compute residuals
      $\hat\varepsilon_v^{(m)} := F_m\!\left(X_v;\ \hat\theta_v^{(m)}\!\big(X_{M\setminus\{v\}}\big)\right)$\;
      Compute dependence score 
      $s_v^{(m)} := \rho\!\left(\hat\varepsilon_v^{(m)}\,;\ X_{M\setminus\{v\}}\right)$
      \tcp*{$\rho$ small $\Leftrightarrow$ near independence. Implemention uses HSIC score}
    }
    $s_v \gets \min_{m\in\hat S_v} s_v^{(m)}$ \tcp*{Best fitting distribution $F_m$} 
  }
  $v^\star \gets \arg\min_{v\in M} s_v$ \tcp*{most source-like variable}
  $M \gets M\setminus\{v^\star\}$;\quad $\mathrm{pa}(v^\star) \gets M$\;
  prepend $v^\star$ to $\pi$
}

\medskip
\textbf{Phase 2: Prune superfluous edges}

\For{$t=2$ \KwTo $d$}{
  $v \gets \pi_t$;\quad $C \gets \mathrm{pa}(v)$\;
  \ForEach{$\ell \in C$}{
    \ForEach{$m \in \hat S_v$}{
      Fit $\hat\theta_{v,-\ell}^{(m)}$ in 
      $X_v \mid X_{C\setminus\{\ell\}} \sim F_m(\,\theta_v(\cdot)\,)$\;
      Compute residuals
      $\hat\varepsilon_{v}^{(m)} := 
      F_m\!\left(X_v;\ \hat\theta_{v,-\ell}^{(m)}\!\big(X_{C\setminus\{\ell\}}\big)\right)$\;
      Compute $p_\ell^{(m)} := \textsf{IndepP}\!\left(\hat\varepsilon_{v}^{(m)} \,;\ 
      \{X_{\pi_1},\ldots,X_{\pi_{t-1}}\}\right)$ \tcp*{p-value of the test of $H_0: \varepsilon_{v}^{(m)} \indep \{X_{\pi_1},\ldots,X_{\pi_{t-1}}\}$}
    }
    $p_\ell := \max_{m\in\hat S_v} p_\ell^{(m)}$\;
    \If{$p_\ell \ge \alpha$\text{ using }$\alpha=0.05$}{
      $C \gets C\setminus\{\ell\}$ \tcp*{remove $\ell$ if residuals are independent}
    }
  }
  $\mathrm{pa}(v) \gets C$
}
\KwOut{$(\mathrm{pa}(1),\ldots,\mathrm{pa}(d))$}
\end{algorithm}

\newpage
\subsubsection{Experiments: comparison of different greedy algorithms}
\label{appendix_greedy}
\textbf{Experimental setup:} We generate  random DAGs uniformly over $d$ nodes with $p$ edges, where $p \sim \mathrm{Exp}(1/d)$ and capped at $\frac{d(d-1)}{2}$. On average, each DAG contains approximately $d$ edges. For a given DAG $\mathcal{G}$, we simulate data from the $CPCM(F)$ model, $F=Exponential$, defined as follows:
$$
X_i \sim \mathrm{Exp}\left(\lambda(\mathbf{X}_{pa_i(\mathcal{G})})\right), \quad \text{with} \quad \lambda(\mathbf{X}_{pa_i(\mathcal{G})}) = \frac{1}{\sum_{j \in pa_i(\mathcal{G})} |X_j|} = \frac{1}{\mathbb{E}[X_i\mid \textbf{X}_{pa_i}]}.
$$
If $pa_i(\mathcal{G})=\emptyset$, then $X_i\sim N(0,1)$. Using $n = 1000$ samples, we estimate $\mathcal{G}$ using the score-based estimator with score defined in Equation~\eqref{multi_extention}, with a fixed function class $\mathscr{S}_1$ (rather than the sequential approach) to allow for a fair comparison in both accuracy and computational cost. We compare the Exact, Naive-greedy, RESIT, and RESIT-greedy methods.  We evaluate their performance using the Structural Intervention Distance (SID, \cite{peters2014structuralinterventiondistancesid}), computing $\mathrm{SID}(\mathcal{G}, \hat{\mathcal{G}})$ for each estimated graph.

\textbf{Results:} Figure~\ref{figure_greedy} presents the average normalized $\frac{SID}{d}$ over 50 repetitions.

\begin{itemize}
    \item The exact method achieves, unsurprisingly, the lowest SID, with the highest computational cost. 
    \item Both RESIT and naive-greedy exhibit similar performance. The greedy approach achieves slightly lower SID on average but requires slightly more computation time. Note that for $d\geq 8$, both methods perform as badly as Trivial algorithm (empty graph).
    \item RESIT-greedy serves as a middle ground: it significantly improves over RESIT/naive-greedy methods in terms of SID while incurring higher computational cost. 
\end{itemize}

These experiments highlight the trade-offs between statistical performance and computational efficiency among the evaluated methods. While the exact method yields the most accurate graph recovery, its scalability is limited. In contrast, RESIT and naive-greedy offer faster but less accurate alternatives, with performance deteriorating as graph complexity increases. RESIT-greedy provides a promising compromise, achieving lower SID than standard greedy methods at a moderate computational cost. \textbf{Overall, RESIT-greedy seems to be a practical choice in graphs with size} $4<d<10$.

\begin{figure}[]
\centering
\includegraphics[scale=0.62]{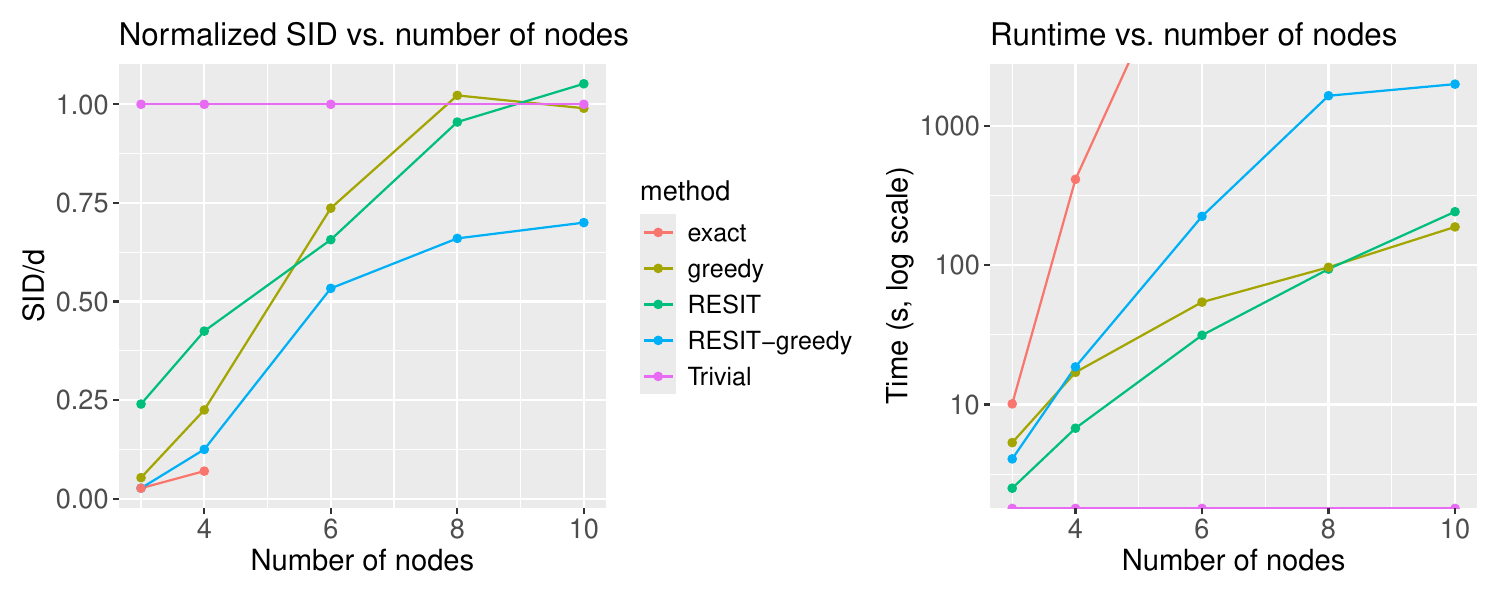}
\caption{Performance of different greedy algorithms from Section~\ref{appendix_greedy}, measured by the normalized Structural Interventional Distance (SID). Here, the \texttt{Trivial} algorithm always returns an empty graph. Runtime was measured on a machine with an Intel Core i5-6300U 2.5 GHz processor and 16 GB of RAM.}
\label{figure_greedy}
\end{figure}

\subsubsection{Statistical scalability: sample size needs to grow with increasing dimension}
\label{appendix_scalability}
Conducting independence tests and performing nonparametric regression becomes increasingly difficult as the number of covariates grows. In high-dimensional settings, these tasks require substantially larger sample sizes for reliable estimation. Similar limitations are observed in other methods such as ANM-RESIT, LOCI, and bQCD. The experiment below illustrates how the sample size required for consistent estimation increases systematically with the dimension~$d$.

\textbf{Experimental setup.} For each sample size \( n \) and dimension \( d \), we generated a uniformly random DAG with \( d \) nodes and \( d-1 \) edges using the \texttt{bnlearn} package in R. Data were drawn from the structural equation model \( X_i = f_i(\mathbf{X}_{pa_i}) + \varepsilon_i \), where \( \varepsilon_i \sim N(0,1) \) and \( f_i(x) = c_i x^\top x \) with \( c_i \sim \text{Unif}(0.5, 1.5) \). The DAG was estimated using the $CPCM(F)$ algorithm \eqref{score_definition1} with $F$ set to a Gaussian distribution of fixed variance, and structure learning was performed via the naive-greedy algorithm. Estimation accuracy was evaluated using the Structural Intervention Distance (SID). We also generate a baseline guess by generating random DAGs with \(d-1\) edges uniformly over the space of all DAGs. Each configuration was repeated 100 times.

\textbf{Results.} Figure~\ref{scalability} reports the ratio between the average SID obtained by the $CPCM(F)$ algorithm and that of a random-DAG baseline. As the dimension \(d\) increases, considerably larger samples are required to achieve ratios below \(0.5\), which correspond to meaningful structural recovery. Specifically, we observe ratios below \(0.5\) for approximately \(n=100\) when \(d=2\)–\(3\), \(n=500\) for \(d=4\), \(n=1000\) for \(d=5\), and \(n=2000\) for \(d=6\). These results indicate that \textbf{the sample size required for reliable estimation grows} \textbf{rapidly} (potentially exponentially) \textbf{with the dimension} \textbf{of} $\mathbf{X}$.

\begin{figure}[]
\centering
\includegraphics[scale=0.5]{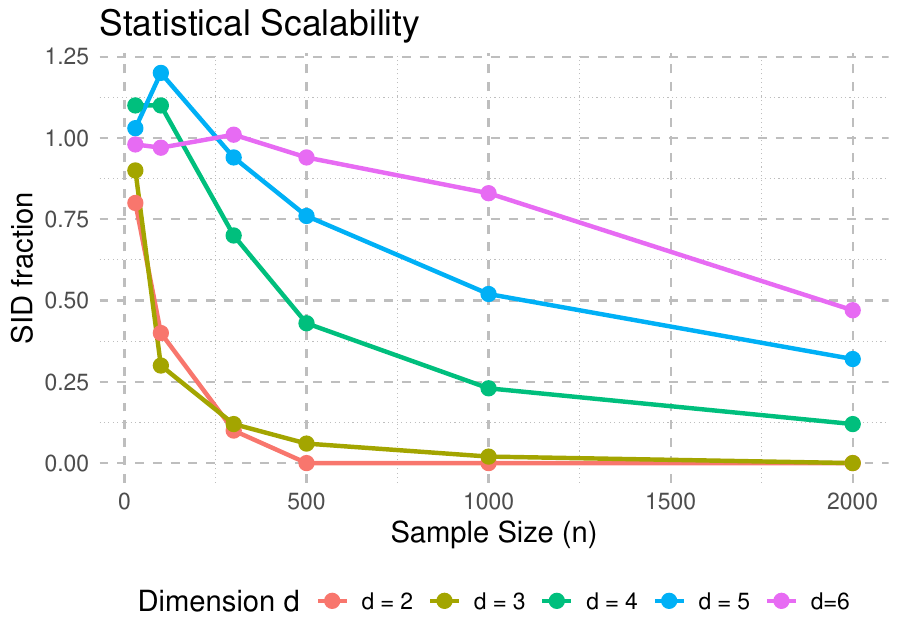}
\caption{
Scalability experiment from Section~\ref{appendix_scalability}. 
The plot shows the ratio between the average SID obtained by the $CPCM(F)$ algorithm (edge-greedy version) and that of a random-DAG baseline, across 100 repetitions for different values of \(n\) and \(d\). 
Lower ratios indicate better structural recovery, while values near one correspond to random performance. 
The results suggest that the sample size required for accurate estimation grows rapidly with the dimension of $\mathbf{X}$.
}
\label{scalability}
\end{figure}

\subsection{Sequential approach vs oracle: empirical performance}
\label{Simulations_seq_approach}
We evaluate the empirical performance of the sequential approach for selecting the distribution family, comparing it to $CPCM(F)$ with access to the true (oracle) distribution $F$. \\
\textbf{Experimental setup:} The data is generated as follows:
$$
X_1 \sim \mathcal{N}(0,1), \quad X_2 = F^{-1}(\varepsilon, \theta(X)), \quad \varepsilon \indep X, \; \varepsilon \sim \mathcal{U}(0,1),
$$
where $F \in \mathscr{S}_s$ belongs to either the one-parameter family $\mathscr{S}_1$ or the two-parameter family~$\mathscr{S}_2$. 

When $s = 1$, $F$ is, with equal probability ($\frac{1}{4}$), either a Gaussian distribution with fixed variance, a Poisson, Pareto, or Exponential distribution. When $s = 2$, $F$ is, again with equal probability, a Gaussian, Negative Binomial, Generalized Pareto, or Gamma distribution. All parameters are generated as a transformation of a random smooth polynomial using
\begin{equation*}
\theta(x)
= 0.5 + 4.5 \cdot
\frac{\tanh\big(b(x)^\top u\big) + 1}{2},
\qquad
u_j \sim \text{Unif}(-2,2),
\quad j=1,2,3,
\label{eq2345}
\end{equation*}
where \(b(x)\) denotes the natural spline basis with three degrees of freedom. This construction yields a smooth parameter function \(\theta(x) \in [0.5, 5]\), ensuring that the conditional distribution $X_2 \mid X_1$ varies continuously with~\(x\) and that the variance does not explode.

We estimate the graph $\mathcal{G} = {1 \to 2}$ using both $CPCM(\text{Seq.app})$ and $CPCM(F)$ with oracle knowledge of $F$, for various sample sizes $n$, repeating each experiment 100 times for both $s = 1$ and $s = 2$. Figure~\ref{Fig_seq} summarizes the results across sample sizes.

\textbf{Results.} For $s = 1$, we observe that $CPCM(\text{Seq.app})$ typically performs equivalently to oracle $CPCM(F)$ for $n > 100$. For $s = 2$, the sequential approach tends to select the simpler class $\mathscr{S}_1$ instead of the true two-parameter family $\mathscr{S}_2$ at smaller sample sizes. Nevertheless, the performance gap between the sequential approach and the oracle method remains almost negligible. These results indicate that \textbf{the sequential approach is performing nearly as well as the oracle} $CPCM(F)$, provided that $F \in \mathscr{S}_1 \cup \mathscr{S}_2$.

\begin{figure}[]
\centering
\includegraphics[scale=0.7]{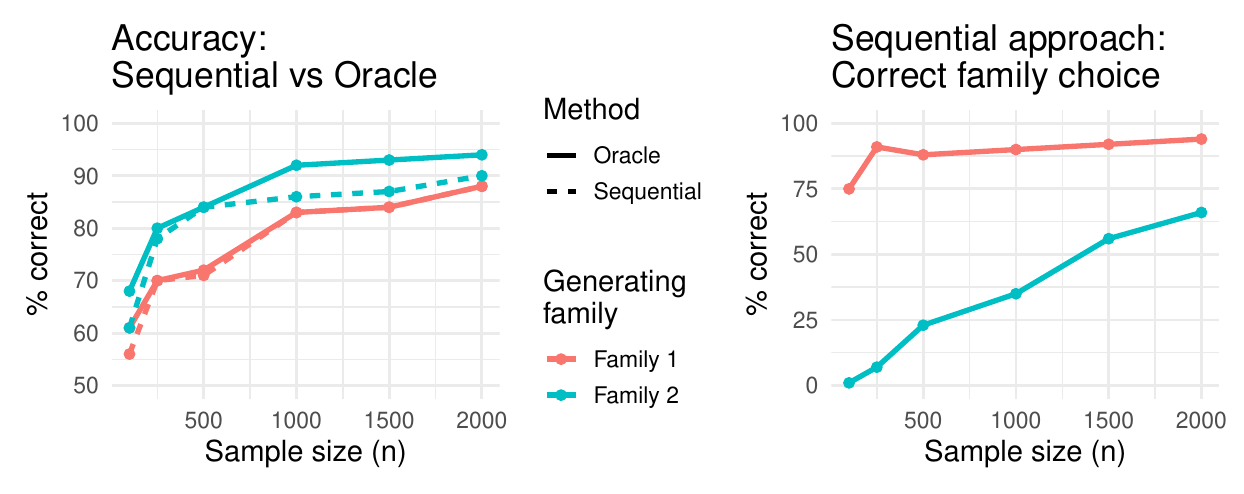}
\caption{Performance of the sequential approach. Left: percentage of simulations where $\hat{\mathcal{G}} = {1 \rightarrow 2}$. Right: percentage of simulations in which $CPCM(\text{Seq.app})$ and $CPCM(F)$ with oracle $F$ are equivalent. }
\label{Fig_seq}
\end{figure}

\subsection{Details about sections~\ref{Section_simulations_robustness}, \ref{Section_simulations_Pareto}, \ref{Section_simulations_Gaussian}, \ref{Section_simulations_multivariate} and \ref{Section7}}
\label{Appendix_Section_simulations_Pareto}

In \textbf{Section~\ref{Section_simulations_robustness}}, we generated random functions \(\theta(x)\) from zero-mean Gaussian processes with squared-exponential covariance kernel
\(
k(x,x') = \sigma^2 \exp\left(-\frac{(x-x')^2}{2\ell^2}\right),
\)
where \(\sigma = 1\) and \(\ell = 0.2\); same choices as in Section~\ref{Section_simulations_Gaussian} taken from \cite{Natasa_Tagasovska}. Each realization was then shifted to ensure positivity, \(\theta(x) \leftarrow \theta(x) - \min_x \theta(x) + 0.5\).

For \textbf{Section~\ref{Section_simulations_Pareto}}, the additional plots are presented in Table~\ref{Pareto_simulations1} and Figures~\ref{Pareto_histograms}~and~\ref{sample_size_pareto}. 

% Please add the following required packages to your document preamble:
% \usepackage{multirow}
% \usepackage{graphicx}
\begin{table}[]
\centering
\renewcommand{\arraystretch}{1.15}
\begin{tabular}{l|
                S[table-format=3.0]
                S[table-format=3.0]
                S[table-format=3.0]
                S[table-format=3.0]
                S[table-format=3.0]}
\toprule
\textbf{$\gamma$} &
{$X_1 \to X_2$} &
{$X_2 \to X_1$} &
{\makecell{Empty\\graph}} &
{\makecell{Both directions\\appear plausible}} &
{\makecell{Neither direction\\appears plausible}} \\
\midrule
\rowcolor{RowAlt}
$-2$ & 0  & 0  & \bfseries 96 & 2  & 2  \\
$-1$ & 3  & 2  & 0  & \bfseries 95 & 0  \\
\rowcolor{RowAlt}
\,\,$0$  & 7  & 1  & 0  & \bfseries 92 & 0  \\
\,\,$1$  & 7  & 5  & 0  & \bfseries 86 & 2  \\
\rowcolor{RowAlt}
\,\,$2$  & \bfseries 93 & 0  & 0  & 14 & 3  \\
\bottomrule
\end{tabular}
\caption{Simulation results for the CPCM model using the Pareto distribution function \(F\). The table displays the percentage of cases for each type of graph structure estimated by Conservative Algorithm~\ref{Algorithm1} with the model specified in \eqref{fwesef}, across various values of the hyperparameter \(\gamma \in \mathbb{R}\). The columns indicate the frequency of each graph structure being estimated, with the highest frequency in each row highlighted in bold.}
\label{Pareto_simulations1}
\end{table}

\begin{figure}[]
\centering
\includegraphics[width = 0.8\textwidth]{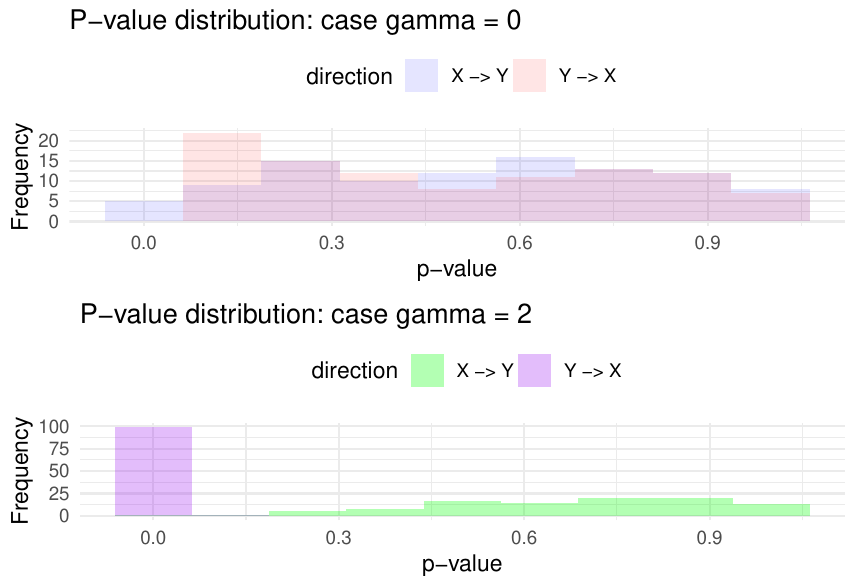}
\caption{(Simulations~\ref{Section_simulations_Pareto}). Distributions of the p-values from the independence test in Step 1b) of Algorithm~\ref{Algorithm1}, for model \eqref{fwesef} with \(\gamma = 0\) and \(\gamma = 2\).}
\label{Pareto_histograms}
\end{figure}

\begin{figure}[]
\centering
\includegraphics[scale=0.7]{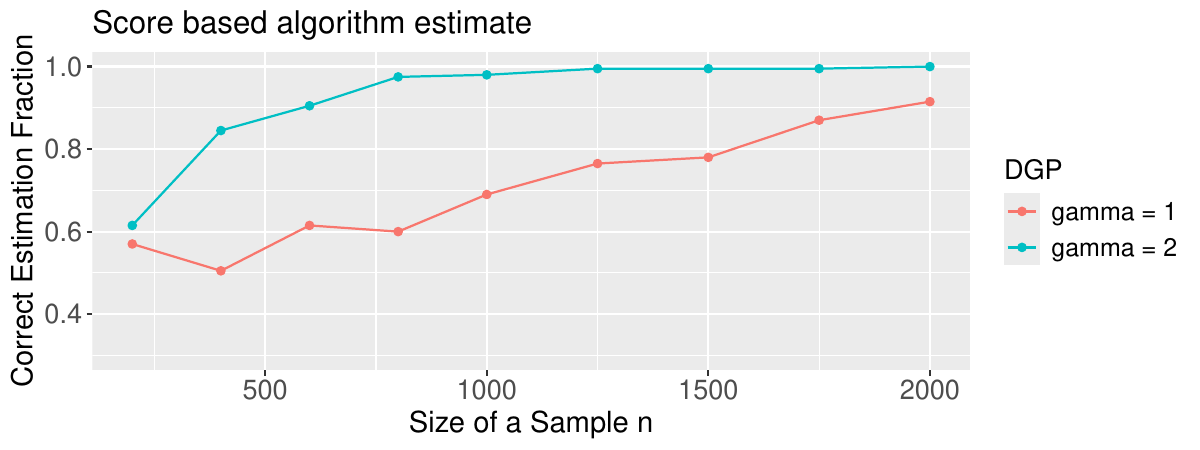}
\caption{(Simulations~\ref{Section_simulations_Pareto}). The plot displays the percentage of correctly estimated causal directions across a range of sample sizes \( n \), using model \eqref{fwesef} with hyperparameters \(\gamma = 1\) and \(\gamma = 2\). As \( n \) increases, the algorithm demonstrates near-perfect performance, affirming the theoretical consistency of the proposed method. }
\label{sample_size_pareto}
\end{figure}

In \textbf{Section~\ref{Section_simulations_Gaussian}}, the experiments were inspired by \cite{Natasa_Tagasovska} and implementations of other baseline methods are also taken from \cite{Natasa_Tagasovska} and \cite{immer2022identifiability}. 

\begin{itemize}
    \item For LOCI, we use the default format with neural network estimations and subsequent independence testing (also denoted as $NN-LOCI_H$) \citep{immer2022identifiability}.
    \item For IGCI, we use the original implementation from \cite{IGCI} with slope-based estimation with Gaussian and uniform reference measures. 
    \item For RESIT, we use the GP regression and the HSIC independence test with a threshold value of $\alpha = 0.05$.
    \item For the slope algorithm, we use the implementation of \cite{Slope}, with the local regression included in the fitting process.
    \item For comparisons with other methods such as PNL, GPI-MML, ANM, Sloppy, GR-AN, EMD, GRCI, see Section 3.2 in \cite{Natasa_Tagasovska} and Section 5 in \cite{immer2022identifiability}. 
\end{itemize}
The random functions were generated in the same way as in \cite{Natasa_Tagasovska}. Specifically, Models 1, 3, 6, and 7 are realizations of zero-mean Gaussian processes with a squared-exponential covariance kernel 
\(
k(x,x') = \sigma^2 \exp\left(-\frac{(x-x')^2}{2\ell^2}\right),
\)
where \(\sigma = 1\) and \(\ell = 0.2\). Each realization was then shifted to ensure positivity, \(\theta(x) \leftarrow \theta(x) - \min_x \theta(x) + 0.5\). 
Models 2, 4, and 5 use an injective (monotone) nonlinear transformation $m(x) = C\,\frac{B(x + A)}{1 + |B(x + A)|},$ where \(A \sim \mathrm{Unif}[-2,2]\), \(B\) follows a two–point mixture with \(B \sim \mathrm{Unif}[0.5,2]\) with probability $0.5$ and \(B \sim \mathrm{Unif}[-2,-0.5]\) otherwise, and \(C \sim 1 + \mathrm{Exp}(4)\).
This parametrization produces smooth saturating nonlinear relationships that can be either increasing or decreasing, thereby capturing a wide range of monotone functional dependencies between the parent and child variables. Figure~\ref{Simulations2_plots} shows an example of datasets generated via different models from Section~\ref{Section_simulations_Gaussian}.

In \textbf{Section~\ref{Section_simulations_multivariate}}, all baseline methods are implemented using the \texttt{pcalg} package \citep{pcalg_package}: 
\begin{itemize}
\item For the PC algorithm, we consider three variants: (1) the default Gaussian conditional independence test \texttt{gaussCItest}, (2) the HSIC-based test \citep{Kernel_based_tests}, and (3) the copula-based test \citep{copula_based_independence_test}, which is omitted from the table as it yielded consistently inferior results. We always used significance level $\alpha = 0.05$. 
\item The GES algorithm is implemented with the Gaussian observational BIC score and the default penalty parameter. 
\item LiNGAM with default parameters.
\item For the ANM baseline, we ensure fairness by adopting the same components as in our CPCM implementation; namely, the GAM estimator and the HSIC dependence measure.
\end{itemize}

Regarding \textbf{Section~\ref{Section7}},  Table~\ref{tab:edge_share} shows the relative frequencies with which each edge was recovered across repeated subsamples of the motor insurance dataset.

\begin{figure}[]
\centering
\includegraphics[scale=0.4]{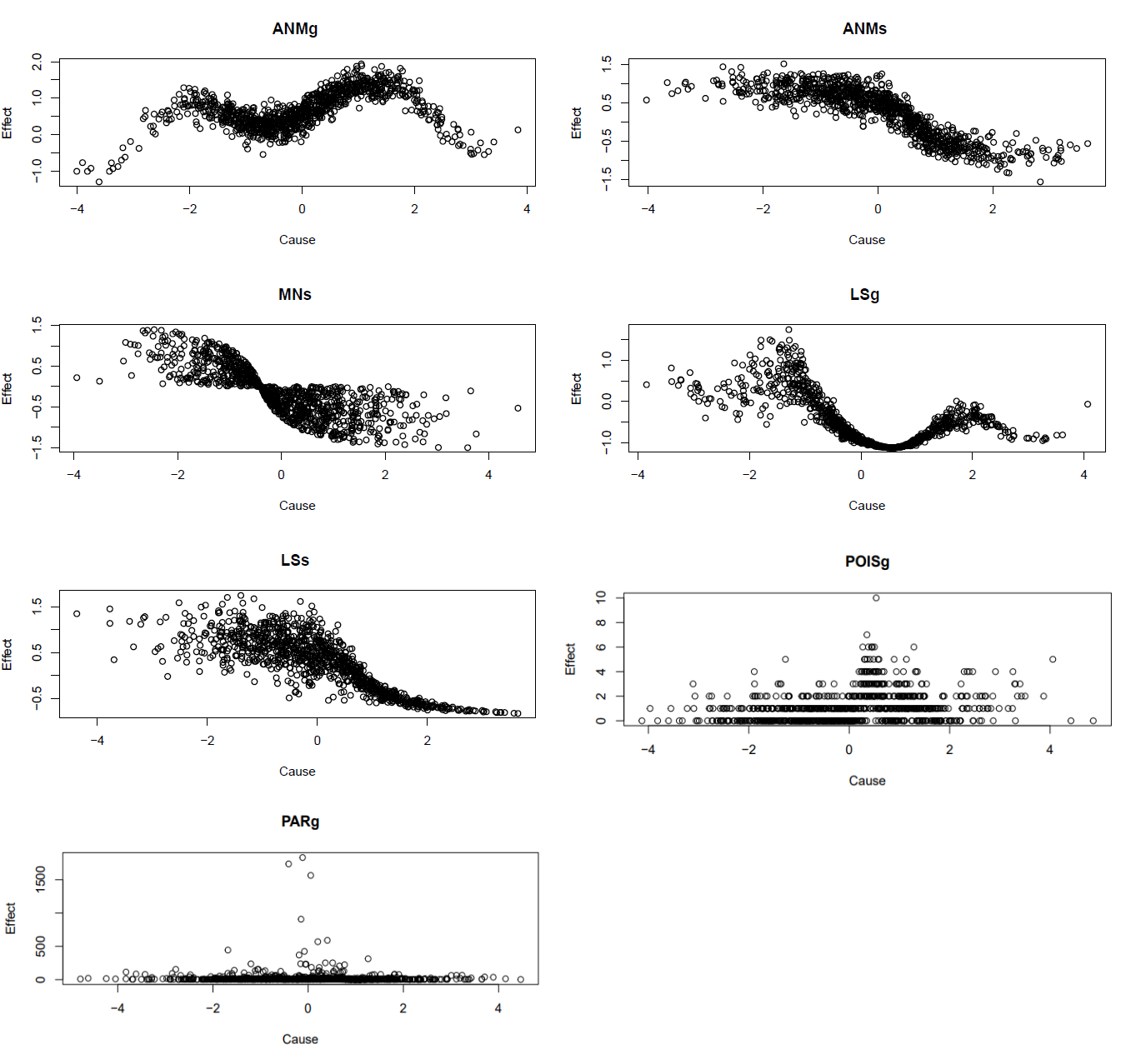}
\caption{Simulations~\ref{Section_simulations_Gaussian}. An example of datasets generated via different models. }
\label{Simulations2_plots}
\end{figure}

\begin{table}[ht]
\centering
\begin{minipage}{0.45\linewidth}
\centering
\begin{tabular}{lc}
\hline
Edge & Share (\%) \\
\hline
VehAge $\to$ ClaimNb     & 60 \\
Exposure $\to$ ClaimNb   & 48 \\
VehPower $\to$ ClaimNb   & 42 \\
VehPower $\to$ Exposure  & 44 \\
\hline
\end{tabular}
\end{minipage}\hfill
\begin{minipage}{0.45\linewidth}
\centering
\begin{tabular}{lc}
\hline
Edge & Share (\%) \\
\hline
VehAge $\to$ Exposure    & 36 \\
Exposure $\to$ VehAge    & 64 \\
VehPower $\to$ VehAge    & 54 \\
VehAge $\to$ VehPower    & 30 \\
\hline
\end{tabular}
\end{minipage}
\caption{Relative frequency (in \%) with which each directed edge was recovered by CPCM across 50 random subsamples of the French MTPL motor insurance dataset (shown only those with more than $25\%$ share).}
\label{tab:edge_share}
\end{table}

\renewcommand\thesection{C}
\section{Proofs}
\label{SectionProofs}

\subsection{Proof of Theorem~\ref{normalidentifiability}}
\begin{customthm}{\ref{normalidentifiability}}
Let $(X_1,X_2)$ admit the $CPCM(F)$ model with graph $X_1\to X_2$, where $F$ is the Gaussian distribution function with parameters $\theta(X_1)=\big(\mu(X_1), \sigma(X_1)\big)^\top$. 
Let $p_{\varepsilon_1}$ be the density of $\varepsilon_1$ that is absolutely continuous with full support $\mathbb{R}$. Let $\mu(x), \sigma(x)$ be two times differentiable.  

Then, the causal graph is identifiable from the joint distribution if and only if  there do not exist $a,c ,d,e, \alpha, \beta\in\mathbb{R}$,  
$a\geq 0,c>0, \beta>0$, such that
\begin{equation}\tag{\ref{norm}}
\frac{1}{\sigma^2(x)}=ax^2 + c, \,\,\,\,\,\,\,\,\,\,\,\,\,\,\,\,\,\,\,\,\, \frac{\mu(x)}{\sigma^2(x)}=d+ex,
\end{equation}
for all $x\in\mathbb{R}$ and
\begin{equation}\tag{\ref{DensityDEF}}
p_{\varepsilon_1}(x) \propto \sigma(x)e^{-\frac{1}{2}\big[ \frac{(x-\alpha)^2}{\beta^2}  - \frac{\mu^2(x)}{\sigma^2(x)}\big]},
\end{equation}
where $\propto$ represents an equality up to a constant (here, $p_{\varepsilon_1} $  is a valid density function if and only if $\frac{1}{\beta^2}\neq  \frac{e^2}{c}\mathbbm{1}[a=0]$). 
Specifically, if $\sigma(x)$ is constant (case $a=0$), then the causal graph is identifiable unless $\mu(x)$ is linear and $p_{\varepsilon_1}$ is the Gaussian density.
\end{customthm}

\begin{proof}
\label{Proof of normalidentifiability}{}
We opt for proving this theorem from scratch, without using Theorem \ref{thmAssymetricMultivariatesufficient}. An interested reader can try to use Theorem \ref{thmAssymetricMultivariatesufficient} instead. For clarity regarding the indexes, we use the notation $X=X_1, Y=X_2$. 

First, we show that if the causal graph is not identifiable, then $\mu(x)$ and $\sigma(x)$ must satisfy (\ref{norm}). Let $p_{(X,Y)}$ be the density function of $(X,Y)$. Since the causal graph is not identifiable, there exist two CPCM models that generate $p_{(X,Y)}$: the CPCM model with  $X\to Y$  and the function  $\theta(x)=\big(\mu(x), \sigma^2(x)\big)^\top$ and the CPCM model with  $Y\to X$ and the function $\tilde{\theta}(y)=\big(\tilde{\mu}(y), \tilde{\sigma}^2(y)\big)^\top$. 

We decompose (corresponding to the direction $X\to Y$) 
\begin{equation*}
p_{(X,Y)}(x,y) = p_{X}(x)p_{Y\mid X}(y\mid x) = p_{X}(x) \phi\big(y;\theta(x)\big),
\end{equation*}
where $\phi\big(y;\theta(x)\big)$ is the Gaussian density function with parameters $\theta(x) = \big(\mu(x), \sigma^2(x)\big)^\top$. We rewrite this in the other direction: 
$$
p_{(X,Y)}(x,y) = p_{Y}(y)p_{X\mid Y}(x\mid y) = p_{Y}(y) \phi\big(x;\tilde{\theta}(y)\big).
$$
We take the logarithm of both equations and rewrite them in the following manner:
\begin{equation*}
\log[p_{X}(x)] +  \log\bigg\{\frac{1}{\sqrt{2\pi \sigma^2(x)}}e^{\frac{-[y-\mu(x)]^2}{2\sigma^2(x)}}\bigg\} = \log[p_{Y}(y)] +  \log\bigg\{\frac{1}{\sqrt{2\pi \tilde{\sigma}^2(y)}}e^{\frac{-(x-\tilde{\mu}(y))^2}{2\tilde{\sigma}^2(y)}}\bigg\} \text{  and}
\end{equation*}
\begin{equation}\label{eq1}
\log[p_{X}(x)] -\log\sigma(x)-\frac{1}{2}  \frac{[y-\mu(x)]^2}{\sigma^2(x)} = \log[p_{Y}(y)] -\log\tilde{\sigma}(y) -\frac{1}{2}  \frac{[x-\tilde{\mu}(y)]^2}{\tilde{\sigma}^2(y)}. 
\end{equation}
Calculating on both sides $\frac{\partial^4 }{\partial^2 x \partial^2 y }$, we obtain 
$$\frac{\sigma''(x)\sigma(x)-3\sigma'(x)'\sigma(x)}{\sigma^4(x)} =
\frac{\tilde{\sigma}''(y)\tilde{\sigma}(y)-3\tilde{\sigma}'(y)\tilde{\sigma}'(y)}{\tilde{\sigma}^4(y)}. $$Since this has to hold for all $x,y$, both sides need to be constant (let us denote this constant by $a\in\mathbb{R}$).  

Differential equation $\sigma''(x)\sigma(x)-3\sigma'(x)\sigma'(x)=a\,\sigma^4(x)$ has solution $\sigma(x) = \frac{1}{\sqrt{a(x+b)^2 + c}}$ for  $x$ , such that $a(x+b)^2 + c>0$. 

Plugging this result into (\ref{eq1}) and calculating on both sides $\frac{\partial^3 }{\partial^2 x \partial y }$, we obtain 
\begin{equation}\label{eq2}
\mu''(x) (a(x+b)^2+c) + \mu'(x) (4ax+4ab) + \mu(x) 2a = 2ab.
\end{equation}
Equation (\ref{eq2}) is another differential equation with a solution $\mu(x) = \frac{d+ex}{a(x+b)^2+c} + b$, for some $d,e\in\mathbb{R}$ for all $x{:}\,\, \sigma(x)>0$. 

Next, we show that it is necessary that $b=0$. If we show $b=0$, then $\mu(x)$ and $\sigma^2(x)$ are exactly in the form (\ref{norm}). We plug the representations  $\mu(x) = \frac{d+ex}{a(x+b)^2+c} + b, \sigma(x) = \frac{1}{\sqrt{a(x+b)^2 + c}}$ and   $\tilde{\mu}(x) = \frac{\tilde{d}+\tilde{e}x}{\tilde{a}(x+\tilde{b})^2+\tilde{c}} + \tilde{b}, \tilde{\sigma}(x) = \frac{1}{\sqrt{\tilde{a}(x+\tilde{b})^2 + \tilde{c}}}$  into (\ref{eq1}). Thus, we obtain
\begin{equation*}
\begin{split}
&\log[p_{X}(x)] +\frac{1}{2}\log[a(x+b)^2 + c]\\&
-\frac{1}{2}  \bigg[y^2(ax^2 + 2abx + ab^2+c) + y(2d+2ex) + \frac{1}{a(x+b)^2 +c} \bigg] \\&
= \log[p_{Y}(y)] +\frac{1}{2}\log[\tilde{a}(y+\tilde{b})^2 + \tilde{c}]\\& 
-\frac{1}{2}  \bigg[x^2(\tilde{a}y^2 + 2\tilde{a}\tilde{b}y + \tilde{a}\tilde{b}^2+\tilde{c}) + x(2\tilde{d}+2\tilde{e}y) + \frac{1}{\tilde{a}(y+\tilde{b})^2 +\tilde{c}} \bigg] .
\end{split}
\end{equation*}
We can re-write the last expression as
\begin{equation}\label{eq4}
\begin{split}
&h_X(x) + h_Y(y) =\frac{1}{2}  [y^2(ax^2 + 2abx ) + 2yex ] -\frac{1}{2}  [x^2(\tilde{a}y^2 + 2\tilde{a}\tilde{b}y) + 2x\tilde{e}y) ]\\
&=\frac{1}{2}[ x^2y^2(a - \tilde{a})  + xy(2aby - 2\tilde{a}\tilde{b}x +e-\tilde{e}) ],
\end{split}
\end{equation}
where 
\begin{equation*}\label{eqh_x}
h_X(x) = \log[p_{X}(x)]  +  \frac{1}{2}\log[a(x+b)^2 + c] + \frac{1}{2}\frac{1}{a(x+b)^2 +c} -2x\tilde{d} - x^2(\tilde{a}\tilde{b}^2+\tilde{c}), 
\end{equation*}
\begin{equation*}
h_Y(y) = -\log[p_{Y}(y)] -\frac{1}{2}\log[\tilde{a}(y+\tilde{b})^2 + \tilde{c}] +\frac{1}{2}[\frac{1}{\tilde{a}(y+\tilde{b})^2 +\tilde{c}} - 2yd - y^2(ab^2+c)].
\end{equation*}
Since the left-hand side of (\ref{eq4}) is in additive form, the right side also needs to have an additive representation. However, that is only possible if $a- \tilde{a}=0$ and $2aby - 2\tilde{a}\tilde{b}x +e-\tilde{e}=0$. Therefore, we necessarily have $a=\tilde{a}$ and either $a=0$ or $b=\tilde{b}=0$. The case $a=0$ corresponds to a constant $\sigma$ and, hence, also $b=\tilde{b}=0$. We have shown that $\mu(x)$ and $\sigma^2(x)$ have to satisfy (\ref{norm}).

Next, we show that if the causal graph is not identifiable, then the density of $p_{X}(x)$ has form (\ref{DensityDEF}). Plugging the form of  $\mu(x)$ and $\sigma^2(x)$ into (\ref{eq1}), we obtain
\begin{equation*}
\begin{split}
\log[p_{X}(x)] -&\log\sigma(x)-\frac{1}{2}\bigg(y-  \frac{d+ex}{ax^2+c}    \bigg)^2(ax^2+c) \\&= \log[p_{Y}(y)] -\log\tilde{\sigma}(y) -\frac{1}{2}  \bigg(x-  \frac{\tilde{d}+\tilde{e}y}{ay^2+\tilde{c}}   \bigg)^2(ay^2 + \tilde{c}).
\end{split}
\end{equation*}
We rewrite
\begin{equation}\label{rftgyh}
\begin{split}
\log[p_{X}(x)] -  &\log\sigma(x)+\frac{1}{2}\bigg[ \tilde{c}x^2 - 2x\tilde{d} +  \frac{({d}+{e}x)^2}{ax^2+{c}}   \bigg] \\&= \log[p_{Y}(y)] -\log\tilde{\sigma}(y) +\frac{1}{2}\bigg[cy^2 - 2yd + \frac{(\tilde{d}+\tilde{e}y)^2}{ay^2+\tilde{c}} \bigg]
. 
\end{split}
\end{equation}
Since this has to hold for all  $x,y\in\mathbb{R}$, both sides of (\ref{rftgyh}) need to be constant and we obtain $\log[p_{X}(x))]\propto\log\sigma(x)-\frac{1}{2}  \big[\tilde{c}x^2 - 2x\tilde{d} +  \frac{({d}+{e}x)^2}{ax^2+{c}}   \big] $. Hence, 
$$
p_{X}(x) \propto \sigma(x)e^{-\frac{1}{2}  \big[ \tilde{c}x^2 - 2x\tilde{d} +  \frac{({d}+{e}x)^2}{ax^2+{c}}   \big]} =  \sigma(x)e^{-\frac{1}{2}\big[ \frac{(x-\alpha)^2}{\beta^2}  - \frac{\mu^2(x)}{\sigma^2(x)}\big]},
$$
where $\beta = 1/\sqrt{\tilde{c}}$ and $\alpha=\frac{\tilde{d}}{\tilde{c}}$. The condition $\frac{1}{\beta^2} > \frac{e^2}{c}\mathbbm{1}[a=0]$ arises from the fact that if $a=0$ and $\frac{1}{\beta^2} \leq \frac{e^2}{c}$, then $p_X(x)$ is not a density function. This is because  
\[ 
\sigma(x)e^{-\frac{1}{2}\left[ \frac{(x-\alpha)^2}{\beta^2} - \frac{\mu^2(x)}{\sigma^2(x)}\right]} \propto e^{-\frac{1}{2}\left[ x^2\left(\frac{1}{\beta^2} - \frac{e^2}{c}\right) + x\left(-2\frac{\alpha}{\beta^2} - \frac{2de}{c}\right) \right] }
\]
for all \(x \in \mathbb{R}\). This expression is integrable is and only if the coefficient at \(x^2\) is positive.

Finally, we deal with the other direction: we show that if $\mu$ and $\sigma$ satisfy (\ref{norm}) and $p_{\varepsilon_X}$ has form (\ref{DensityDEF}), then the causal graph is not identifiable. Assume that $a,c,d,e$ are given. Define $\tilde{a} = a, \tilde{e}=e$ and select $\tilde{c}, \tilde{d}\in\mathbb{R}$, such that $\tilde{c}>0,\tilde{c}\neq  \frac{e^2}{c}\mathbbm{1}[a=0]$. Define $\frac{1}{\tilde{\sigma}^2(y)} = \tilde{a} y^2 + \tilde{c}, \frac{\tilde{\mu}(y)}{\tilde{\sigma}^2(y)} = \tilde{d}+\tilde{e}y$. Moreover, define 
\begin{align*}
p_X(x) &\propto \sigma(x)e^{-\frac{1}{2}\big[ \tilde{c}\big(x-\frac{\tilde{d}}{\tilde{c}}\big)^2  - \frac{\mu^2(x)}{\sigma^2(x)}\big]}\,\,\text{             and}\\
p_Y(y) &\propto \tilde{\sigma}(y)e^{-\frac{1}{2}\big[c\big(x-\frac{{d}}{{c}}\big)^2  - \frac{\tilde{\mu}^2(y)}{\tilde{\sigma}^2(y)}\big]}.
\end{align*}
Note that regardless of the coefficients, these are valid density functions (with one exception when $\tilde{c}=\frac{e^2}{c}$ and $a=0$, which is why we selected $\tilde{c}\neq  \frac{e^2}{c}\mathbbm{1}[a=0]$). In case of $a=0$, this is the classical Gaussian distribution density function.

Using these values, we obtain the equality
$$
p_X(x)p_{Y\mid X}(y\mid x) = p_Y(y)p_{X\mid Y}(x\mid y), \forall x,y\in\mathbb{R},
$$
or more precisely, 
$$
\sigma(x)e^{-\frac{1}{2}\big[ \tilde{c}\big(x-\frac{\tilde{d}}{\tilde{c}}\big)^2  - \frac{\mu^2(x)}{\sigma^2(x)}\big]}\frac{1}{\sqrt{2\pi}\sigma(x)}e^{-\frac{1}{2}\frac{[y-\mu(x)]^2}{\sigma^2(x)}} \propto \tilde{\sigma}(y)e^{-\frac{1}{2}\big[c\big(x-\frac{{d}}{{c}}\big)^2  - \frac{\tilde{\mu}^2(y)}{\tilde{\sigma}^2(y)}\big]}\frac{1}{\sqrt{2\pi}\tilde{\sigma}(y)}e^{-\frac{1}{2}\frac{(x-\tilde{\mu}(y))^2}{\tilde{\sigma}^2(y)}}. 
$$
Since this holds for all $x,y\in\mathbb{R}$, we found a valid backward model. The density in (\ref{DensityDEF}) uses the notation $\alpha=\frac{\tilde{d}}{\tilde{c}}$ and $\beta = 1/\sqrt{\tilde{c}}$.  
\end{proof}

An example of the joint distribution of $X_1, X_2$ with $a=c=d=e=\alpha = \beta=1$ is depicted in Figure \ref{GaussianDensity}. 
\begin{figure}[ht]
\centering
\includegraphics[scale=0.35]{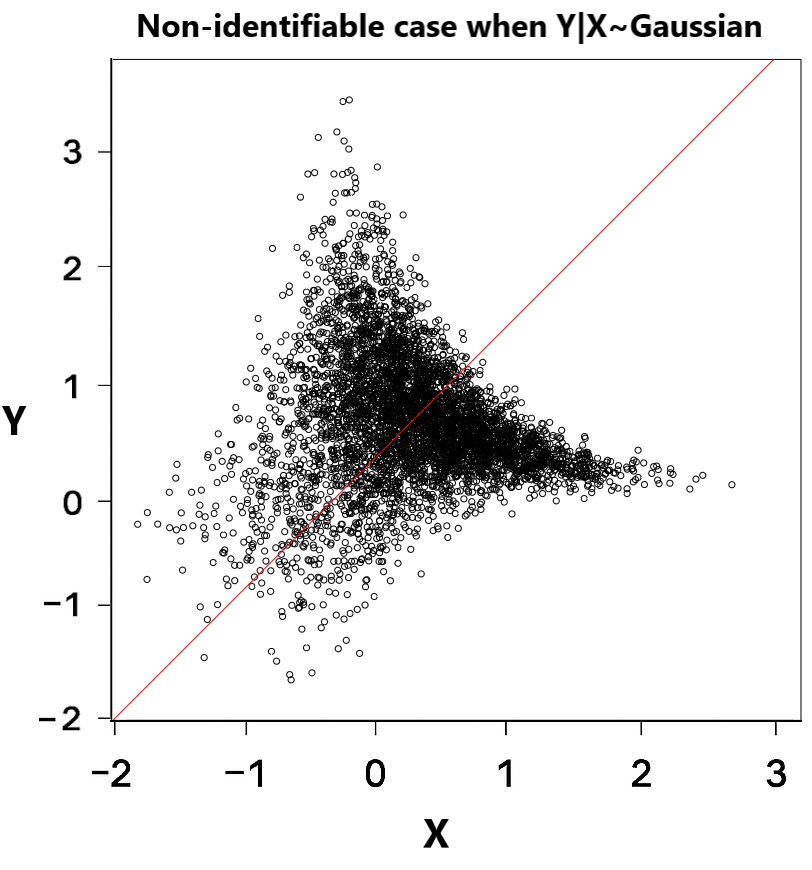}
\caption{Random sample from a joint distribution of $(X_1,X_2)$, where  $X_1$ has the marginal density (\ref{DensityDEF}) and $X_2\mid X_1\sim N\big(\mu(X_1), \sigma^2(X_1)\big)$ with $\mu, \sigma$ defined in (\ref{norm}) with constants   $a=c=d=e=\alpha= \beta=1$. The distribution function is symmetric according to the $x=y$ axis (red line).    }
\label{GaussianDensity}
\end{figure}

\subsection{Proof of Consequence \ref{paretoidentifiability}}

\begin{customconsequence}{\ref{paretoidentifiability}}
\begin{itemize}
   \item Let $(X_1,X_2)$admit the $CPCM(F)$ model with graph $X_1\to X_2$, where $F$ is the (discrete) Poisson distribution function.  Then, the causal graph is \textit{not} identifiable if and only if
\begin{equation}\label{eq505}
     \lambda(x) =e^{ax+b},\,\,\,\,\,\,\,\, P(X_1=x) \propto \frac{e^{e^{ax+b}+cx}}{x! }, \,\,\,\,\,\,\,\,\forall x\in\mathbb{N}_0,
     \end{equation}
for some $a<0,b,c\in\mathbb{R}$.

   \item  Let $(X_1,X_2)$ admit the $CPCM(F)$ model with graph $X_1\to X_2$, where $F$ is the Pareto distribution function. Then, the causal graph is \textit{not} identifiable if and only if   
\begin{equation}\label{eq50}
\theta(x) = a\log(x) +b,\,\,\,\,\,\,\,\, p_{X_1}(x) \propto \frac{1}{ [a\log(x)+b] x^{c+1} }, \,\,\,\,\,\,\,\,\forall x\geq 1,
\end{equation}
for some $a,b,c>0$.

    \item Let $(X_1,X_2)$ admit the $CPCM(F)$ model with graph $X_1\to X_2$, where $F$ is Bernoulli distribution function. Then, the causal graph is identifiable if and only if $supp(X_1) \neq \{0,1\}$.
\end{itemize}

\end{customconsequence}
\begin{proof}
\label{Proof of pareto identifiability}:
\textbf{First bullet-point: }Poisson distribution has one parameter and it can be written as $h_1(x) = 1/x!, h_2(x) = e^{-e^x}, T(x) = x$. Note that we do not use classical form of density function but its reparametrisation where $\theta(x) = \log(\lambda(x))$ where $\lambda$ is the classical rate parameter. 

Plugging this into Proposition~\ref{Necessary condition for identifiability}, we directly obtain (\ref{eq505}) with possible $a\in\mathbb{R}$. It is not hard to see that $ \frac{e^{e^{ax+b}+cx}}{x! }$ is integrable if and only if $a<0$. In such a case, the backward model exist and has a form $\tilde{\theta}(y) = a y+c$ and $P(X_2=y) \propto \frac{e^{e^{ay+c}+by}}{y! } $.

\textbf{Second bullet-point:} Pareto distribution has one parameter and it can be written as $h_1(x) = 1, h_2(\theta) = 1/\theta, T(x) = log(x)$. 

Plugging this into Proposition~\ref{Necessary condition for identifiability}, we directly obtain (\ref{eq50}) with possible $a,b,c\in\mathbb{R}$.  It is not hard to see that the density is integrable if and only if $a,c,b>0$, in which case, the backward model exist and has a form $\tilde{\theta}(y) = \tilde{a}\log(y) + \tilde{b}$, $ p_{X_2}(x) \propto \frac{1}{ [\tilde{a}\log(x)+\tilde{b}] x^{\tilde{c}+1} }$ for $\tilde{a}=a,\tilde{b} = c,\tilde{c} = b$. 

\textbf{Third bullet-point:} If $supp(X_1) \neq \{0,1\}$, then the causal graph is identifiable due to the first bullet-point in Proposition~\ref{Necessary condition for identifiability}. Next, consider $supp(X_1) = \{0,1\}$. In this case, we can always write a backward model for the Bernoulli distribution. Let $P(X_1=X_2=0) = p_0$,  $P(X_1=0, X_2 = 1) = p_{0,1}$, $P(X_1=1, X_2 = 0) = p_{1,0}$ and $P(X_1=X_2 = 1) = p_{1}$ for $p_0, p_{0,1}, p_{1,0}, p_1>0$ and $p_0 + p_{0,1} + p_{1,0} + p_1 = 1$. We can define $X_2\mid X_1\sim Bernoulli(\theta(X_1))$ as $\theta(0) = p_{0,1}$ and $\theta(1) = p_1$. On the other hand, we can define $X_1\mid X_2 \sim Bernoulli(\tilde{\theta}(X_2))$ as $\tilde{\theta}(0) = p_{1,0}$ and $\tilde{\theta}(1) = p_{1}$. Since both models produce the same joint distribution, the causal model is not identifiable for any values of $p_0, p_{0,1}, p_{1,0}, p_1$.  
\end{proof}

%%%%%%%%%%%%%%%%%%%%% Theorem 2 %%%%%%%%%%%%%%%%%%%%
\subsection{Proof of Theorem \ref{thmAssymetricMultivariatesufficient}}
Before we prove Theorem \ref{thmAssymetricMultivariatesufficient}, we show the following auxiliary lemma. 
\begin{lemma}\label{PomocnaLemma1}
Let $n\in\mathbb{N}$ and $\mathcal{X,Y}\subseteq \mathbb{R}$. Let $f_1, \dots, f_n, g_1, \dots, g_n$ be non-constant functions on $\mathcal{X,Y}$, respectively, such that 
$
f_1(x)g_1(y) + \dots + f_n(x)g_n(y)
$ is additive in $x,y$---that is, there exist functions $f$ and $g$, such that 
$$
f_1(x)g_1(y) + \dots + f_n(x)g_n(y) = f(x) + g(y), \forall x\in\mathcal{X},y\in\mathcal{Y}.
$$
Then, there exist (not all zero) constants $a_1, \dots, a_n, c\in\mathbb{R}$, such that 
$\sum_{i=1}^n a_if_i(x) = c$ for all $x\in\mathcal{X}$. Specifically for $n=2$, it holds that $f_1(x) = af_2(x)+c$ for some $a,c\in\mathbb{R}$. 

Moreover, assume that for some $q<n$, functions $g_1, \dots, g_q$ are affinly independent---that is, there exist $y_0, y_1, \dots, y_q\in\mathcal{Y}$, such that a matrix 
\begin{equation}\label{matrix243}
M=\begin{pmatrix}
 g_1(y_1) - g_1(y_0) & \cdots & g_q(y_1) - g_q(y_0) \\
\cdots & \ddots & \cdots \\
g_1(y_q) - g_1(y_0) & \cdots & g_q(y_{q}) - g_q(y_0)
\end{pmatrix} 
\end{equation}
has full rank. Then, for all $i=1, \dots, q$ there exist constants $a_{q+1}, \dots, a_n, c\in\mathbb{R}$, such that $f_i(x)=\sum_{j=q+1}^n a_jf_j(x) +c$ for all $x\in\mathcal{X}$. 
\end{lemma}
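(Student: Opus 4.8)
The statement is an elementary functional-equations lemma, so the plan is to argue purely by fixing sufficiently many values of one variable and solving a linear system. Start from the identity
\[
\sum_{i=1}^n f_i(x)g_i(y) = f(x) + g(y), \qquad \forall x,y\in\mathcal{S}.
\]
First I would remove the additive nuisance terms $f(x)$ and $g(y)$ by a differencing trick: pick two points $y_0,y_1\in\mathcal{S}$ and subtract the identity at $y=y_1$ from the identity at $y=y_0$, which kills $f(x)$ and gives
\[
\sum_{i=1}^n f_i(x)\bigl(g_i(y_0)-g_i(y_1)\bigr) = g(y_0)-g(y_1) =: c_0,
\]
a relation of the form $\sum_i a_i f_i(x) = c$ valid for all $x\in\mathcal{S}$. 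To get the first conclusion I must only check that the coefficients $a_i = g_i(y_0)-g_i(y_1)$ are not all zero together with $c$; if $g_i(y_0)=g_i(y_1)$ for every $i$ and all choices of $y_0,y_1$ then every $g_i$ is constant, contradicting the hypothesis that the $g_i$ are non-constant (so at least one $a_i\neq 0$ for a suitable choice of $y_0,y_1$). The $n=2$ specialization is immediate: $a_1 f_1(x)+a_2 f_2(x)=c$ with $(a_1,a_2)\neq(0,0)$; if $a_1=0$ then $f_2$ is constant, contradiction, so $a_1\neq0$ and $f_1(x) = -\tfrac{a_2}{a_1}f_2(x) + \tfrac{c}{a_1}$, giving the claimed affine form.

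For the ``moreover'' part the plan is the same differencing idea but carried out at $q+1$ points. Using the points $y_0,y_1,\dots,y_q$ from the hypothesis, subtract the identity at $y=y_0$ from the identity at $y=y_k$ for $k=1,\dots,q$; this eliminates $f(x)$ and yields the linear system
\[
\sum_{j=1}^{n} \bigl(g_j(y_k)-g_j(y_0)\bigr) f_j(x) \;=\; g(y_k)-g(y_0), \qquad k=1,\dots,q,
\]
for every fixed $x\in\mathcal{S}$. Split the sum over $j$ into $j\le q$ and $j>q$: the $j\le q$ block has coefficient matrix exactly the matrix $M$ in \eqref{matrix243}, which is assumed invertible. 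Solving the system for the vector $\bigl(f_1(x),\dots,f_q(x)\bigr)^\top$ by $M^{-1}$ expresses each $f_i(x)$, $i\le q$, as a fixed linear combination of the quantities $g(y_k)-g(y_0)$ (constants in $x$) and the tail functions $f_{q+1}(x),\dots,f_n(x)$; collecting terms gives $f_i(x) = \sum_{j=q+1}^n a_j f_j(x) + c$ with constants $a_j,c$ depending only on $M^{-1}$ and on $\{g(y_k)-g(y_0)\}$, as required. The role of $\mathcal{S}$ containing an open interval is only to guarantee that enough distinct points $y_0,\dots,y_q$ exist and, implicitly, to make the continuity hypothesis meaningful; continuity itself is not actually needed for this argument beyond ensuring the functions are well-defined on the relevant set.

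\textbf{Main obstacle.} There is no deep obstacle; the only thing requiring a little care is the bookkeeping in the ``moreover'' part — correctly identifying that the coefficient matrix of the reduced system is precisely $M$ (up to the labeling of rows/columns) so that the full-rank hypothesis applies, and then tracking that the resulting constants $a_{q+1},\dots,a_n,c$ genuinely do not depend on $x$. One should also note the mild subtlety that in the first conclusion the constants are only guaranteed ``not all zero'' — it is possible that $c=0$ and only the $a_i$ are nonzero, or vice versa is ruled out since if all $a_i=0$ the relation is vacuous — so the precise non-triviality claim must be stated as: at least one $a_i\neq0$, which follows from non-constancy of the $g_i$'s.
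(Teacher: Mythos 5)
Your proposal is correct and follows essentially the same route as the paper's proof: the two-point differencing in $y$ to eliminate $f(x)$ and produce the relation $\sum_i a_if_i(x)=c$ (with non-triviality secured by non-constancy of some $g_i$), and the $(q+1)$-point differencing whose $j\le q$ coefficient block is exactly $M$, inverted to express $f_1,\dots,f_q$ as affine combinations of $f_{q+1},\dots,f_n$. Your side remarks — that continuity is not actually used, and that the $n=2$ case follows since $a_1=0$ would force $f_2$ to be constant — are accurate and consistent with the paper's argument.
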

\begin{proof}
Fix $y_1, y_2\in\mathcal{Y}$, such that $g_1(y_1)\neq g_1(y_2)$. Then, we have for all $x\in\mathcal{X}$
\begin{align*}
&f_1(x)g_1(y_1) + \dots + f_n(x)g_n(y_1) = f(x) + g(y_1),\\&
f_1(x)g_1(y_2) + \dots + f_n(x)g_n(y_2) = f(x) + g(y_2),
\end{align*}
and subtraction of these equalities yields 
$$
f_1(x)[g_1(y_1)- g_1(y_2)] + \dots + f_n(x)[g_n(y_1)-g_n(y_2)] = g(y_1) - g(y_2).
$$
Defining $a_i = g_i(y_1)- g_i(y_2)$ and $c = g(y_1)- g(y_2)$ yields the first result (with $a_1\neq 0$).

Now, we prove the ``Moreover'' part. Consider equalities 
\begin{align*}
f_1(x)g_1(y_0) + &\dots + f_n(x)g_n(y_0) = f(x) + g(y_0),\\
f_1(x)g_1(y_1) + &\dots + f_n(x)g_n(y_1) = f(x) + g(y_1),\\
&\dots\\
f_1(x)g_1(y_q) + &\dots + f_n(x)g_n(y_q) = f(x) + g(y_q),
\end{align*}
where $y_0, \dots, y_q$ are defined, such that matrix (\ref{matrix243}) has full rank. Subtracting from each equality, the first equality yields 
\begin{align*}
f_1(x)[g_1(y_1)- g_1(y_0)] + &\dots + f_n(x)[g_n(y_1)-g_n(y_0)] = g(y_1) - g(y_0)\\
&\dots \\
f_1(x)[g_1(y_q)- g_1(y_0)] + &\dots + f_n(x)[g_n(y_q)-g_n(y_0)] = g(y_q) - g(y_0).
\end{align*}
Using matrix formulation, this can be rewritten as 
\begin{equation}
M\begin{pmatrix}
f_1(x) \\
\cdots \\
f_q(x) 
\end{pmatrix} =
\begin{pmatrix}
g(y_1)-g(y_0) -\sum_{j=q+1}^n f_{j}(x)[g_{j}(y_1) -g_{j}(y_0)] \\
\cdots \\
g(y_q)-g(y_0) -\sum_{j=q+1}^n f_{j}(x)[g_{j}(y_q) -g_{j}(y_0)]
\end{pmatrix} .
\end{equation}
Multiplying both sides by $M^{-1}$ indicates that $f_i(x), i=1, \dots, q$ are nothing else than a linear combination of $f_{q+1}(x), \dots, f_n(x)$, which is what we wanted to show.
\end{proof}

\begin{customthm}{\ref{thmAssymetricMultivariatesufficient}}
Let $(X_1, X_2)$ follow the $CPCM(F_1, \dots, F_k)$ model with graph $X_1\to X_2$, where $F_1, \dots, F_k$ belong to the exponential family of distributions with corresponding sufficient statistics  $T_m= (T_{m,1}, \dots, T_{m,q_m})^\top$, $m=1, \dots, k$.  Following Definition~\ref{CPCM(F1F2)}, let $\tilde{m}\in\{1, \dots, k\}$ be the index such that  $X_2 = F_{\tilde{m}}^{-1}\big(\varepsilon_2; \theta_2(X_1)\big)$. 

The causal graph is identifiable if for all $m\in \{1, \dots, k\}$, at least one of the following holds: 
\begin{itemize}
    \item $ supp(F_m) \neq supp(X_1)$. 
    \item The function \( \theta_2 \) is not a linear combination of the sufficient statistics \( T_{m,1}, \dots, T_{m,q_m} \), i.e., there do not exist coefficients \( a_{i,j}, b_i \in \mathbb{R} \) for \( i = 1, \dots, q_{\tilde{m}} \) and \( j = 1, \dots, q_m \) such that  
   \begin{equation}\tag{\ref{eq158}}
   \theta_{2,i}(x) = \sum_{j=1}^{q_m} a_{i,j} T_{m,j}(x) + b_i, \quad \forall x \in \operatorname{supp}(X_1), \quad \forall i \in \{1, \dots, q_{\tilde{m}}\}.
   \end{equation}  
    \item There do not exist constants \( c_1, \dots, c_{q_m} \in \mathbb{R} \) such that the density of \( X_1 \) satisfies  
   \begin{equation}\tag{\ref{eq007v2}}
   p_{X_1}(x) \propto \frac{h_{m,1}(x)}{h_{\tilde{m},2}[\theta_2(x)]} e^{\sum_{i=1}^{q_m} c_i T_{m,i}(x)}, \quad \forall x \in \operatorname{supp}(X_1),
   \end{equation}  where \( h_{m,1} \) is a base measure associated with \( F_{m} \) and \( h_{\tilde{m},2} \) is the normalizing function of \( F_{\tilde{m}} \), both defined in \hyperref[appendix_exponential_family]{Appendix} \ref{appendix_exponential_family}. 
\end{itemize}

Consequentially, the space of non-identifiable distributions is contained in a $\tilde{d}$-dimensional space, where 
\begin{equation}\tag{\ref{dimension_in_theorem2}}
    \tilde{d} = \sum_{m\in\{1, \dots, k\}:  supp(F_m) = supp(X_1)} (q_m+1)(q_{\tilde{m}}+1) -1 .  
\end{equation}
\end{customthm}

\begin{proof}
\label{Proof of thmAssymetricMultivariatesufficient}{}

If the $CPCM(F_1,\dots, F_k)$ is \textit{not} identifiable, then there exists $m\in\{1, \dots, k\}$ and functions $\theta_1$ and $\theta_2$, such that models 
\begin{equation}
    \label{eq425}
    X_1 = \varepsilon_1, X_2 = F_{\tilde{m}}^{-1}(\varepsilon_2, \theta_2(X_1))\text{, and } X_2 = \varepsilon_2, X_1 = F_m^{-1}(\varepsilon_1, \theta_1(X_2))
\end{equation}generate the same joint density function. For simplifying the notation, let $m=1$ and $\tilde{m}=2$. 

\textbf{1) }Trivially, $X_1$ can not be generated as $X_1 = F_1^{-1}(\varepsilon_1, \theta_1(X_2))$ if $supp(F_1) \neq supp(X_1)$. 

\textbf{2)} For a contradiction, we show that $\theta_2$ is a linear combination of $T_{1,1}, \dots, T_{1,q_m}$. Decompose the joint density as
\begin{equation}\label{eq59}
  p_{(X_1, X_2)}(x,y) = p_{X_1}(x)p_{X_2\mid {X_1}}(y\mid x) = p_{X_2}(y)p_{{X_1}\mid {X_2}}(x\mid y), \,\,\,\,\,\,\,\,x\in supp(X_1), y\in supp(X_2).
 \end{equation}
Since $F_1$ and $F_2$ lie in the exponential family of distributions, we use the notation from \hyperref[appendix_exponential_family]{Appendix} \ref{appendix_exponential_family} and rewrite it as
\begin{equation*}
    \begin{split}
  &     p_{{X_2}\mid {X_1}}(y\mid x) = h_{1,1}(y)h_{1,2}[\theta_2(x)]e^{\sum_{i=1}^{q_2}\theta_{2,i}(x)T_{2,i}(y)},\\&
  p_{{X_1}\mid {X_2}}(x\mid y) = h_{2,1}(x)h_{2,2}[{\theta_1}(y)]e^{\sum_{i=1}^{q_1}{\theta}_{1,i}(y)T_{1,i}(x)}.  
    \end{split}
\end{equation*}
After a logarithmic transformation of both sides of (\ref{eq59}), we obtain 
\begin{equation}\label{eq254}
\begin{split}
\log[p_{(X_1,X_2)}(x,y)] &= \log[p_{X_1}(x)] +  \log[h_{1,1}(y)]+\log\{h_{1,2}[\theta_{2}(x)]\} + \sum_{i=1}^{q_2}\theta_{2,i}(x)T_{2,i}(y) \\&
= \log[p_{X_2}(y)] +  \log[h_{2,1}(x)]+\log\{h_{2,2}[\theta_{1}(y)]\} + \sum_{i=1}^{q_1}\theta_{1,i}(y)T_{1,i}(x).
\end{split}
\end{equation}
Define $f(x) = \log[p_{X_1}(x)] +\log\{h_{1,2}[\theta_{2}(x)]\} -\log[h_{2,1}(x)]$ and $g(y) =\log[h_{1,1}(y)] -  \log[p_{X_2}(y)] + \log\{h_{2,2}[\theta_{1}(y)]\}$. Then, equality (\ref{eq254}) reads as 
\begin{equation}\label{eq9876}
f(x) + g(y) = \sum_{i=1}^{q_1}\theta_{1,i}(y)T_{1,i}(x) - \sum_{i=1}^{q_2}T_{2,i}(y)\theta_{2,i}(x).
\end{equation}
Finally, we use Lemma \ref{PomocnaLemma1}. We know that functions $T_{2,i}$ are affinly independent in the sense presented in Lemma  \ref{PomocnaLemma1} (see (\ref{eq145151}) in \hyperref[appendix_exponential_family]{Appendix} \ref{appendix_exponential_family}). Therefore, Lemma \ref{PomocnaLemma1} gives us that $\theta_{2,i}, i=1, \dots, q_2$ are only a linear combination of $T_{1, j}, j=1, \dots, q_1$, which is what we wanted to show. 

\textbf{3)} For a contradiction, we show that $p_{X_1}$ must have a form \eqref{eq007v2}. Let us rewrite equation~\eqref{eq9876} into 
\begin{equation}\label{eq9876543}
\begin{split}
    \log[p_{X_1}(x)]   =&-\log\{h_{1,2}[\theta_{2}(x)]\} +\log[h_{2,1}(x)]-g(y)\\&  +\sum_{i=1}^{q_1}\theta_{1,i}(y)T_{1,i}(x) - \sum_{i=1}^{q_2}T_{2,i}(y)\theta_{2,i}(x).
\end{split}
\end{equation}
Fix $y\in supp(F_2)$. Using the form of $\theta_{2,i}$ from the previous bullet-point, we can write 
\begin{equation*}
    \begin{split}
      &  \sum_{i=1}^{q_1}\theta_{1,i}(y)T_{1,i}(x) - \sum_{i=1}^{q_2}T_{2,i}(y)\theta_{2,i}(x) = \sum_{i=1}^{q_1}\theta_{1,i}(y)T_{1,i}(x) - \sum_{i=1}^{q_2}T_{2,i}(y)\bigg[ \sum_{j=1}^{q_1}a_{i,j}T_{1,j}(x)+b_i \bigg] \\& = \sum_{i=1}^{q_1}c_iT_{1,i}(x)+d, 
    \end{split}
\end{equation*}
where $c_i = \theta_{1,i}(y) - \sum_{j=1}^{q_2}\sum_{k=1}^{q_1}T_{2,i}(y)a_{i,j}$ and $d =  \sum_{j=1}^{q_2}b_jT_{2,j}(y)$. Therefore, equation~\ref{eq9876543} can be written as 
\begin{equation*}
   \log[p_{X_1}(x)]   =-\log\{h_{1,2}[\theta_{2}(x)]\} +\log[h_{2,1}(x)]  +\sum_{i=1}^{q_1}c_iT_{1,i}(x) +[d-g(y)]  .
\end{equation*}
Applying exponential on both sides, we obtain (\ref{eq007v2}). 

\textbf{Part ''Consequentially'':} We have shown that if \eqref{eq425} holds, then \( \operatorname{supp}(F_m) = \operatorname{supp}(X_1) \), and the joint density \( p_{(X_1, X_2)} \) is uniquely determined by the coefficients \( a_{i,j}, b_i, c_j \in \mathbb{R} \), where \( i = 1, \dots, q_{\tilde{m}} \) and \( j = 1, \dots, q_m \).  

By counting the number of these coefficients, we find that there are \( (q_m+1)(q_{\tilde{m}}+1) -1 \) of them, with the  ``\( -1 \)'' term accounting for the normalization of the density function. Consequently, \eqref{dimension_in_theorem2} follows by summing over all \( m \in \{1, \dots, k\} \).  
\end{proof}

\subsection{Proof of Consequence \ref{consequenceprva}}\label{consequence}
\begin{customconsequence}{\ref{consequenceprva}}
\begin{itemize}
\item Suppose that \( \text{supp}(X_1) = \mathbb{R} \), \(\text{supp}(X_2) = \{0, 1, \dots\}\) such as on Figure~\ref{Asymmetrical_picture}, and let \((X_1, X_2)\) admit the \(CPCM(F_1, F_2)\) model with graph \(X_1 \to X_2\), where \(F_1\) is a Gaussian distribution and \(F_2\) is a Poisson distribution with rate parameter \(\lambda\). The causal graph is identifiable if and only if there do not exist constants \(a_1, a_2, b, c_1, c_2\in \mathbb{R}\), $a_1, c_1<0$, such that for all \(x \in \mathbb{R}\)
\begin{equation*}
\lambda(x) = e^{a_1 x^2 +a_2x + b}, \quad p_{X_1}(x) \propto e^{c_1 x^2 + c_2 x }.
\end{equation*}
\item Let \((X_1, X_2)\) admit the \(CPCM(F)\) model with graph \(X_1 \to X_2\), where \(F\) is a Gamma distribution with parameters \(\theta = (\alpha, \beta)^\top\). If there do not exist constants \(a, b, c, d, e, f \in \mathbb{R}\) such that
\begin{equation*}
\alpha(x) = a\log(x) + bx + c, \quad \beta(x) = d\log(x) + ex + f, \quad \forall x > 0,
\end{equation*}
then the causal graph is identifiable.
\item Let \((X_1, X_2)\) admit the \(CPCM(F_1, F_2)\) model, where \(F_1\) is a Gamma distribution with parameters \(\theta_1 = (\alpha_1, \beta_1)^\top\) and \(F_2\) is a Beta distribution with parameters \(\theta_2 = (\alpha_2, \beta_2)^\top\). If there do not exist constants \(a_i, b_i, c_i, d_i, e_i, f_i \in \mathbb{R}\), \(i = 1, 2\), such that for all \(x \in (0, 1)\)
\begin{equation*}
\begin{split}
\alpha_1(x) &= a_1 \log(x) + b_1 x + c_1, \quad \beta_1(x) = d_1 \log(x) + e_1 x + f_1, \\
\alpha_2(x) &= a_2 \log(x) + b_2 \log(1 - x) + c_2, \quad \beta_2(x) = d_2 \log(x) + e_2 \log(1 - x) + f_2,
\end{split}
\end{equation*}
then the causal graph is identifiable.
\end{itemize}
\end{customconsequence}

\begin{proof}
\label{proof_of_consequence_multi}
Poisson distribution has one parameter and it can be written as $h_1(x) = 1/x!, h_2(x) = e^{-e^x}, T(x) = x$. Note that we do not use classical form of density function but its reparametrisation where $\theta(x) = \log(\lambda(x))$ where $\lambda$ is the classical rate parameter. Theorem \ref{thmAssymetricMultivariatesufficient} gives us that the causal graph is identifiable if there do not exist constants \(a_1, a_2, b, c_1, c_2\in \mathbb{R}\), such that for all \(x \in \mathbb{R}\)
\begin{equation*}
\lambda(x) = e^{a_1 x^2 +a_2x + b}, \quad p_{X_1}(x) \propto e^{c_1 x^2 + c_2 x },
\end{equation*}
then the causal graph is identifiable. It is a simple exercise to prove that the joint distribution is integrable if and only if $a_1, c_1<0$. 

The second and the third bullet-point follow directly from Theorem \ref{thmAssymetricMultivariatesufficient}, noting the following: 
\begin{itemize}
    \item The density function of the \textbf{Gamma distribution} with parameters \(\theta = (\alpha, \beta)^\top\) is given by \(p(x) = \frac{\beta^\alpha}{\Gamma(\alpha)} x^{\alpha - 1} e^{-\beta x}\), \(x > 0\). The sufficient statistics are \([T_1(x), T_2(x)] = [\log(x), x]\).
    \item The density function of the \textbf{Beta distribution} with parameters \(\theta = (\alpha, \beta)^\top\) is given by \(p(x) = \frac{1}{B(\alpha, \beta)} x^{\alpha - 1} (1 - x)^{\beta - 1}\). The sufficient statistics are \([T_1(x), T_2(x)] = [\log(x), \log(1 - x)]\).
\end{itemize}
\end{proof}

%%%%%%%%%%%%%%%%%%%%% Section 2.4%%%%%%%%%%%%%%%%%%%%%%%%%
\subsection{Proof of Lemma \ref{thmMultivairateIdentifiability}}
\begin{customlem}{\ref{thmMultivairateIdentifiability}}
Let $F_{\textbf{X}}$ be generated by the $CPCM(F_1, \dots, F_k)$ with DAG $\mathcal{G}$ and with density $p_\textbf{X}$. Assume that for all $ i,j\in\mathcal{G}$, $ S\subseteq V$, such that $i\in pa_j$ and  $pa_j\setminus \{i\}\subseteq S \subseteq nd_j\setminus\{i,j\}$, there exist $\textbf{x}_{S}{:}\,\,  p_S(\textbf{x}_S)>0$, such that a bivariate model defined as $X=\tilde{\varepsilon}_X, Y = F^{-1}_j\big(\tilde{\varepsilon}_Y, \tilde{\theta}(X)\big)$ is identifiable (in the sense of Definition \ref{DEFidentifiability}), where  $F_{\tilde{\varepsilon}_X} = F_{X_i\mid \textbf{X}_{S} =\textbf{ x}_S}    $ and $\tilde{\theta}(x) = \theta_j(\textbf{x}_{pa_j\setminus\{i\}}, x)$,  $x\in supp(X)$.
Then,  $\mathcal{G}$ is identifiable from the joint distribution. 
 \end{customlem}
 
 \begin{proof}
 \label{Proof of thmMultivairateIdentifiability}
Let there be two $CPCM(F_1, \dots, F_k)$ models, with causal graphs $\mathcal{G}\neq \mathcal{G}'$, that both generate $F_\textbf{X}$.   From Proposition 29 in \cite{Peters2014} (recall that we assume causal minimality of $CPCM(F_1, \dots, F_k)$), there exist variables $L,Y\in \{X_1, \dots, X_d\}$, such that 
\begin{itemize}
\item $Y\to L$ in $\mathcal{G}$ and $L\to Y$ in $\mathcal{G}'$,
\item $S:=\underbrace{\big\{pa_L(\mathcal{G})\setminus\{Y\}\big\}}_\text{\textbf{Q}}\cup\underbrace{\big\{pa_Y(\mathcal{G}')\setminus\{L\}\big\}}_\text{\textbf{R}}\subseteq \big\{nd_L(\mathcal{G}) \cap nd_Y(\mathcal{G}')\setminus\{Y,L\}\big\} $. 
\end{itemize}
For this $S$, select $\textbf{x}_S$ in accordance to the condition in the theorem. Below, we use the notation $\textbf{x}_S=(\textbf{x}_q, \textbf{x}_r)$ where $q\in \textbf{Q}, r\in \textbf{R}$. Now, we use Lemma 36 and Lemma 37 from \citep{Peters2014}.  Since $Y\to L$ in $\mathcal{G}$, we define a bivariate SCM as  \footnote{Informally, we consider  $Y^\star := Y\mid \{\textbf{X}_S=\textbf{x}_S\}$ and $L^\star:=L\mid \{\textbf{X}_S=\textbf{x}_S\}$.} $$Y^\star=\tilde{\varepsilon}_{Y^\star},\,\,\,\,\,\,\,\,\,\, L^\star = F^{-1}_{L}\big(\varepsilon_L; \theta_L(Y^\star)\big),$$ 
where $\tilde{\varepsilon}_{Y^\star} \overset{D}{=} Y\mid \{\textbf{X}_S=\textbf{x}_S\}$ and $\varepsilon_L\indep Y^\star, \varepsilon_L\sim U(0,1)$. This is a bivariate CPCM with $Y^\star\to L^\star$. However, the same holds for the other direction: Since $L\to Y$ in $\mathcal{G}'$, we can also define a bivariate SCM in the following manner: $$L^\star=\tilde{\varepsilon}_{L^\star},\,\,\,\,\,\,\,\,\,\, Y^\star = F^{-1}_{Y}\big(\varepsilon_Y; \theta_Y(L^\star)\big),$$ 
 where $\tilde{\varepsilon}_{L^\star} \overset{D}{=}L\mid \{\textbf{X}_S=\textbf{x}_S\}$ and $\varepsilon_Y\indep L^\star, \varepsilon_Y\sim U(0,1)$. We obtained a bivariate CPCM with $L^\star\to Y^\star$, which is a contradiction with the pairwise identifiability. Hence,  $\mathcal{G}= \mathcal{G}'$.  
 \end{proof}
 
\subsection{Proof of Lemma \ref{lemma_o_overparametrizacii}}
 \begin{customlem}{\ref{lemma_o_overparametrizacii}}
Suppose that the joint distribution $F_{(X_1,X_2)}$ is generated according to the model $CPCM(F_2)$ with graph $X_1\to X_2$, where $F_2$ is a distribution function belonging to the exponential family. 

Then, there exists $F_1$ such that the model $CPCM(F_1)$ with graph $X_2\to X_1$ also generates $F_{(X_1,X_2)}$. In other words, there exists $F_1$ such that the causal graph in $CPCM(F_1, F_2)$ is not identifiable from the joint distribution. 
\end{customlem}
\begin{proof}\label{Proof of lemma_o_overparametrizacii}
The idea of the proof is the following: we select $F_1$, such that its sufficient statistic is equal to $\theta_2$. 

Let us denote the original model as
\begin{equation*}
\begin{split}
 & \,\,\,\,\,\,\,  X_1 = \varepsilon_1, X_2 = F_2^{-1}\big(\varepsilon_2, \theta_2(X_1)\big), \varepsilon_2\sim U(0,1), \varepsilon_1\indep \varepsilon_2,
\end{split}
\end{equation*}
where (using notation from Appendix \ref{appendix_exponential_family}) the conditional density function has a form:
$$
p_{X_2\mid {X_1}}(y\mid x) =h_{2,1}(y)h_{2,2}[\theta_2(x)]\exp[\theta_{2}(x)T_{2}(y)].
$$
We define $F_1$ from an exponential family in the following manner: consider the sufficient statistic $T_1(x) = \theta_2(x)$ for all $x$ in support of $X_1$ and choose $h_{1,1}(x) =  p_{X_1}(x)h_{2,2}[\theta_2(x)]$ and $h_{1,2}(y) = \frac{h_{2,1}(y)}{p_{X_2}(y)}$ for all $y$ in support of $X_2$. Then, a model where 
\begin{equation*}
\begin{split}
 & \,\,\,\,\,\,\,  X_2 = \varepsilon_2, X_1 = F_1^{-1}\big(\varepsilon_1, \theta_1(X_2)\big), \varepsilon_1\sim U(0,1), \varepsilon_1\indep \varepsilon_2,
\end{split}
\end{equation*}
for a specific choice $\theta_1(y) = T_2(y)$ has the following conditional density function: 
$$
p_{X_1\mid {X_2}}(x\mid y) =h_{1,1}(x)h_{1,2}[\theta_1(y)]\exp[\theta_{1}(y)T_{1}(x)] = \frac{p_{X_1}(x)}{p_{X_2}(y)}h_{2,1}(y)h_{2,2}[\theta_2(x)]\exp[\theta_{2}(x)T_2(y)].
$$
Therefore, the joint distribution is equal in both models, since
\begin{equation*}
\begin{split}
   p_{X_1}(x) h_{2,1}(y)h_{2,2}[\theta_2(x)]\exp[\theta_{2}(x)T_{2}(y)] &=p_{X_2}(y) \frac{p_{X_1}(x)}{p_{X_2}(y)}h_{2,1}(y)h_{2,2}[\theta_2(x)]\exp[\theta_{2}(x)T_2(y)]\\
  p_{X_1}(x)p_{X_2\mid {X_1}}(y\mid x) &= p_{X_2}(y)p_{{X_1}\mid {X_2}}(x\mid y).
\end{split}
 \end{equation*}
We found $CPCM(F_1)$ model with graph $X_2\to X_1$ that generates the same distribution. This completes the proof. 
\end{proof}

% ----------------------------------------------------------
% REFERENCES
% ----------------------------------------------------------
%\bibliography{bibliography}

\bibliographystyle{plainnat} 
\bibliography{24-1662.bbl}

\begin{thebibliography}{79}
\providecommand{\natexlab}[1]{#1}
\providecommand{\url}[1]{\texttt{#1}}
\expandafter\ifx\csname urlstyle\endcsname\relax
  \providecommand{\doi}[1]{doi: #1}\else
  \providecommand{\doi}{doi: \begingroup \urlstyle{rm}\Url}\fi

\bibitem[Assaad et~al.(2022)Assaad, Devijver, and Gaussier]{Assaad_time_series}
C.~Assaad, E.~Devijver, and E.~Gaussier.
\newblock Discovery of extended summary graphs in time series.
\newblock In \emph{Proceedings of the Thirty-Eighth Conference on Uncertainty in Artificial Intelligence}, volume 180 of \emph{Proceedings of Machine Learning Research}, pages 96--106. PMLR, 01--05 Aug 2022.

\bibitem[Bodik and Chavez-Demoulin(2025)]{Bodik_biometrika}
J.~Bodik and V.~Chavez-Demoulin.
\newblock Structural restrictions in local causal discovery: identifying direct causes of a target variable.
\newblock \emph{Biometrika}, page asaf042, 06 2025.
\newblock ISSN 1464-3510.
\newblock URL \url{https://doi.org/10.1093/biomet/asaf042}.

\bibitem[Bodik et~al.(2024)Bodik, Paluš, and Pawlas]{Bodik2024}
J.~Bodik, M.~Paluš, and Z.~Pawlas.
\newblock Causality in extremes of time series.
\newblock \emph{Extremes}, 27\penalty0 (1):\penalty0 67--121, 2024.
\newblock \doi{10.1007/s10687-023-00479-5}.
\newblock URL \url{https://doi.org/10.1007/s10687-023-00479-5}.

\bibitem[Bodik et~al.(2025)Bodik, Huang, and Yu]{bodik2025crossworldassumptionrefiningprediction}
J.~Bodik, Y.~Huang, and B.~Yu.
\newblock Cross-world assumption and refining prediction intervals for individual treatment effects, 2025.
\newblock URL \url{https://arxiv.org/abs/2507.12581}.
\newblock arXiv:2507.12581.

\bibitem[Bühlmann et~al.(2014)Bühlmann, Peters, and Ernest]{BuhlmannCAM}
P.~Bühlmann, J.~Peters, and J.~Ernest.
\newblock {CAM}: Causal additive models, high-dimensional order search and penalized regression.
\newblock \emph{The Annals of Statistics}, 42\penalty0 (6):\penalty0 2526--2556, 2014.
\newblock \doi{10.1214/14-aos1260}.

\bibitem[Casella and Berger(2024)]{CasellaBerger2024}
G.~Casella and R.~L. Berger.
\newblock \emph{Statistical Inference}.
\newblock Chapman and Hall/CRC, 2nd edition, 2024.
\newblock \doi{10.1201/9781003456285}.

\bibitem[Chen and Samworth(2015)]{GAMSamworthConsistency}
Y.~Chen and R.~J. Samworth.
\newblock Generalized additive and index models with shape constraints.
\newblock \emph{Journal of the Royal Statistical Society Series B: Statistical Methodology}, 78\penalty0 (4):\penalty0 729--754, 2015.
\newblock \doi{10.1111/rssb.12137}.

\bibitem[Chickering(2002)]{Greedy_search}
D.~M. Chickering.
\newblock Optimal structure identification with greedy search.
\newblock \emph{Journal of Machine Learning Research}, 3:\penalty0 507–554, 2002.
\newblock \doi{10.1162/153244303321897717}.

\bibitem[Chickering et~al.(2004)Chickering, Heckerman, and Meek]{NP-hard_score_based_causal_learning}
D.~M. Chickering, D.~Heckerman, and Ch. Meek.
\newblock Large-sample learning of bayesian networks is np-hard.
\newblock \emph{Journal of Machine Learning Research}, 5:\penalty0 1287–1330, 2004.

\bibitem[Claeskens et~al.(2009)Claeskens, Krivobokova, and Opsomer]{claeskens2009}
G.~Claeskens, T.~Krivobokova, and J.~D. Opsomer.
\newblock Asymptotic properties of penalized spline estimators for generalized additive models.
\newblock \emph{Statistica Sinica}, 19\penalty0 (2):\penalty0 621--640, 2009.

\bibitem[Drews and Kohler(2024)]{drews2022universalconsistencyoverparametrizeddeep}
S.~Drews and M.~Kohler.
\newblock On the universal consistency of an over-parametrized deep neural network estimate learned by gradient descent.
\newblock \emph{Annals of the Institute of Statistical Mathematics}, 76\penalty0 (3):\penalty0 361--391, 2024.

\bibitem[Fan and Yao(1998)]{10.1093/biomet/85.3.645}
J.~Fan and Q.~Yao.
\newblock Efficient estimation of conditional variance functions in stochastic regression.
\newblock \emph{Biometrika}, 85\penalty0 (3):\penalty0 645--660, 09 1998.
\newblock ISSN 0006-3444.
\newblock \doi{10.1093/biomet/85.3.645}.

\bibitem[Galanti et~al.(2020)Galanti, Nabati, and Wolf]{Galanti}
T.~Galanti, O.~Nabati, and L.~Wolf.
\newblock A critical view of the structural causal model.
\newblock \emph{Preprint arxiv:2002.10007}, 2020.

\bibitem[Gamella et~al.(2025)Gamella, Blohsfeld, Hein, et~al.]{Gamella2025CausalChambers}
J.~L. Gamella, M.~Blohsfeld, M.~Hein, et~al.
\newblock Causal chambers as a real-world physical testbed for {AI} methods.
\newblock \emph{Nature Machine Intelligence}, 2025.
\newblock \doi{10.1038/s42256-024-00964-x}.
\newblock Case studies include causal discovery on physical systems.

\bibitem[Gao et~al.(2020)Gao, Ding, and Aragam]{PolynomialTimeAlgorithmCausalGraphs}
M.~Gao, Y.~Ding, and B.~Aragam.
\newblock A polynomial-time algorithm for learning nonparametric causal graphs.
\newblock In \emph{Proceedings of the 34th International Conference on Neural Information Processing Systems}, NIPS'20, Red Hook, NY, USA, 2020. Curran Associates Inc.
\newblock ISBN 9781713829546.

\bibitem[Genest et~al.(2019)Genest, Nešlehová, Rémillard, and Murphy]{copula_based_independence_test}
C.~Genest, J.G. Nešlehová, B.~Rémillard, and O.A. Murphy.
\newblock Testing for independence in arbitrary distributions.
\newblock \emph{Biometrika}, 106\penalty0 (1):\penalty0 47--68, 2019.
\newblock \doi{10.1093/biomet/asy059}.

\bibitem[Glymour et~al.(2019)Glymour, Zhang, and Spirtes]{ZhangReview}
C.~Glymour, K.~Zhang, and P.~Spirtes.
\newblock Review of causal discovery methods based on graphical models.
\newblock \emph{Frontiers in Genetics}, 10, 2019.
\newblock \doi{10.3389/fgene.2019.00524}.

\bibitem[Gnecco et~al.(2020)Gnecco, Meinshausen, Peters, and Engelke]{Gnecco}
N.~Gnecco, N.~Meinshausen, J.~Peters, and S.~Engelke.
\newblock Causal discovery in heavy-tailed models.
\newblock \emph{The Annals of Statistics}, 49, 2020.
\newblock \doi{10.1214/20-AOS2021}.

\bibitem[Goudet et~al.(2017)Goudet, Kalainathan, Caillou, Lopez-Paz, Guyon, Sebag, Tritas, and Tubaro]{Goudet2017}
O.~Goudet, D.~Kalainathan, P.~Caillou, D.~Lopez-Paz, I.~Guyon, M.~Sebag, A.~Tritas, and P.~Tubaro.
\newblock Learning functional causal models with generative neural networks.
\newblock \emph{arXiv preprint arXiv:1709.05321}, 2017.

\bibitem[Green and Silverman(1994)]{GAM}
P.~Green and B.~Silverman.
\newblock \emph{Nonparametric Regression and Generalized Linear Models: A Roughness Penalty Approach}.
\newblock Chapman and Hall/CRC, 1994.
\newblock ISBN 9780412300400.
\newblock URL \url{https://www.routledge.com/Nonparametric-Regression-and-Generalized-Linear-Models-A-roughness-penalty/Green-Silverman/p/book/9780412300400}.

\bibitem[Greenland et~al.(1999)Greenland, Pearl, and Robins]{epidem_application}
S.~Greenland, J.~Pearl, and J.M. Robins.
\newblock Causal diagrams for epidemiologic research.
\newblock \emph{Epidemiology}, 10:\penalty0 37--48, 1999.
\newblock URL \url{https://pubmed.ncbi.nlm.nih.gov/9888278/}.

\bibitem[Györfi et~al.(2002)Györfi, Kohler, Krzyżak, and Walk]{Gyorfi2002}
L.~Györfi, M.~Kohler, A.~Krzyżak, and H.~Walk.
\newblock \emph{A Distribution-Free Theory of Nonparametric Regression}.
\newblock Springer Series in Statistics. Springer New York, NY, 2002.
\newblock \doi{10.1007/b97848}.

\bibitem[Heiss(2024)]{Heiss2024}
J.~Heiss.
\newblock \emph{Inductive Bias of Neural Networks and Selected Applications}.
\newblock PhD thesis, ETH Zurich, Zurich, 2024.
\newblock URL \url{https://www.research-collection.ethz.ch/handle/20.500.11850/699241}.

\bibitem[Hoyer et~al.(2009)Hoyer, Janzing, Mooij, Peters, and Sch\"{o}lkopf]{hoyer2009}
P.~Hoyer, D.~Janzing, J.M. Mooij, J.~Peters, and B.~Sch\"{o}lkopf.
\newblock Nonlinear causal discovery with additive noise models.
\newblock In \emph{Advances in Neural Information Processing Systems}, volume~21. Curran Associates, Inc., 2009.
\newblock URL \url{https://proceedings.neurips.cc/paper/2008/file/f7664060cc52bc6f3d620bcedc94a4b6-Paper.pdf}.

\bibitem[Imbens and Rubin(2015)]{Rubin}
G.~W. Imbens and D.~B. Rubin.
\newblock \emph{Causal Inference for Statistics, Social, and Biomedical Sciences: An Introduction}.
\newblock Cambridge University Press, Cambridge, 2015.
\newblock \doi{10.1017/CBO9781139025751}.

\bibitem[Immer et~al.(2023)Immer, Schultheiss, Vogt, Sch{\"o}lkopf, and B{\"u}hlmann]{immer2022identifiability}
A.~Immer, Ch. Schultheiss, J.~E. Vogt, B.~Sch{\"o}lkopf, and P.~B{\"u}hlmann.
\newblock On the identifiability and estimation of causal location-scale noise models.
\newblock In \emph{Proceedings of the 40th International Conference on Machine Learning}, volume 202, pages 14316--14332. PMLR, 23--29 Jul 2023.
\newblock URL \url{https://proceedings.mlr.press/v202/immer23a.html}.

\bibitem[Janzing and Sch{\"o}lkopf(2010)]{IGCI}
D.~Janzing and B.~Sch{\"o}lkopf.
\newblock Causal inference using the algorithmic markov condition.
\newblock \emph{IEEE Transactions on Information Theory}, 56\penalty0 (10):\penalty0 5168--5194, 2010.
\newblock \doi{10.1109/TIT.2010.2060095}.

\bibitem[Janzing et~al.(2009)Janzing, Sun, and Sch{\"o}lkopf]{JanzingSecondOrderExponentialModels}
D.~Janzing, X.~Sun, and B.~Sch{\"o}lkopf.
\newblock Distinguishing cause and effect via second order exponential models.
\newblock \emph{ArXiv e-prints (0910.5561)}, 2009.
\newblock \doi{10.48550/ARXIV.0910.5561}.

\bibitem[Kalisch et~al.(2012)Kalisch, M\"achler, Colombo, Maathuis, and B\"uhlmann]{pcalg_package}
M.~Kalisch, M.~M\"achler, D.~Colombo, M.~H. Maathuis, and P.~B\"uhlmann.
\newblock Causal inference using graphical models with the {R} package {pcalg}.
\newblock \emph{Journal of Statistical Software}, 47\penalty0 (11):\penalty0 1--26, 2012.
\newblock \doi{10.18637/jss.v047.i11}.

\bibitem[Keropyan et~al.(2023)Keropyan, Strieder, and Drton]{keropyan2023rank}
G.~Keropyan, D.~Strieder, and M.~Drton.
\newblock Rank-based causal discovery for post-nonlinear models.
\newblock In \emph{Proceedings of the International Conference on Artificial Intelligence and Statistics}, pages 7849--7870. PMLR, 2023.

\bibitem[Khemakhem et~al.(2021)Khemakhem, Monti, Leech, and A.Hyvarinen]{Khemakhem_autoregressive_flows}
I.~Khemakhem, R.~Monti, R.~Leech, and A.Hyvarinen.
\newblock Causal autoregressive flows.
\newblock In \emph{Proceedings of The 24th International Conference on Artificial Intelligence and Statistics}, volume 130, pages 3520--3528. PMLR, 2021.
\newblock URL \url{https://proceedings.mlr.press/v130/khemakhem21a.html}.

\bibitem[Klippert and Marx(2025)]{klippert2025skewnessrobustcausaldiscoverylocationscale}
D.~Klippert and A.~Marx.
\newblock Skewness-robust causal discovery in location-scale noise models, 2025.
\newblock URL \url{https://arxiv.org/abs/2511.14441}.

\bibitem[Kook et~al.(2024)Kook, Saengkyongam, Lundborg, Hothorn, and Peters]{Kook03042025}
L.~Kook, S.~Saengkyongam, A.~Lundborg, T.~Hothorn, and J.~Peters.
\newblock Model-based causal feature selection for general response types.
\newblock \emph{Journal of the American Statistical Association}, 120\penalty0 (550):\penalty0 1090--1101, 2024.
\newblock \doi{10.1080/01621459.2024.2395588}.

\bibitem[Krali(2025)]{krali2025causaldiscoveryheavytailedlinear}
M.~Krali.
\newblock Causal discovery in heavy-tailed linear structural equation models via scalings, 2025.
\newblock URL \url{https://arxiv.org/abs/2502.13762}.

\bibitem[Le et~al.(2005)Le, Smola, and Canu]{Le_Smola}
Q.~Le, A.~Smola, and S.~Canu.
\newblock Heteroscedastic gaussian process regression.
\newblock In \emph{ICML}, pages 489--496, 01 2005.
\newblock \doi{10.1145/1102351.1102413}.

\bibitem[Liu et~al.(2021)Liu, Sun, Wang, Li, Qin, Chen, and Liu]{liu2021learning}
C.~Liu, X.~Sun, J.~Wang, T.~Li, T.~Qin, W.~Chen, and T.~Liu.
\newblock Learning causal semantic representation for out-of-distribution prediction, 2021.
\newblock URL \url{https://openreview.net/forum?id=xyGFYKIPTDJ}.

\bibitem[Mammen and van~de Geer(1997)]{mammen1997}
E.~Mammen and S.~van~de Geer.
\newblock Penalized quasi-likelihood estimation in partial linear models.
\newblock \emph{Annals of Statistics}, 25\penalty0 (3):\penalty0 1014--1035, 1997.

\bibitem[Marx and Vreeken(2019)]{Slope}
A.~Marx and J.~Vreeken.
\newblock Telling cause from effect using mdl-based local and global regression.
\newblock \emph{Knowledge and Information Systems}, 60\penalty0 (3):\penalty0 1277--1305, 2019.
\newblock \doi{10.1007/s10115-018-1286-7}.

\bibitem[Montagna et~al.(2023)Montagna, Noceti, Rosasco, Zhang, and Locatello]{montagna23a}
F.~Montagna, N.~Noceti, L.~Rosasco, K.~Zhang, and F.~Locatello.
\newblock Causal discovery with score matching on additive models with arbitrary noise.
\newblock In \emph{Proceedings of the Second Conference on Causal Learning and Reasoning}, volume 213, pages 726--751. PMLR, 11--14 Apr 2023.

\bibitem[Mooij et~al.(2020)Mooij, Magliacane, and Claassen]{Multiple_contexts_Mooij}
J.~M. Mooij, S.~Magliacane, and T.~Claassen.
\newblock Joint causal inference from multiple contexts.
\newblock \emph{Journal of Machine Learning Research}, 21\penalty0 (99):\penalty0 1--108, 2020.
\newblock URL \url{http://jmlr.org/papers/v21/17-123.html}.

\bibitem[Mooij et~al.(2016)Mooij, Peters, Janzing, Zscheischler, and Sch\"{o}lkopf]{reviewANMMooij}
J.M. Mooij, J.~Peters, D.~Janzing, J.~Zscheischler, and B.~Sch\"{o}lkopf.
\newblock Distinguishing cause from effect using observational data: Methods and benchmarks.
\newblock \emph{Journal of Machine Learning Research}, 17\penalty0 (1):\penalty0 1103–1204, 2016.

\bibitem[Nadarajah(2005)]{Nadarajah01092005}
S.~Nadarajah.
\newblock A generalized normal distribution.
\newblock \emph{Journal of Applied Statistics}, 32\penalty0 (7):\penalty0 685--694, 2005.
\newblock \doi{10.1080/02664760500079464}.

\bibitem[Nandy et~al.(2018)Nandy, Hauser, and Maathuis]{Nandy2018}
A.~Nandy, A.~Hauser, and M.~H. Maathuis.
\newblock High-dimensional consistency in score-based and hybrid structure learning.
\newblock \emph{Annals of Statistics}, 46\penalty0 (6A):\penalty0 3151--3183, 2018.

\bibitem[Nowzohour and Bühlmann(2016)]{Score-based_causal_learning}
Ch. Nowzohour and P.~Bühlmann.
\newblock Score-based causal learning in additive noise models.
\newblock \emph{Statistics: A Journal of Theoretical and Applied Statistics}, 50\penalty0 (3):\penalty0 471--485, 2016.
\newblock \doi{10.1080/02331888.2015.1060237}.

\bibitem[Park and Park(2019)]{ParkGHD}
G.~Park and H.~Park.
\newblock Identifiability of generalized hypergeometric distribution (ghd) directed acyclic graphical models.
\newblock In \emph{Proceedings of the Twenty-Second International Conference on Artificial Intelligence and Statistics}, volume~89, pages 158--166. PMLR, 2019.
\newblock URL \url{https://proceedings.mlr.press/v89/park19a.html}.

\bibitem[Park and Raskutti(2015)]{ParkPoisson}
G.~Park and G.~Raskutti.
\newblock Learning large-scale poisson dag models based on overdispersion scoring.
\newblock In \emph{Advances in Neural Information Processing Systems}, volume~28. Curran Associates, Inc., 2015.
\newblock URL \url{https://proceedings.neurips.cc/paper/2015/file/fccb60fb512d13df5083790d64c4d5dd-Paper.pdf}.

\bibitem[Park and Raskutti(2017)]{ParkVariance}
G.~Park and G.~Raskutti.
\newblock Learning quadratic variance function (qvf) dag models via overdispersion scoring (ods).
\newblock \emph{Journal of Machine Learning Research}, 18\penalty0 (1):\penalty0 8300–8342, 2017.

\bibitem[Pearl(2009)]{Pearl_book}
J.~Pearl.
\newblock \emph{Causality: Models, Reasoning and Inference}.
\newblock Cambridge University Press, 2009.
\newblock ISBN 978-0521895606.

\bibitem[Pearl and Mackenzie(2019)]{TheBookOfWhy}
J.~Pearl and D.~Mackenzie.
\newblock \emph{The Book of Why}.
\newblock Penguin Books, 2019.
\newblock URL \url{http://bayes.cs.ucla.edu/WHY/}.

\bibitem[Peters and Bühlmann(2013)]{peters2014structuralinterventiondistancesid}
J.~Peters and P.~Bühlmann.
\newblock Structural intervention distance (sid) for evaluating causal graphs.
\newblock \emph{arXiv preprint arXiv:1306.1043}, 2013.

\bibitem[Peters et~al.(2011)Peters, Janzing, and Sch{\"o}lkopf]{Peters_discrete}
J.~Peters, D.~Janzing, and B.~Sch{\"o}lkopf.
\newblock Causal inference on discrete data using additive noise models.
\newblock \emph{IEEE Transactions on Pattern Analysis and Machine Intelligence}, 33\penalty0 (12):\penalty0 2436--2450, 2011.
\newblock \doi{10.1109/TPAMI.2011.71}.

\bibitem[Peters et~al.(2014)Peters, Mooij, and Sch\"olkopf]{Peters2014}
J.~Peters, J.M. Mooij, and B.~Sch\"olkopf.
\newblock Causal discovery with continuous additive noise models.
\newblock \emph{Journal of Machine Learning Research}, 15:\penalty0 2009--2053, 2014.

\bibitem[Peters et~al.(2016)Peters, Bühlmann, and Meinshausen]{Peters_invariance}
J.~Peters, P.~Bühlmann, and N.~Meinshausen.
\newblock Causal inference by using invariant prediction: identification and confidence intervals.
\newblock \emph{Journal of the Royal Statistical Society: Series B (Statistical Methodology)}, 78\penalty0 (5):\penalty0 947--1012, 2016.
\newblock URL \url{https://doi.org/10.1111/rssb.12167}.

\bibitem[Peters et~al.(2017)Peters, Janzing, and Schölkopf]{Elements_of_Causal_Inference}
J.~Peters, D.~Janzing, and B.~Schölkopf.
\newblock \emph{Elements of Causal Inference: Foundations and Learning Algorithms}.
\newblock The MIT Press, 2017.
\newblock ISBN 0262037319.
\newblock URL \url{https://library.oapen.org/bitstream/id/056a11be-ce3a-44b9-8987-a6c68fce8d9b/11283.pdf}.

\bibitem[Pfister et~al.(2018)Pfister, Bühlmann, Schölkopf, and Peters]{Kernel_based_tests}
N.~Pfister, P.~Bühlmann, B.~Schölkopf, and J.~Peters.
\newblock Kernel‐based tests for joint independence.
\newblock \emph{Journal of the Royal Statistical Society Series B}, 80\penalty0 (1):\penalty0 5--31, 2018.
\newblock \doi{10.1111/rssb.12235}.

\bibitem[Poinsot et~al.(2024)Poinsot, Leite, Chesneau, S\'{e}bag, and Schoenauer]{10.24963/ijcai.2024/907}
A.~Poinsot, A.~Leite, N.~Chesneau, M.~S\'{e}bag, and M.~Schoenauer.
\newblock Learning structural causal models through deep generative models: methods, guarantees, and challenges.
\newblock In \emph{Proceedings of the Thirty-Third International Joint Conference on Artificial Intelligence}, IJCAI '24, 2024.
\newblock \doi{10.24963/ijcai.2024/907}.

\bibitem[Rajendran et~al.(2021)Rajendran, Kivva, Gao, and Aragam]{Bregmans_information}
G.~Rajendran, B.~Kivva, M.~Gao, and B.~Aragam.
\newblock Structure learning in polynomial time: Greedy algorithms, bregman information, and exponential families.
\newblock \emph{NeurIPS, Conference Paper 7430}, 2021.
\newblock \doi{10.48550/arXiv.2110.04719}.

\bibitem[Ramsey et~al.(2016)Ramsey, Glymour, Scheines, and Spirtes]{Ramsey2016}
J.~Ramsey, M.~Glymour, R.~Scheines, and P.~Spirtes.
\newblock The fast greedy equivalence search algorithm for learning high-dimensional graphical causal models.
\newblock \emph{Computational Statistics \& Data Analysis}, 103:\penalty0 29--39, 2016.

\bibitem[Rigby and Stasinopoulos(2025)]{GAMLSS_webpage}
R.~A. Rigby and M.~D. Stasinopoulos.
\newblock Gamlss: Generalized additive models for location scale and shape, 2025.
\newblock URL \url{https://www.gamlss.com/}.
\newblock Accessed: 15 March 2025.

\bibitem[Sarpal(2025)]{sarpa2025freMTPL2}
K.~Sarpal.
\newblock fremtpl2 – french motor third-party liability insurance claims.
\newblock Kaggle dataset, 2025.
\newblock URL \url{https://www.kaggle.com/datasets/karansarpal/fremtpl2-french-motor-tpl-insurance-claims}.
\newblock Accessed: 2025-08-08; Includes freMTPL2freq.csv and freMTPL2sev.csv.

\bibitem[Shimizu et~al.(2006)Shimizu, Hoyer, Hyvärinen, and Kerminen]{Lingam}
S.~Shimizu, P.~Hoyer, A.~Hyvärinen, and A.~Kerminen.
\newblock A linear non-gaussian acyclic model for causal discovery.
\newblock \emph{Journal of Machine Learning Research}, 7:\penalty0 2003--2030, 2006.

\bibitem[Siegfried et~al.(2023)Siegfried, Kook, and Hothorn]{Siegfried02102023}
S.~Siegfried, L.~Kook, and T.~Hothorn.
\newblock Distribution-free location-scale regression.
\newblock \emph{The American Statistician}, 77\penalty0 (4):\penalty0 345--356, 2023.
\newblock \doi{10.1080/00031305.2023.2203177}.

\bibitem[Silander and Myllym{\"a}ki(2006)]{Silander2006}
T.~Silander and P.~Myllym{\"a}ki.
\newblock A simple approach for finding the globally optimal bayesian network structure.
\newblock In \emph{Proceedings of the 22nd Conference on Uncertainty in Artificial Intelligence (UAI)}, pages 445--452, 2006.

\bibitem[Spirtes et~al.(2001)Spirtes, Glymour, and Scheines]{PCalgorithm}
P.~Spirtes, C.~Glymour, and R.~Scheines.
\newblock \emph{Causation, Prediction, and Search, 2nd Edition}, volume~1.
\newblock The MIT Press, 1 edition, 2001.
\newblock URL \url{https://EconPapers.repec.org/RePEc:mtp:titles:0262194406}.

\bibitem[Spirtes et~al.(2013)Spirtes, Meek, and Richardson]{FCI}
P.~Spirtes, C.~Meek, and T.~Richardson.
\newblock Causal inference in the presence of latent variables and selection bias, 02 2013.

\bibitem[Stasinopoulos and Rigby(2007)]{GAMLSS}
D.M. Stasinopoulos and R.A. Rigby.
\newblock Generalized additive models for location scale and shape (gamlss) in r.
\newblock \emph{Journal of Statistical Software}, 23\penalty0 (7):\penalty0 1–46, 2007.
\newblock \doi{10.18637/jss.v023.i07}.

\bibitem[Strobl and Lasko(2022)]{strobl2022identifying}
E.V. Strobl and T.~A. Lasko.
\newblock Identifying patient-specific root causes with the heteroscedastic noise model.
\newblock \emph{arXiv preprint arXiv:2205.13085}, 2022.

\bibitem[Tagasovska et~al.(2020)Tagasovska, Chavez-Demoulin, and Vatter]{Natasa_Tagasovska}
N.~Tagasovska, V.~Chavez-Demoulin, and T.~Vatter.
\newblock Distinguishing cause from effect using quantiles: Bivariate quantile causal discovery.
\newblock In \emph{Proceedings of the 37th International Conference on Machine Learning}, volume 119, pages 9311--9323, 2020.
\newblock URL \url{https://proceedings.mlr.press/v119/tagasovska20a.html}.

\bibitem[Uemura et~al.(2022)Uemura, Takagi, Takayuki, Yoshida, and Shimizu]{uemura2022multivariate}
K.~Uemura, T.~Takagi, T.~Takayuki, H.~Yoshida, and S.~Shimizu.
\newblock A multivariate causal discovery based on post-nonlinear model.
\newblock In \emph{Proceedings of the Conference on Causal Learning and Reasoning}, pages 826--839. PMLR, 2022.

\bibitem[Uhler et~al.(2013)Uhler, Raskutti, Buhlmann, and Yu]{uhler2013geometry}
C.~Uhler, G.~Raskutti, P.~Buhlmann, and B.~Yu.
\newblock Geometry of the faithfulness assumption in causal inference.
\newblock \emph{The Annals of Statistics}, 41\penalty0 (2):\penalty0 436--463, 2013.
\newblock \doi{10.1214/12-AOS1048}.

\bibitem[Wang et~al.(2024)Wang, Shen, and Pan]{Causal_discovery_IV}
M.~Wang, X.~Shen, and W.~Pan.
\newblock Causal discovery with generalized linear models through peeling algorithms.
\newblock \emph{Journal of Machine Learning Research}, 25\penalty0 (310):\penalty0 1--49, 2024.
\newblock URL \url{http://jmlr.org/papers/v25/23-1228.html}.

\bibitem[Wood(2017)]{Wood}
S.~N. Wood.
\newblock \emph{Generalized Additive Models: An Introduction with R}.
\newblock Chapman and Hall/CRC, 2 edition, 2017.

\bibitem[Wood et~al.(2016)Wood, Pya, and Säfken]{Wood2}
S.~N. Wood, N.~Pya, and B.~Säfken.
\newblock Smoothing parameter and model selection for general smooth models.
\newblock \emph{Journal of the American Statistical Association}, 111\penalty0 (516):\penalty0 1548--1563, 2016.
\newblock \doi{10.1080/01621459.2016.1180986}.

\bibitem[Xu et~al.(2022)Xu, Mian, Marx, and Vreeken]{xu2022inferring}
S.~Xu, O.~A. Mian, A.~Marx, and J.~Vreeken.
\newblock Inferring cause and effect in the presence of heteroscedastic noise.
\newblock In \emph{Proceedings of the 39th International Conference on Machine Learning}, volume 162, pages 24615--24630, 2022.
\newblock URL \url{https://proceedings.mlr.press/v162/xu22f.html}.

\bibitem[Yu et~al.(2019)Yu, Zheng, Anandkumar, and Yue]{yu2019dag}
Y.~Yu, X.~Zheng, A.~Anandkumar, and Y.~Yue.
\newblock Dag-gnn: Dag structure learning with graph neural networks.
\newblock In \emph{International Conference on Machine Learning}, pages 7154--7163. PMLR, 2019.

\bibitem[Zhang and Spirtes(2010)]{zhang2010intervention}
J.~Zhang and P.~Spirtes.
\newblock Intervention, determinism, and the causal minimality condition.
\newblock \emph{Synthese}, 175\penalty0 (1):\penalty0 39--56, 2010.
\newblock \doi{10.1007/s11229-010-9751-1}.

\bibitem[Zhang and Hyvärinen(2009)]{Zhang2009}
K.~Zhang and A.~Hyvärinen.
\newblock On the identifiability of the post-nonlinear causal model.
\newblock In \emph{Proceedings of the Twenty-Fifth Conference on Uncertainty in Artificial Intelligence}, UAI '09, pages 647--655. AUAI Press, 2009.
\newblock \doi{10.5555/1795114.1795190}.

\bibitem[Zhang et~al.(2015)Zhang, Wang, Zhang, and Schölkopf]{zhang2015estimation}
K.~Zhang, Z.~Wang, J.~Zhang, and B.~Schölkopf.
\newblock On estimation of functional causal models: General results and application to the post-nonlinear causal model.
\newblock \emph{ACM Transactions on Intelligent Systems and Technology (TIST)}, 7\penalty0 (2):\penalty0 1--22, 2015.

\bibitem[Zheng et~al.(2018)Zheng, Aragam, Ravikumar, and Xing]{zheng2018dags}
X.~Zheng, B.~Aragam, P.~Ravikumar, and E.~Xing.
\newblock Dags with no tears: Continuous optimization for structure learning.
\newblock In \emph{Advances in Neural Information Processing Systems}, volume~31, 2018.

\end{thebibliography}

\end{document}